\DeclarePairedDelimiter{\ceil}{\lceil}{\rceil}
\DeclarePairedDelimiter{\floor}{\lfloor}{\rfloor}
\theoremstyle{plain}
\newtheorem{theorem}{Theorem}[section]
\begin{document}

\title{Jittering Samples using a kd-Tree Stratification}

\author{Alexandros D. Keros\\University of Edinburgh \and Divakaran Divakaran\\University of Edinburgh \and Kartic Subr \\University of Edinburgh}


\maketitle
\begin{abstract}
\small
Monte Carlo sampling techniques are used to estimate high-dimensional integrals that model the physics of light transport in virtual scenes for computer graphics applications. These methods rely on the law of large numbers to estimate expectations via simulation, typically resulting in slow convergence. Their errors usually manifest as undesirable grain in the pictures generated by image synthesis algorithms. It is well known that these errors diminish when the samples are chosen appropriately. A well known technique for reducing error operates by subdividing the integration domain, estimating integrals in each \emph{stratum} and aggregating these values into a stratified sampling estimate. Na\"{i}ve methods for stratification, based on a lattice (grid) are known to improve the convergence rate of Monte Carlo, but require samples that grow exponentially with the dimensionality of the domain.

We propose a simple stratification scheme for $d$ dimensional hypercubes using the kd-tree data structure. Our scheme enables the generation of an arbitrary number of equal volume partitions of the rectangular domain, and $n$ samples can be generated in $O(n)$ time. Since we do not always need to explicitly build a kd-tree, we provide a simple procedure that allows the sample set to be drawn fully in parallel without any precomputation or storage,  speeding up sampling to $O(\log n)$ time per sample when executed on $n$ cores. If the tree is implicitly precomputed ($O(n)$ storage) the parallelised run time reduces to $O(1)$ on $n$ cores. In addition to these benefits, we provide an upper bound on the worst case star-discrepancy for $n$ samples matching that of lattice-based sampling strategies, which occur as a special case of our proposed method. We use a number of quantitative and qualitative tests to compare our method against state of the art samplers for image synthesis.
\end{abstract}

\section{Introduction}

Photo-realistic visuals of virtual environments are generated by simulating the physics of light and its interaction within the virtual environments. The simulation of \textit{light transport} requires estimation of integrals over high-dimensional spaces, for which Monte Carlo integration is the method of choice. The defining characteristic of a Monte Carlo estimator is its choice of locations where the function to be integrated is evaluated. Given a fixed computational budget, the quality of the pictures rendered by this technique depends heavily on the choice of these \textit{sample} locations. Despite its many benefits, one of the problems of Monte Carlo integration is its relatively slow convergence of $O(1/\sqrt{n})$, given $n$ samples. 

 The primary expectation of a sampling algorithm is that it results in low error. Many theories such as equi-distribution~\cite{zaremba1968some} and spectral signatures~\cite{Subr16Fourier} underpin sampling choices. A simple way to improve equi-distribution is to partition the domain into homogeneous strata, and to aggregate the estimates within each of the strata. One popular variant of this type of \textit{stratified} sampling is \textit{jittered} sampling -- where the hypercube of random numbers is partitioned as a grid and a single sample is drawn from each cell. While stratification results in improved convergence, a more dramatic improvement is obtained when the samples optimise a measure of equi-distribution called \textit{discrepancy}~\cite{kuipers1975uniform,shirley91Discrepancy,owenmcbook}. 
 
 A second consideration is the time taken to generate samples. Random numbers and deterministic sequences~\cite{niederreiter1987point} are popular choices since they can be generated fast and in parallel. Deterministic samples provide the added advantage of repeatability, which is desirable during development and debugging. Finally, the impact of  the dimensionality of the integration domain is an important factor. Many sampling algorithms either suffer from disproportionately greater error or poor performance for high-dimensional domains.  Some algorithms  require sample sizes that are exponentially dependent on the dimensionality.~e.g.~jittered sampling requires that $n=k^d, k\in \mathbb{Z}$. To avoid such problems, modern renderers sacrifice stratification in the high-dimensional space by interpreting it as a chain of outer products of low-dimensional spaces (e.g. one or two),  which can be sampled independently, a technique known as ``padding". It has recently been shown that for samplers such as Halton~\cite{halton1964algorithm} or Sobol~\cite{sobol1967}, which are able to multiply-stratify  high-dimensional spaces, that they project undesirably to lower dimensional subspaces~\cite{jarosz19orthogonal}.  
 
 In this paper, we stratify the $d$-dimensional hypercube using a kd-tree. Our algorithm is simple, efficient and scales well (with $n$, $d$ and parallelisation). For any $n$, we design a kd-tree so that all leaves of the tree result to domain partitions that occupy the same volume. Then, we draw a single random value from each of the leaves. This can be viewed as a generalisation of jittered sampling. When $n=2^{kd}$ for some integer $k$, the cells of the kd-tree align perfectly with those of the regular grid. We derive the worst-case star-discrepancy of samples generated via  kd-tree stratification. Finally, we perform empirical quantitative and qualitative experiments comparing the performance of kd-tree stratification with  state of the art sampling methods.

\paragraph{Contributions}
In this paper, we propose a kd-tree stratification scheme with the following properties:
\begin{enumerate} 
	\item it generalises jittered sampling for arbirary sample counts; 
	\item the $i^\mathrm{th}$ of $n$ samples can be calculated independently;
	\item an upper bound on the star-discrepancy of our samples as  $2^{d-1} d n^{-\frac{1}{d}}$;
	\item error that is empirically at least as good as jittered sampling but similar to Halton and Sobol in many situations; and 
	\item it can easily be parallelised.
\end{enumerate}

\section{Related Work}
Although Monte Carlo integration is the \textit{de facto} statistical technique for estimating high dimensional integrals pertaining to light transport, it can be applied in a few different ways.~e.g.~path tracing, bidirectional path tracing, Markov Chains~\cite{Veach1998mclighttransport}, etc. The computer graphics literature is rich with sampling algorithms to reduce the variance of the estimates~\cite{christensen2016path} and a discussion of these techniques is beyond the scope of this paper. Here, we address a few closely related classes of works that are relevant to our proposed scheme for stratification.   

\paragraph{Assessing sample sets} Empirical comparisons of samplers on specific scenes is the most common method for assessment. Of the several image comparison metrics few are suited to high dynamic range images~\cite{mantiuk2007high}. Of those, many focus on assessing the quality of tone-mapping operators rather than their suitability for assessing noisy rendering. The \textit{de facto} choice of metric for assessing samplers in rendering is an adaptation of numerical mean squared error (MSE). Theoretical considerations such as equidistribution~\cite{zaremba1968some} and spectral properties~\cite{Durand11Afrequency,Subr16Fourier} provide a more general assessment of sample quality. A well known  measure for equidistribution of a point set in a domain is the maximum \textit{discrepancy}~\cite{kuipers1975uniform,shirley91Discrepancy,Dick10Digital} between the proportion of points falling into an arbitrary subset of the domain and the measure of that subset. Since this is not tractable across all subsets, a restricted version considers all sub-hyperrectangular boxes with one vertex at the origin. This so-called star-discrepancy can be used to bound error introduced by the point set when used in numerical integration of functions with certain properties~\cite{koksma1942een,aistleitner2014functions}. Although it is desirable to know the discrepancy of a sampling strategy, so that this bound may be known, it is generally non-trivial to derive. We derive an upper bound on the discrepancy of points resulting from our proposed sampling technique.  For the remainder of the paper, unless otherwise clarified, we will use discrepancy to refer to $L_{\infty}$ star discrepancy.

\paragraph{Stratified sampling} A powerful way to reduce the variance of sampling-based estimators is to partition the domain into strata with mutually disparate values, estimate integrals within each of the strata and then carefully aggregate them into a collective estimator~\cite{cochran1977sampling}. Unfortunately, \textit{stratified sampling} is challenging when there is insufficient information \textit{a priori} about the integrand. Typically, the domain is partitioned anyway, into strata of known measures, and proportional allocation is used to draw samples within them. In the simplest case, the $d$-dimensional hypercube is partitioned using a regular lattice (grid) and one random sample is drawn from each cell~\cite{haber1967modified}.  This is known as \textit{jittered sampling}~\cite{Cook:1984:DRT:964965.808590} in the graphics literature and has been shown to improve convergence. Although jittered sample is simple and parallelisable, it suffers from the curse of dimensionality~~\cite[Chapter 7.3, \S 3]{Pharr2010PBRT} ,~i.e.~it is only effective when the number of samples required is a perfect $d^{\mathrm{th}}$ power. This does not allow fine-grained control over the computational budget for large $d$. Various sampling strategies are based on such equal-measure stratification with proportional allocation, since it can potentially improve convergence and is never worse than not stratifying. Another example is n-rooks sampling~\cite{shirley91Discrepancy} or latin hypercube sampling~\cite{mckay1979comparison}, where stratification is performed  along multiple axes. Multi-jittered sampling~\cite{Chiu94Multi,tang1993orthogonal} combines jittered grids with latin hypercube sampling. Several tiling-based approaches~\cite{kopf2006recursive,ostromoukhov2004fast} have been proposed for generating sample distributions with desirable blue-noise characteristics. Although they produced blue-noise patterns that are useful in halftoning, stippling and image anti-aliasing, it is unclear -- based on recent theoretical connections between blue-noise and error and convergence rates~\cite{pilleboue2015variance} -- whether those methods are useful in building useful estimators. For the benefits of stratification to be realised in a practical setting, for example when the hypercube is mapped to arbitrary manifolds, the mapping needs to be constrained.~e.g.~it needs to satisfy area-preservation~\cite{arvo2001stratified}.

\paragraph{Low-discrepancy sequences} Infinite sequences in the unit hypercube whose star discrepancy is $O(log(n)^d/n)$ are known as \textit{low-discrepancy sequences}~\cite{niederreiter1987point}. One such sequence, the Van de Corput sequence~\cite{van1936verteilungsfunktionen} forms the core of state-of-the-art methods such as Halton~\cite{halton1964algorithm} and Sobol~\cite{sobol1967} samplers. These \textit{quasi-random} sequences~\cite{niederreiter1992random} can be used to produce deterministic samples that result in rapidly converging quasi Monte Carlo (QMC) estimators~\cite{Niederreiter92Quasi}. QMC has been shown to significantly speed up high-quality, offline rendering~\cite{keller1995quasi,Keller12Advanced}. QMC samplers, and their randomized variants~\cite{randomlyPermuterOwen2011} based on the more general concept of \emph{elementary intervals}, can be viewed as the ultimate form of equal-measure stratification, since they strive to stratify across all origin-anchored hyperrectangles in the domain simultaneously. In addition to low-discrepancy, infinite sequences have the additional desirable property that any prefix set of samples is well distributed. This is an active area of research and recent work includes methods for progressive multijitter~\cite{christensen2018progressive} and low-discrepancy samples with blue noise properties~\cite{Ahmed16Low}. There are two hurdles to using QMC samplers: the first, a technical issue, is that their effectiveness in high-dimensional problems is limited; the second, a legal issue, is that their use for rendering is patented~\cite{kellerpatent}. We propose a simple, parallelisable alternative that, in the worst case matches, and often surpasses the performance of alternatives in terms of MSE. 

\paragraph{kd-trees} Space partitioning data structures such as quadtrees, octrees and kd-trees~\cite{Friedman1977optimizedkdtree} are well known in computational geometry and computer graphics. These data structures are typically used to optimise location queries such as nearest-neighbour queries. In computer graphics, kd-trees have been used to speed up ray intersections~\cite{wald2006building}, optimise multiresolution frameworks~\cite{Goswami2013} and its variants have been used to perform high-dimensional filtering~\cite{adams2009gaussian}. They have been used for sampling in a variety of ways including importance sampling via domain warping~\cite{McCool1997probabilitykdtrees,clarberg2005wavelet}, progressive refinement for antialiasing~\cite{Painter1989antialisedraytracing}, efficient multiscale sampling from products of gaussian mixtures~\cite{ihler2004multiscalesampling} and optimisation of the sampling of mean free paths~\cite{yue2010unbiasedsamplingrendering}.

In this paper, we propose a new kd-tree stratification scheme that results in comparable error to state of the art method, with the added advantages of being simple to construct and easily parallelisable. We derive an upper bound for the star-discrepancy of sample sets produced using this stratification scheme.

\section{Jittered kd-tree Stratification}\label{sec:jkdt}

The central idea of our construction is to use a kd-tree to partition the $d$-dimensional hypercube into $n$ equal-volume strata. One sample is then drawn from each of these cells.

\newcommand{\dom}{\ensuremath{\mathcal{V}}}
\newcommand{\domi}[1]{\ensuremath{\mathcal{V}_{#1}}}
\newcommand{\hplane}{\ensuremath{\mathcal{H}}}
\newcommand{\lb}{\ensuremath{\mathbf{l}}}
\newcommand{\Lb}{\ensuremath{\mathbf{L}}}
\newcommand{\lbm}[1]{\ensuremath{\mathbf{l}^i_{#1}}}
\newcommand{\Lbm}[1]{\ensuremath{\mathbf{L}[#1]}}
\newcommand{\ub}{\ensuremath{\mathbf{u}}}
\newcommand{\Ub}{\ensuremath{\mathbf{U}}}
\newcommand{\ubm}[1]{\ensuremath{\mathbf{u}^i_{#1}}}
\newcommand{\Ubm}[1]{\ensuremath{\mathbf{U}[#1]}}

\begin{algorithm}[h]
\SetAlgoLined
\SetKwData{Left}{left}\SetKwData{This}{this}\SetKwData{Up}{up}
\SetKwFunction{Union}{Union}\SetKwFunction{FindCompress}{FindCompress}
\SetKwInOut{Input}{input}\SetKwInOut{Output}{output}

\Input{number of strata $n$ \\ dimension $d$}
\Output{Lower bounds array $\Lb$ with elements $\lbm{m}$ \\ Upper bounds array $\Ub$ with elements $\ubm{m}$}

$N_{\text{rem}} \leftarrow n$ \tcp*[r]{remaining partitions}
$\lb{} \leftarrow (0,\dots,0)$ \tcp*[r]{lower partition bound}
$\ub{} \leftarrow (1,\dots,1)$ \tcp*[r]{upper partition bound}

\eIf{$N_{\text{rem}} >1 $ }
    {
        $m \leftarrow 0$  \tcp*[r]{dimension to partition}
        $c \leftarrow \lb_m +\frac{\ceil{\frac{N_{\text{rem}}}{2}}}{N_{\text{rem}}} (\ub_m-\lb_m)$ \;
        
        $\lb_{\text{right}} \leftarrow \lb $ \tcp*[r]{right subtree lower bounds}
        $\lb_{\text{right}_m} \leftarrow c$ \;
        
        $\text{RightSubTree}\left((m+1)\%d,\ \lb_{\text{right}},\ \ub,\ N_{\text{rem}}- \ceil{\frac{N_{\text{rem}}}{2}}\right)$ \;

        $\ub_{\text{left}} \leftarrow \ub $ \tcp*[r]{left subtree upper bounds}
        $\ub_{\text{left}_m} \leftarrow c$ \;
        
        $\text{LeftSubTree}\left((m+1)\%d,\ \lb,\ \ub_{\text{left}},\ \ceil{\frac{N_{\text{rem}}}{2}}\right)$ \;
    }
    {
        $\Lb.\text{push}(\lb)$ \;
        $\Ub.\text{push}(\ub)$ \;
    }

\caption{CalculateBoundsRecursive\label{alg:rec_bounds}}
\end{algorithm}

\begin{algorithm}[h]
\SetAlgoLined
\SetKwData{Left}{left}\SetKwData{This}{this}\SetKwData{Up}{up}
\SetKwFunction{Union}{Union}\SetKwFunction{FindCompress}{FindCompress}
\SetKwInOut{Input}{input}\SetKwInOut{Output}{output}

\Input{dimension $m$ \\ lower bound $\lb$ \\ upper bound $\ub$ \\ remaining partitions $N_{\text{rem}}$}

\eIf{$N_{\text{rem}} >1 $ }
{
$c \leftarrow \lb_m +\frac{\floor{\frac{N_{\text{rem}}}{2}}}{N_{\text{rem}}} (\ub_m-\lb_m)$ \;
        
        $\lb_{\text{right}} \leftarrow \lb $ \tcp*[r]{right subtree lower bounds}
        $\lb_{\text{right}_m} \leftarrow c$ \;
        
        $\text{RightSubTree}\left((m+1)\%d,\ \lb_{\text{right}},\ \ub, \ceil{\frac{N_{\text{rem}}}{2}}\right)$ \tcp*[r]{right subtree of right subtree}

        $\ub_{\text{left}} \leftarrow \ub $ \tcp*[r]{left subtree upper bounds}
        $\ub_{\text{left}_m} \leftarrow c$ \;
        
        $\text{RightSubTree}\left((m+1)\%d,\ \lb,\ \ub_{\text{left}},\ \floor{\frac{N_{\text{rem}}}{2}}\right)$ \tcp*[r]{left subtree of right subtree}
    }
    {
        $\Lb.\text{push}(\lb)$ \;
        $\Ub.\text{push}(\ub)$ \;
    }

\caption{RightSubTree\label{alg:rightsubtree}}
\end{algorithm}

\begin{algorithm}[h]
\SetAlgoLined
\SetKwData{Left}{left}\SetKwData{This}{this}\SetKwData{Up}{up}
\SetKwFunction{Union}{Union}\SetKwFunction{FindCompress}{FindCompress}
\SetKwInOut{Input}{input}\SetKwInOut{Output}{output}

\Input{dimension $m$ \\ lower bound $\lb$ \\ upper bound $\ub$ \\ remaining partitions $N_{\text{rem}}$}

\eIf{$N_{\text{rem}} >1 $ }
{
$c \leftarrow \lb_m +\frac{\ceil{\frac{N_{\text{rem}}}{2}}}{N_{\text{rem}}} (\ub_m-\lb_m)$ \;
        
        $\lb_{\text{right}} \leftarrow \lb $ \tcp*[r]{right subtree lower bounds}
        $\lb_{\text{right}_m} \leftarrow c$ \;
        
        $\text{LeftSubTree}\left((m+1)\%d,\ \lb_{\text{right}},\ \ub, \floor{\frac{N_{\text{rem}}}{2}}\right)$ \tcp*[r]{right subtree of left subtree}

        $\ub_{\text{left}} \leftarrow \ub $ \tcp*[r]{left subtree upper bounds}
        $\ub_{\text{left}_m} \leftarrow c$ \;
        
        $\text{LeftSubTree}\left((m+1)\%d,\ \lb,\ \ub_{\text{left}},\ \ceil{\frac{N_{\text{rem}}}{2}}\right)$ \tcp*[r]{left subtree of left subtree}
    }
    {
        $\Lb.\text{push}(\lb)$ \;
        $\Ub.\text{push}(\ub)$ \;
    }

\caption{LeftSubTree\label{alg:leftsubtree}}
\end{algorithm}

\subsection{Sample generation}

We illustrate our method using a simple (linear time) recursive procedure to generate $n$ deterministic strata in $d$ dimensions. Then, we explain how this construction can be used to obtain stochastic samples and derive a formula for determining the axis-aligned boundaries of the $i^{\mathrm{th}}$ cell (leaf of the kd-tree) out of $n$ samples. Samples are obtained by drawing a random sample within each of these cells.

\paragraph{Constructing the kd-tree} 
To construct $n$ strata with equal volumes in a $d$-dimensional hypercube \dom, we first partition \dom\ into two strata \domi 0\ and \domi 1\ by a splitting plane whose normal is parallel to an arbitrary axis  $m, 0\leq m \leq d-1$. If $n$ is even, the plane is located mid-way along the $l^\mathrm{th}$ axis in \dom. If $n$ is odd, the plane is positioned so that the volumes of  \domi 0\ and \domi 1\  are proportional to $\ceil{n/2}$ and $n-\ceil{n/2}$ respectively.~e.g.~if $n=5$, $d=3$ and $m=0$, the first split is performed by placing a plane parallel to the $YZ$ plane at $X=3/5$. The splitting procedure is recursively applied to \domi 0\ and \domi 1\ using $(m+1) \mod d$ as the splitting axis and numbers $n_0=\ceil{n/2}$ and $n_1=n-\ceil{n/2}$ respectively. A binary digit is prefixed to the subscript at each split -- a zero for the "lower" stratum (left branch) and a one to indicate the "upper" stratum (right branch). At the first recursion level (second split), the resulting partitions are  \domi {00},  \domi {01}\,  \domi {10}\ and  \domi {11} respectively. The recursion bottoms out when the number of stratifications required within a sub-domain is one. Algorithm~\ref{alg:rec_bounds}, along with the accompanying functions of Algorithms~\ref{alg:rightsubtree} and~\ref{alg:leftsubtree}, implement the afforementioned recursive procedure in $O(n)$ time, for the complete partitioning. Figure~\ref{fig:kdtreevis} visualises the cells of the tree and samples drawn within them for $n=16$, $n=59$ and $n=152$. 

\paragraph{Formula for jittered sampling}
The above procedure induces a tree whose $n$ leaves are axis-aligned hypercubes with equal volume. Although we could use the recursive procedure to generate a random sample in each stratum (every time the recursion bottoms out), this would induce unwanted computational overhead when a single sample is required. Instead, we derive a direct procedure (see Alg.~\ref{alg:bounds}) to calculate the lower and upper bounds for each cell with a run time complexity of $O(\ceil{\log_2 n})$ per sample. If all $n$ samples are to be generated at once, the recursive procedure is more efficient since it is $O(n)$ instead of $O(n\ceil{\log_2 n})$. However, the former is not easily parallelisable and requires pregenerated samples to be stored. The latter is fully parallelisable and pregenerated samples are notional and may be independently drawn when required.

\begin{algorithm}[h]
\SetAlgoLined
\SetKwData{Left}{left}\SetKwData{This}{this}\SetKwData{Up}{up}
\SetKwFunction{Union}{Union}\SetKwFunction{FindCompress}{FindCompress}
\SetKwInOut{Input}{input}\SetKwInOut{Output}{output}

\Input{number of strata $n$ \\ dimension $d$ \\ sample number $i \in \{0,1,\cdots,n-1\}$}
\Output{$d$-dimensional bounds \lbm {}\ and \ubm {}\ }

$m \leftarrow 0$  \tcp*[r]{dimension to partition}
$N_{\text{rem}} \leftarrow n$ \tcp*[r]{remaining points in partition}
$s \leftarrow \ceil{\log_2 n}$ \tcp*[r]{num. bits}
$B=[b^i_{s-1} b^i_{s-2}\ \dots\ b^i_{0}]$ \tcp*[r]{big-endian binary representation of point index $i$}
$\lbm{} \leftarrow (0,\dots,0)$ \tcp*[r]{lower partition bound}
$\ubm{} \leftarrow (1,\dots,1)$ \tcp*[r]{upper partition bound}
 
 \While{$B$ not empty and $N_{\text{rem}}>1$}
 {
    $b \leftarrow \text{pop last element of $B$}$ \;
    $r\leftarrow \ceil{\frac{N_{\text{rem}}}{2}}$ \;
    \eIf( \tcp*[h]{update upper bound}){$b$ is zero}
    {
        $N_{\text{curr}} \leftarrow \ceil{\frac{N_{\text{rem}}}{2}}$ \;
        $\ubm{m} \leftarrow (\ubm{m}-\lbm{m})\frac{r}{{N_{\text{rem}}}}+\lbm{m}$ \;
    }
    (\tcp*[h]{update lower bound})
    {
        $N_{\text{curr}} \leftarrow \floor{\frac{N_{\text{rem}}}{2}}$ \;

        $\lbm{m} \leftarrow (\ubm{m}-\lbm{m})\frac{r}{{N_{\text{rem}}}}+\lbm{m}$ \;

    }
    $N_{\text{rem}}\leftarrow N_{\text{curr}}$ \;
    $m \leftarrow (m+1) \% d$ \;
}

\caption{CalculateBounds\label{alg:bounds}}
\end{algorithm}

\begin{figure*}[h]
    \centering
    \includegraphics[width=\linewidth]{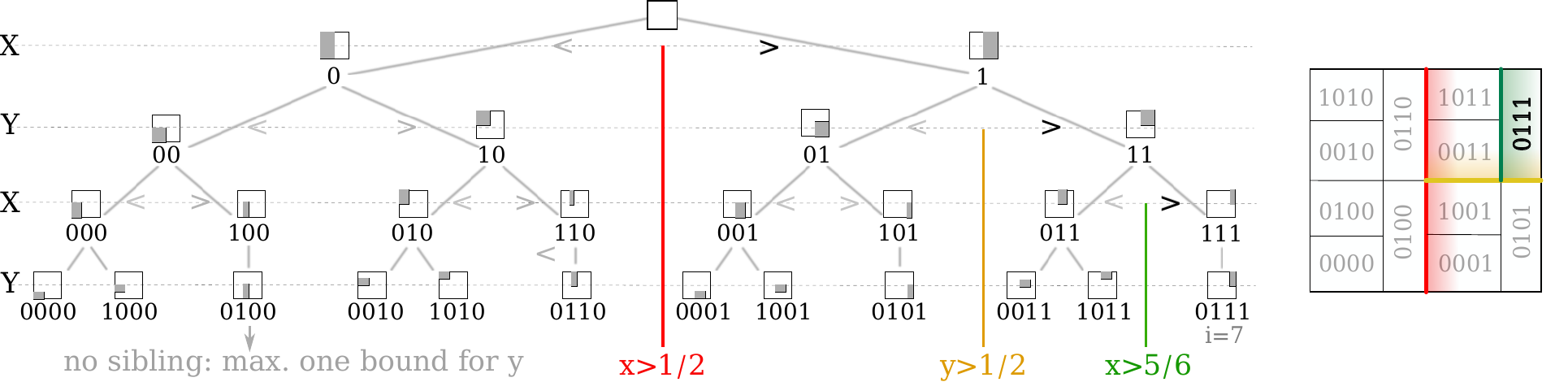}
    \caption{We stratify the d-dimensional hypercube into axis-aligned hyperrectangles of equal volume using a kd-tree. The figure illustrates an example with $12$ samples in $2$D. Each leaf node in the tree represents a cell with area $1/12$. In practice, we do not need to build the tree explicitly. When a sample is needed from a particular cell, we calculate the bounds the cell using Algorithm~\ref{alg:bounds} and then draw a random sample within it. The coloured portions of the figure illustrate the example described in the text to obtain the bounds of the $8^{\mathrm{th}}$ sample.~i.e.~$i=7$.}
    \label{fig:kdTree}
\end{figure*}

\paragraph{Example} To find the bounds for the $8^{th}$ sample ($i=7$) out of $n=12$ samples in $d=2$D, we first express $i$ as the bit-string $0111$. Starting from the right end of this string, we process one digit at a time while progressively halving the number of points $n$. At each step, we update the bounds appropriately. Alternate digits correspond to bounds for X and Y respectively and we adopt the convention that the first digit (rightmost) corresponds to a split in X. Since the first digit is a $1$, it corresponds to a lower bound on $X$.~i.e.~$x>\lbm 0$. Since $N_0=12$ is divisible by $2$, $N_1/N_0=1/2$ leading to $x>1/2$. The second least-significant digit is $1$, which again leads to a lower bound but this time on $Y$.  Since $N_1=6$ is even, the bound is $y>1/2$. The third digit is $1$ as well, and imposes a lower bound on X. However, since $N_2=3$ is odd, $N_3=2$. The new bound for $X$ is $x>1/2 + (1-1/2)*2/3$. i.e. $x>5/6$. Finally, the most significant bit is $0$, so it corresponds to an upper bound.  However, because the current node $0111$ does not have a sibling ($1111$ corresponds to $15$ which is greater than $12$ which is the number of samples), the upper bound is not updated. Combining all this information, we obtain $x\in[5/6,1]$ and $y\in[1/2,1]$. See Fig.~\ref{fig:kdTree} for an illustration of this procedure. 

\begin{figure}[htbp!]
    \centering
    \begin{tabular}{@{}c@{}c@{}c@{}}
        \includegraphics[width=0.3\linewidth]{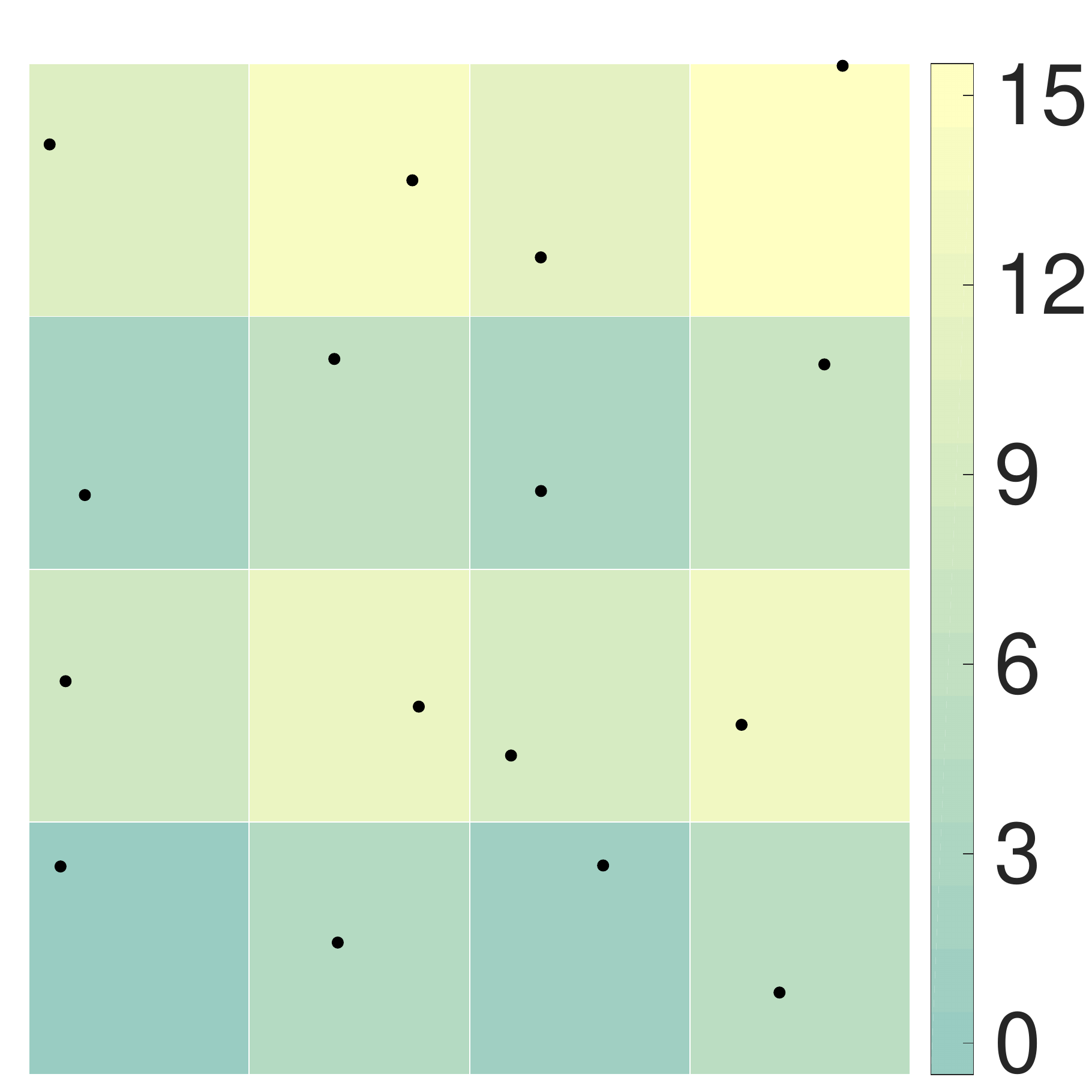}&
        \includegraphics[width=0.3\linewidth]{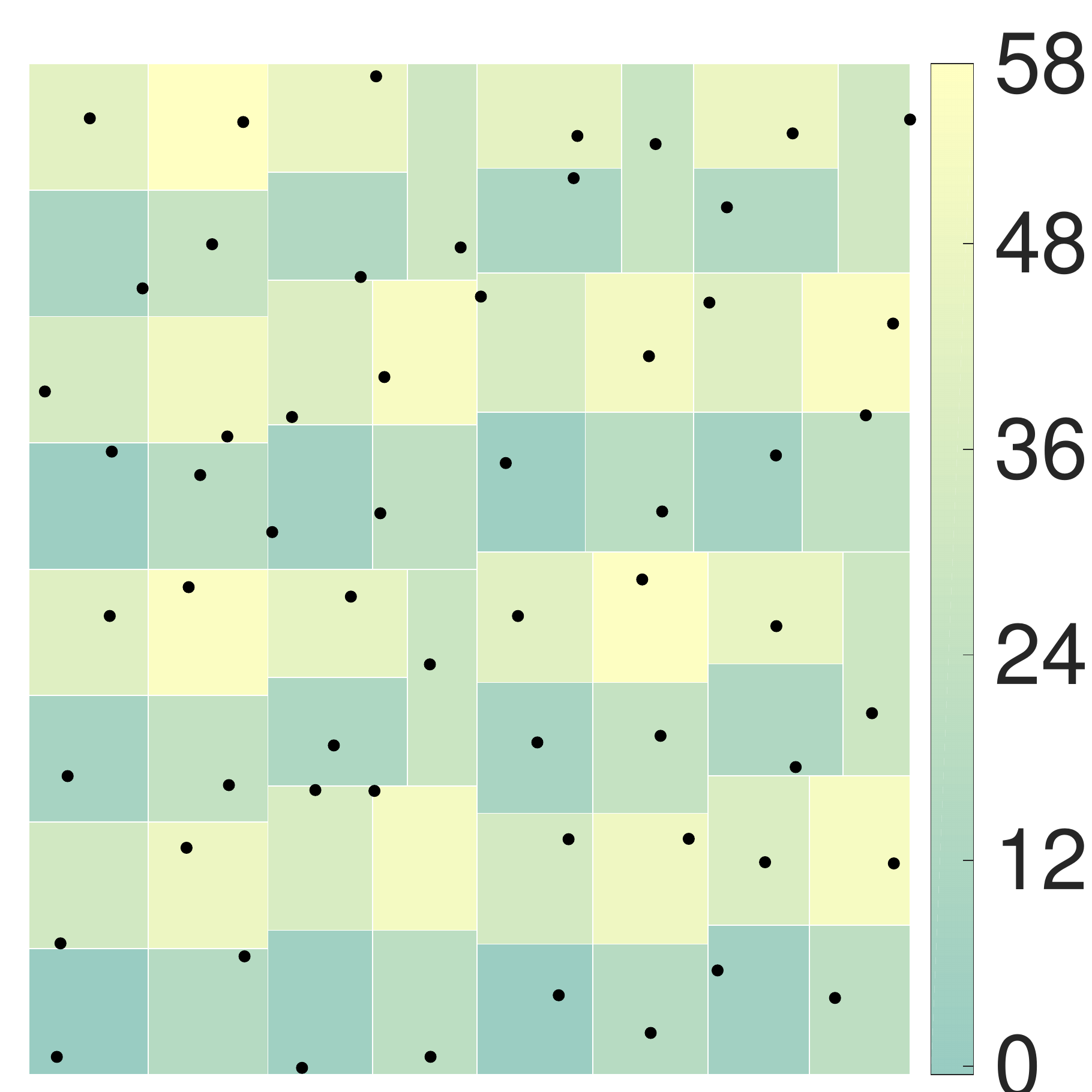}&
        \includegraphics[width=0.32\linewidth]{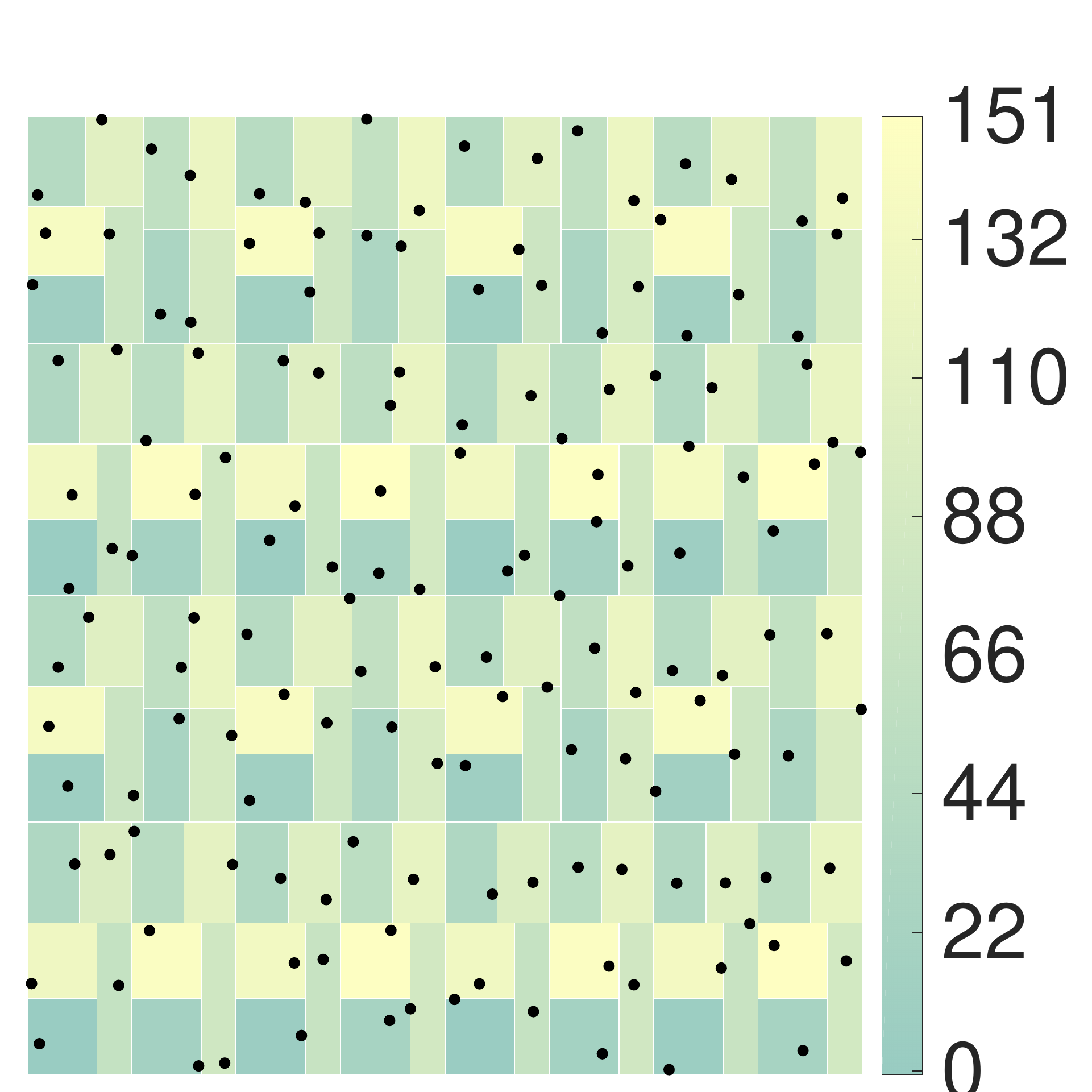} \\
        16 samples & 59 samples & 152 samples
    \end{tabular}
    \caption{\label{fig:kdtreevis} A visualisation of our kd-tree stratification and samples. The cells are coloured by the sample number.}
\end{figure}

\subsection{Discrepancy of kd-tree samples}\label{subsec:discr_proof}

Samples on a regular grid have a discrepancy of $O\left({1}/{\sqrt[d]{n}}\right)$. Jittered sampling~\cite{pausinger2016discrepancy} and Latin hypercube sampling~\cite{doerr2018probabilistic} exploit the correlation structure imposed by the grid, and, by randomly drawing samples within each stratum, improve the expected discrepancy to $O\left(\frac{\log n^{\frac{1}{2}}}{n^{\frac{1}{2}+\frac{1}{2d}}}\right)$ (under sufficiently dense sampling assumption) and $O\left(\sqrt{\frac{d}{n}}\right)$, respectively. 

Kd-tree stratification, as a generalization of jittered sampling to arbitrary sample counts, results to the exact same domain partition as jittered sampling for $n=2^{kd},\ k\in \mathbb{Z}$. Thus, for these special cases, our method satisfies the same expected discrepancy bounds as jittered sampling~\cite{pausinger2016discrepancy}, namely $O\left(\frac{\log n^{\frac{1}{2}}}{n^{\frac{1}{2}+\frac{1}{2d}}}\right)$. Theorem~\ref{thm:discr} derives a worst-case upper bound for the star-discrepancy of the general case of our jittered kd-tree stratification method.

\begin{theorem}\label{thm:discr}
   Given a set $P$ of $n$ samples,  $D^*(P) \leq 2^{d-1} d n^{-\frac{1}{d}}$.

\end{theorem}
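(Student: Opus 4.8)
The plan is to bound $D^*(P)$ by the (normalised) number of kd-tree cells that a query box can straddle, and then to estimate that number from the branching structure of the tree. Fix an origin-anchored box $B=[0,x_1)\times\dots\times[0,x_d)$; let $c_1,\dots,c_n$ be the leaf cells of the tree, each of volume $1/n$, and let $p_i\in c_i$ be the sample drawn in $c_i$. Since the $c_i$ partition $[0,1]^d$,
\[
\frac{|P\cap B|}{n}-\operatorname{vol}(B)=\sum_{i=1}^{n}\Bigl(\tfrac{1}{n}\,\mathbf{1}[p_i\in B]-\operatorname{vol}(c_i\cap B)\Bigr).
\]
If $c_i\subseteq B$ the $i$-th summand equals $\tfrac1n-\tfrac1n=0$, and if $c_i\cap B=\emptyset$ it equals $0$; so only cells that \emph{straddle} $\partial B$ (lie partly inside and partly outside $B$) contribute, and each such cell contributes a term of absolute value at most $1/n$. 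Hence $D^*(P)\le \frac1n\,S$, where $S$ is the maximum, over all origin-anchored boxes, of the number of cells straddling the boundary.

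Next I would localise the straddling cells. Writing $c_i=\prod_{j}[a_j,b_j]$, a cell straddling $\partial B$ must have $a_j<x_j$ for every $j$ (otherwise $c_i\cap B=\emptyset$) and $b_j>x_j$ for some $j$ (otherwise $c_i\subseteq B$); for that $j$ it is crossed by the axis-aligned hyperplane $\{y:y_j=x_j\}$. Therefore $S\le\sum_{j=1}^d N_j$, where $N_j$ is the largest number of leaf cells met by any hyperplane of the form $y_j=t$, and it suffices to show $N_j\le 2^{d-1}n^{1-\frac1d}$ for each $j$.

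To bound $N_j$ I would trace such a hyperplane down the tree: at a node whose splitting axis is not $j$ the hyperplane meets both children, while at a node whose splitting axis is $j$ it meets exactly one child. Group the levels of the tree into consecutive blocks of $d$ (``rounds''); since the splitting axes cycle through $0,1,\dots,d-1$, each round contains exactly one $j$-split, so the set of met tree nodes grows by a factor of at most $2^{d-1}$ per round (a factor $2$ at each of the $d-1$ non-$j$ splits, a factor $1$ at the single $j$ split, with met nodes that are already leaves simply persisting). Every leaf has depth at most $\ceil{\log_2 n}$, so after $K=\ceil{(\log_2 n)/d}$ rounds the hyperplane meets only leaves, whence $N_j\le 2^{(d-1)K}\le 2^{(d-1)\left((\log_2 n)/d+1\right)}=2^{d-1}n^{1-\frac1d}$. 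Assembling the three steps gives $D^*(P)\le\frac1n\cdot d\cdot 2^{d-1}n^{1-\frac1d}=2^{d-1}d\,n^{-\frac1d}$.

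The main obstacle is making the round-by-round growth estimate rigorous for the \emph{unbalanced} trees that arise when $n$ is not a power of $2^d$, where splits use $\ceil{\cdot}$ and $\floor{\cdot}$ and leaves occur at two adjacent depths. Three points need care: (i) the depth bound $\ceil{\log_2 n}$, which holds because a node with $m>1$ strata has children with at most $\ceil{m/2}$ strata and $\ceil{\log_2\ceil{m/2}}=\ceil{\log_2 m}-1$; (ii) the claim that a subtree whose stratum count drops to $1$ in the middle of a round contributes only a single leaf and hence does not violate the per-round factor $2^{d-1}$; and (iii) the rounding in $K=\ceil{(\log_2 n)/d}$, which is exactly what produces the extra $2^{d-1}$: when $n=2^{kd}$ the construction reproduces the regular grid, $K=k$ holds with equality, and the estimate sharpens to $N_j=n^{1-1/d}$, recovering the classical grid bound $d\,n^{-1/d}$ with no $2^{d-1}$ factor. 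I would also note that the half-open versus closed convention at cell boundaries affects only a measure-zero family of query boxes, hence not the supremum defining $D^*(P)$.
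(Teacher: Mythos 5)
Your proposal is correct and follows essentially the same route as the paper: it decomposes the local discrepancy over the kd-tree cells (complete cells contribute zero, straddling cells at most $1/n$ each) and bounds the number of straddling cells by $d$ times the maximal number of leaves that a single axis-perpendicular hyperplane can cross, namely $2^{d-1}n^{\frac{d-1}{d}}$. Your round-by-round tree-descent argument is just a more explicit justification of that per-hyperplane count, which the paper obtains by comparison with the regular-grid case $n=(2^k)^d$ via the same quantity $\bigl(2^{d\ceil{\log_{2^d}(n)}}\bigr)^{\frac{d-1}{d}}$.
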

\begin{proof}
If $P$ is an $n$ point set and $X$ is a set, then let $D(P,X)$ denote the discrepancy of $P$ in $X$.  Observation 1.3 in \cite{matousek2009geometric} states that if $A$ and $B$ are disjoint sets, then $|D(P,A\cup B)| = |D(P,A)+ D(P,B)| \leq |D(P,A)|+|D(P,B)|$. We use this observation inductively to compute discrepancy. Any axis-parallel rectangle with a vertex anchored at the origin, as used for the star-discrepancy computation, contains partial and complete cells from our kd-tree. By construction, complete cells in our kd-tree have zero discrepancy.  The discrepancy of incomplete cells is less than or equal to $1/{n}$.  This value, multiplied by a bound on the number of partial cells in such a rectangle can provide an upper bound for discrepancy.  
The number of cells a splitting plane (perpendicular to an axis) intersects increases as $N$ increases.  The maximum number of intersections occurs when the cells are identical to cells in a regular grid.~i.e.~ when $n = (2^k)^d, k\in\mathbb{Z}$.  Thus, each splitting plane intersects at most 
\begin{eqnarray}
\left(2^{d\ceil{\log_{2^d}(n)}}\right)^{\frac{d-1}{d}}   \leq  
\left(2^{d\log_{2^d}(n)}  2^d\right)^{\frac{d-1}{d}}  = 
2^{d-1} n^{\frac{d-1}{d}}
\end{eqnarray}
cells. Thus, an axis-parallel rectangle intersects at most $ 2^{d-1} d n^{\frac{d-1}{d}}$ cells, and the star-discrepancy of samples obtained using kd-tree stratification has an upper bound of $2^{d-1} d n^{-\frac{1}{d}}$.

\end{proof}

Empirical computation of star-discrepancy in Figure~\ref{fig:emp_discr} suggests the looseness of the bound derived in Theorem~\ref{thm:discr}, and the fact that the expected star-discrepancy of our jittered kd-tree stratification method should satisfy the bounds of jittered sampling~\cite{pausinger2016discrepancy} for general sample counts $n\in \mathbb{Z}$ as well.

\section{Empirical evaluation}
 We performed quantitative and qualitative experiments to assess the proposed stratification scheme. First, we plotted (Fig.~\ref{fig:emp_discr}) the expected $L_2$-star discrepancy computed by Warnock's formula~\cite{warnock1972computational} versus number of samples (up to $4500,0$) for various samplers in 2D and 4D. Our method exhibits discrepancy similar to jittered sampling, which, as expected, is inferior to the discrepancy of QMC sample sequences. Then, we tested the errors resulting from jittered kd-tree sampling when integrating analytical functions (sec.~\ref{subsec:analytic_ints}) of variable complexity in low-dimensions. Finally, we tested a padded version of jittered kd-tree stratification using rendering scenarios consisting of high-dimensional light paths (sec.~\ref{subsec:renderings}). For all  empirical results, we compare against popular baseline samplers such as random and jittered sampling, and state-of-the art Quasi-Monte Carlo methods Halton and Sobol.


\begin{figure*}[h]
\centering
\begin{tabular}{ccc}
2 Dimensions & 4 Dimensions & 7 Dimensions \\
\includegraphics[clip,trim=1.35cm 8.05cm 2.45cm 8.75cm, scale=0.24]{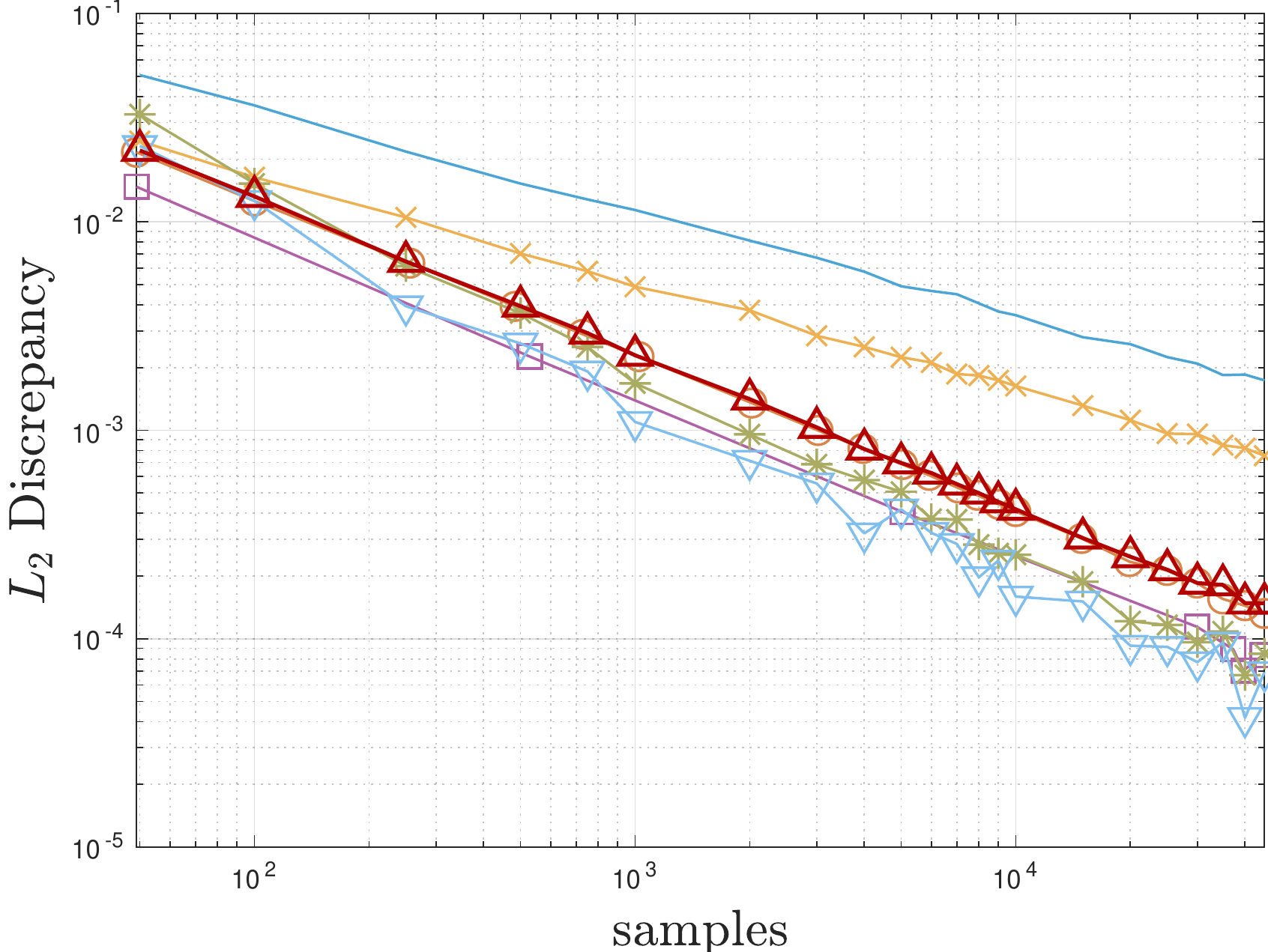} &
\includegraphics[clip,trim=1.35cm 8.05cm 2.45cm 8.75cm, scale=0.24]{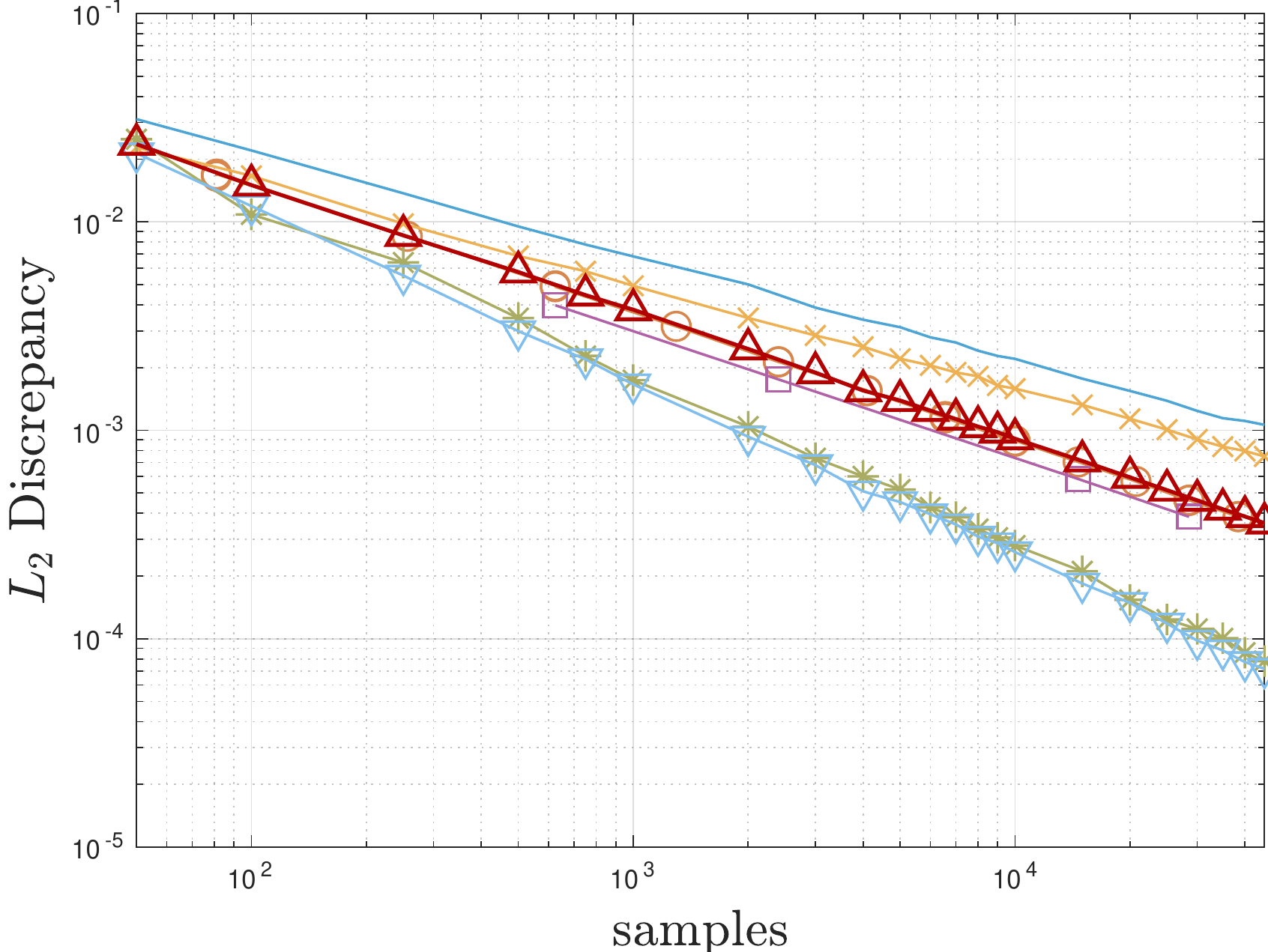} &
\includegraphics[clip,trim=1.35cm 8.05cm 2.45cm 8.75cm, scale=0.24]{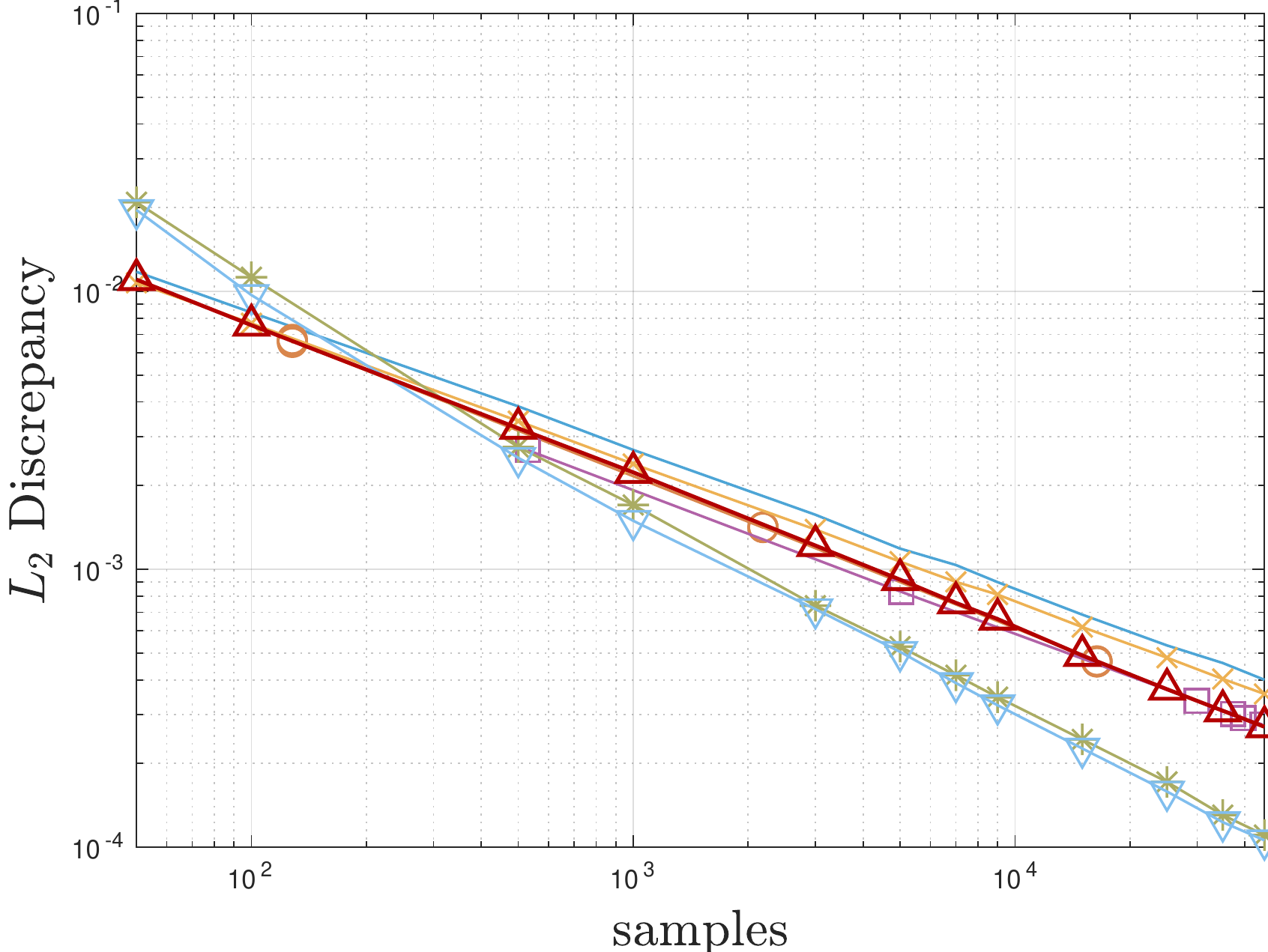}
\end{tabular}
\centering
\includegraphics[scale=0.3]{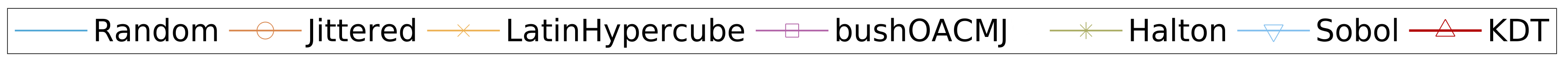}
\caption{The empirical expected $L_2$-star discrepancy (averaged over 100 sample realizations) for our method matches that of jittered sampling. It outperforms all examined samplers except the low-discrepancy QMC methods.}
\label{fig:emp_discr}
\end{figure*}

\subsection{Evaluation on analytic integrands}\label{subsec:analytic_ints}

We compare the performance of our jittered kd-tree stratification method (KDT) against popular baseline methods (Jittered and Random) and  state-of-the art samplers (Halton and Sobol). We also compare our sampler against a new stratification method based on  Bush's Orthogonal Arrays~\cite{jarosz19orthogonal} (bushOACMJ), using fully factorial strengths where possible.

We plotted  (Figs.~\ref{fig:analytical_cont},~\ref{fig:analytical_discont}) mean-squared error vs number of samples, averaging over 100 sample realizations per method and per sample count, for various samplers used in estimating integrals of  known (analytical) functions  in 2 and 4 dimensions (rows respectively). We estimated integrals with up to $10^6$ samples of smooth and discontinuous integrands of increasing variability. As the smooth function, we used a normalised Gaussian mixture model  with $k$ randomly centered, randomly weighted modes ($GMM_k$). The standard deviation of each component is set to one third of the minimum distance between centres. We performed four experiments on this integrand: $k=3$ and $k=20$ each in 2 and 4 dimensions. As the discontinuous test integrand, we used a piecewise constant function by triangulating the domain using $k$ randomly selected points ($PWConst_k$). The faces created were weighted randomly and normalised so that the function integrates to unity. As with the Gaussian mixture, we performed 4 sets of experiments with $k=3$ and $k=20$ each in 2D and 4D.

In two dimensions the performance of our sampler consistently matches the performance of state-of-the-art QMC methods Halton and Sobol. For discontinuous integrands, $PWConst_3$  and $PWConst_{20}$, our sampler exhibits lower ``oscillations" than  QMC methods. Jittered sampling performs similar to our kd-tree stratification method in all cases except $PWConst_3$. Our sampler exhibits lower error than jittered for this case. 
In four dimensions QMC methods clearly outperform all other samplers. Our method again exhibits convergence similar to jittered sampling. Nevertheless, when the complexity of the integrand increases, in terms of non-linearities or modes, the distinction between different sampling strategies becomes increasingly difficult.

\begin{figure*}[t]
\centering
\begin{tabular}{ccc}
   &  $GMM_3$ 
   &  $GMM_{20}$\\
  \rotatebox{90}{\quad\quad\quad 2 Dimensions} & \hspace*{-0.2cm}\includegraphics[clip,trim=1.28cm 8.05cm 2.45cm 8.75cm, scale=0.35]{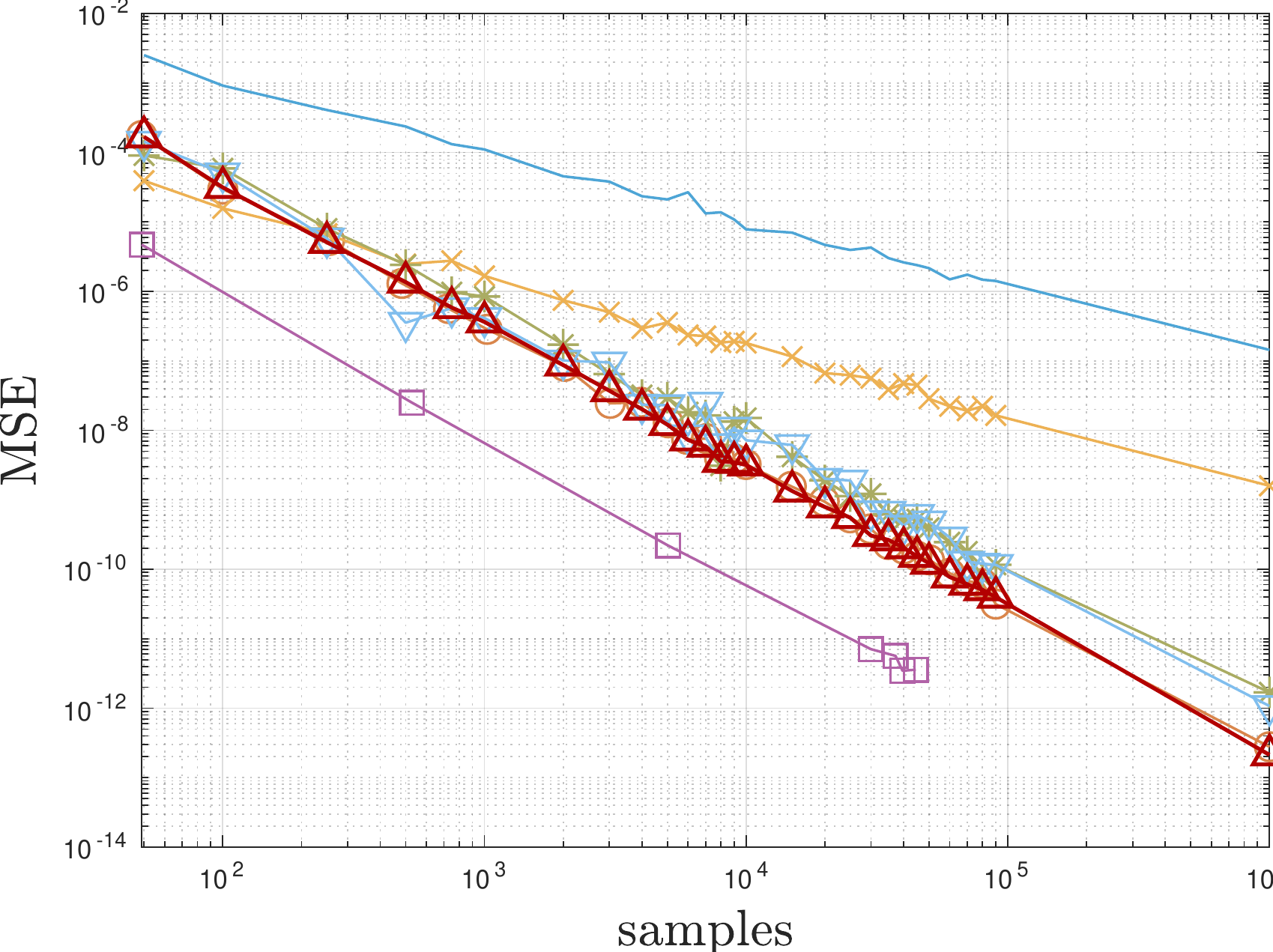}
   & \hspace*{-0.2cm}\includegraphics[clip,trim=1.28cm 8.05cm 2.45cm 8.75cm, scale=0.35]{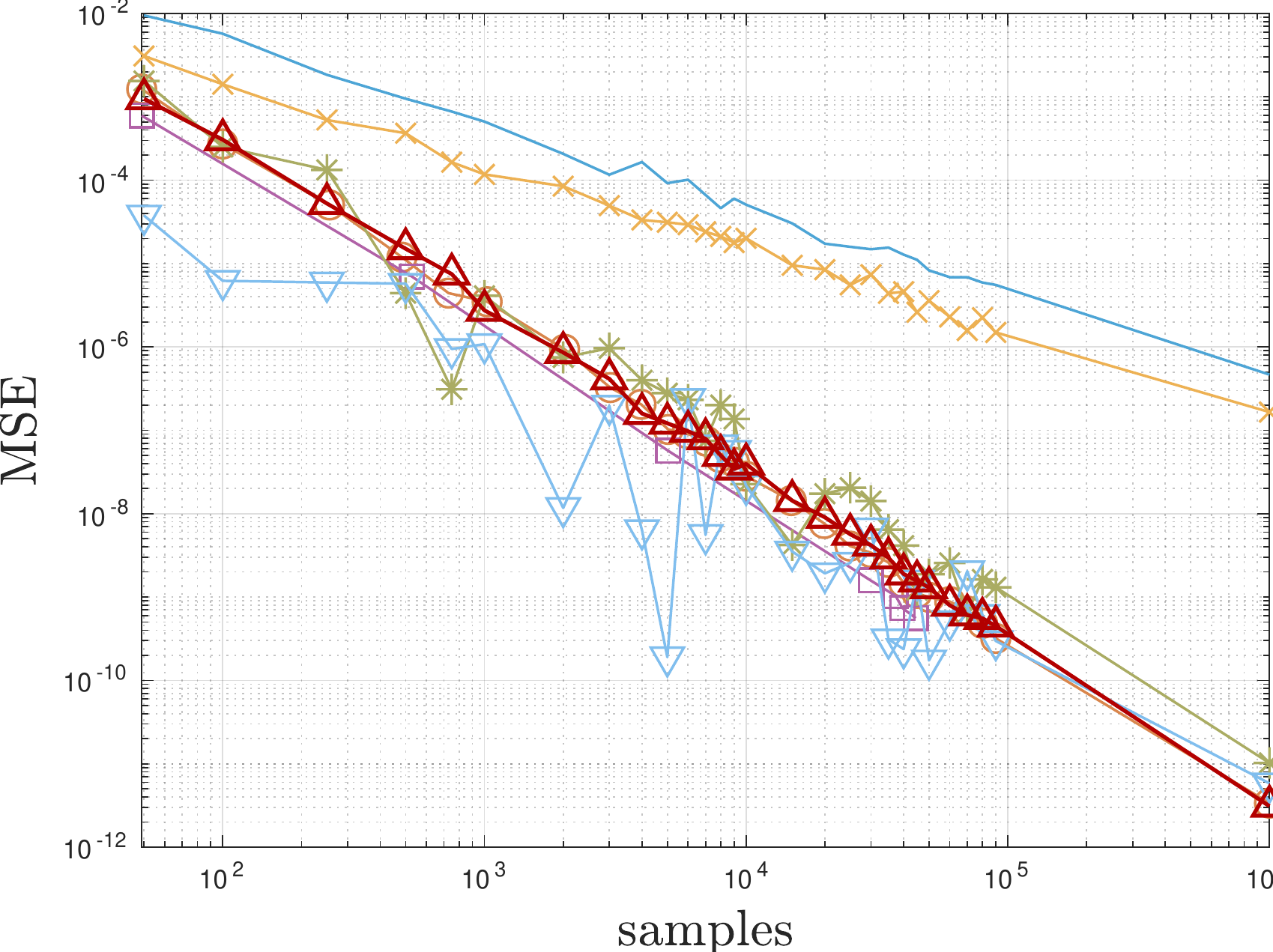}\\
     \rotatebox{90}{\quad\quad\quad 4 Dimensions} & \hspace*{-0.2cm}\includegraphics[clip,trim=1.28cm 8.05cm 2.45cm 8.75cm, scale=0.35]{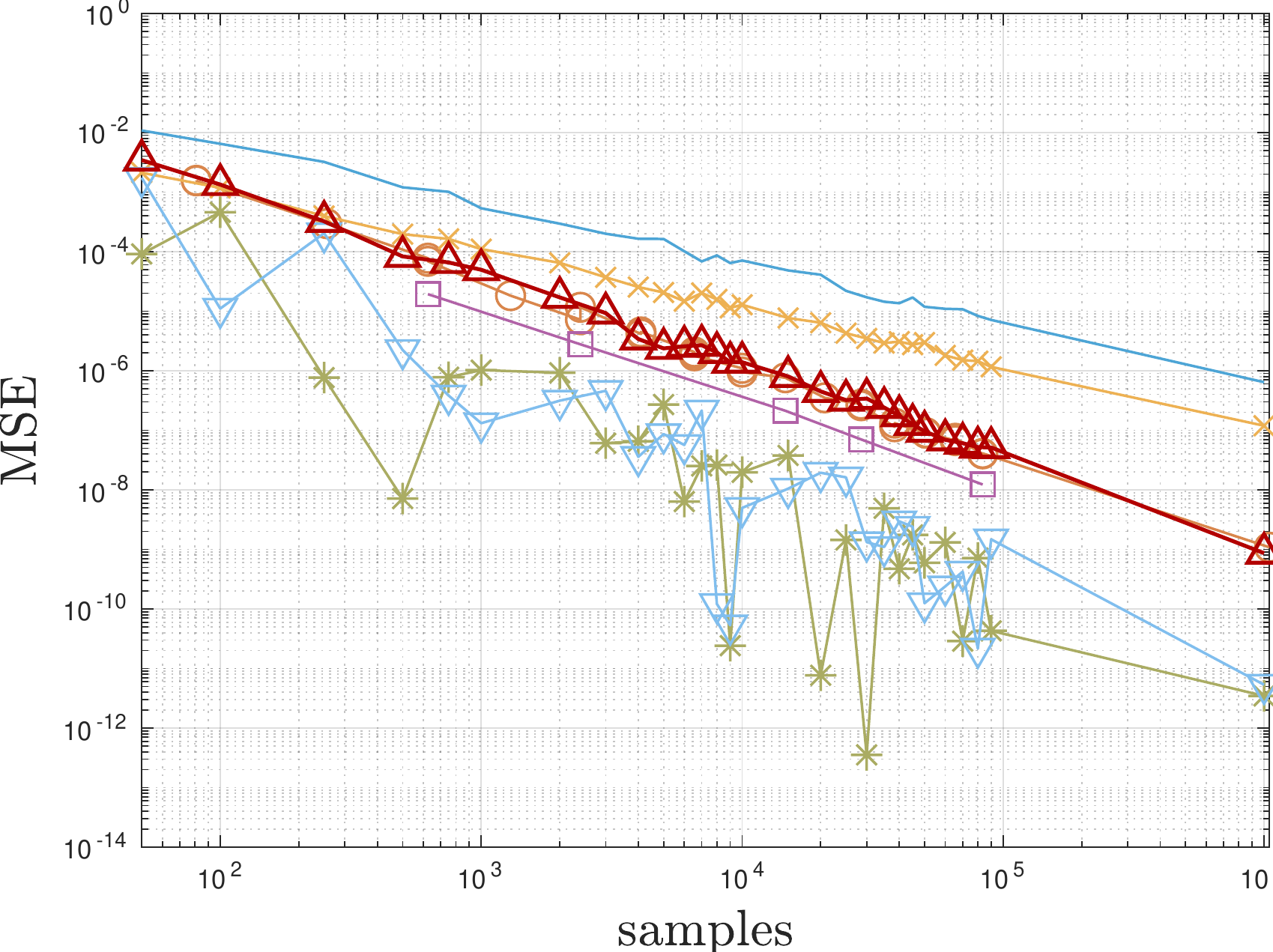}
   & \hspace*{-0.2cm}\includegraphics[clip,trim=1.28cm 8.05cm 2.45cm 8.75cm, scale=0.35]{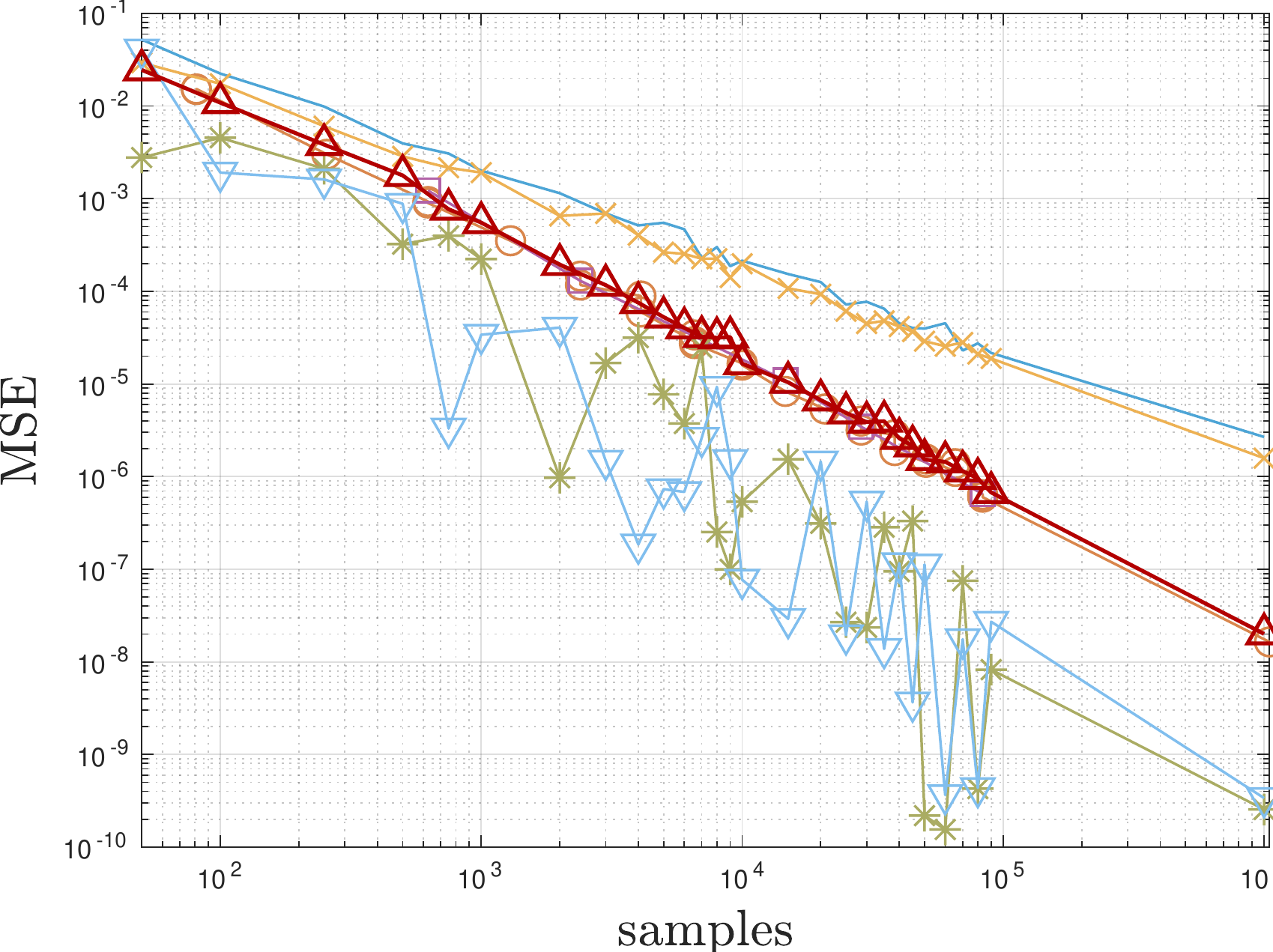}
\end{tabular}
\centering
\includegraphics[scale=0.3]{images/analytical/legend/legend.pdf}
\caption{\label{fig:analytical_cont}Error convergence of 7 samplers on 2D and 4D continuous analytic integrands of varied complexity. Our method (KDT) exhibits similar performance to jittered sampling, which is comparable to state-of-the-art QMC samplers in 2D, without limiting the allowed number of samples. In 4D the convergence of KDT again matches that of jittered sampling, both of which appear inferior to popular QMC methods.}
\end{figure*}

\begin{figure*}[t]
\centering
\begin{tabular}{ccc}
   & $PWConst_3$
   & $PWConst_{20}$\\
  \rotatebox{90}{\quad\quad\quad 2 Dimensions} 
   & \hspace*{-0.2cm}\includegraphics[clip,trim=1.28cm 8.05cm 2.45cm 8.75cm, scale=0.35]{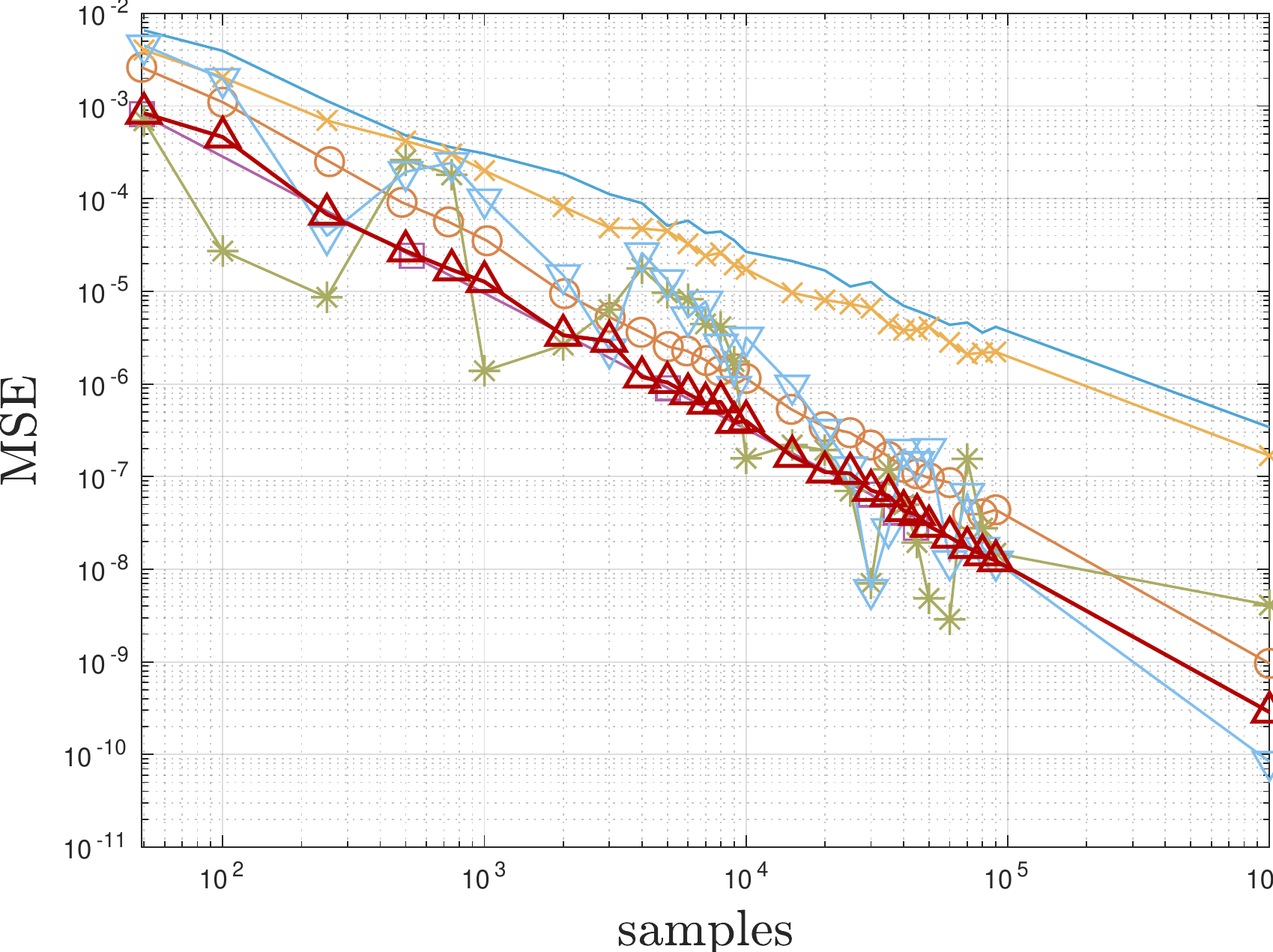}
   & \hspace*{-0.2cm}\includegraphics[clip,trim=1.28cm 8.05cm 2.45cm 8.75cm, scale=0.35]{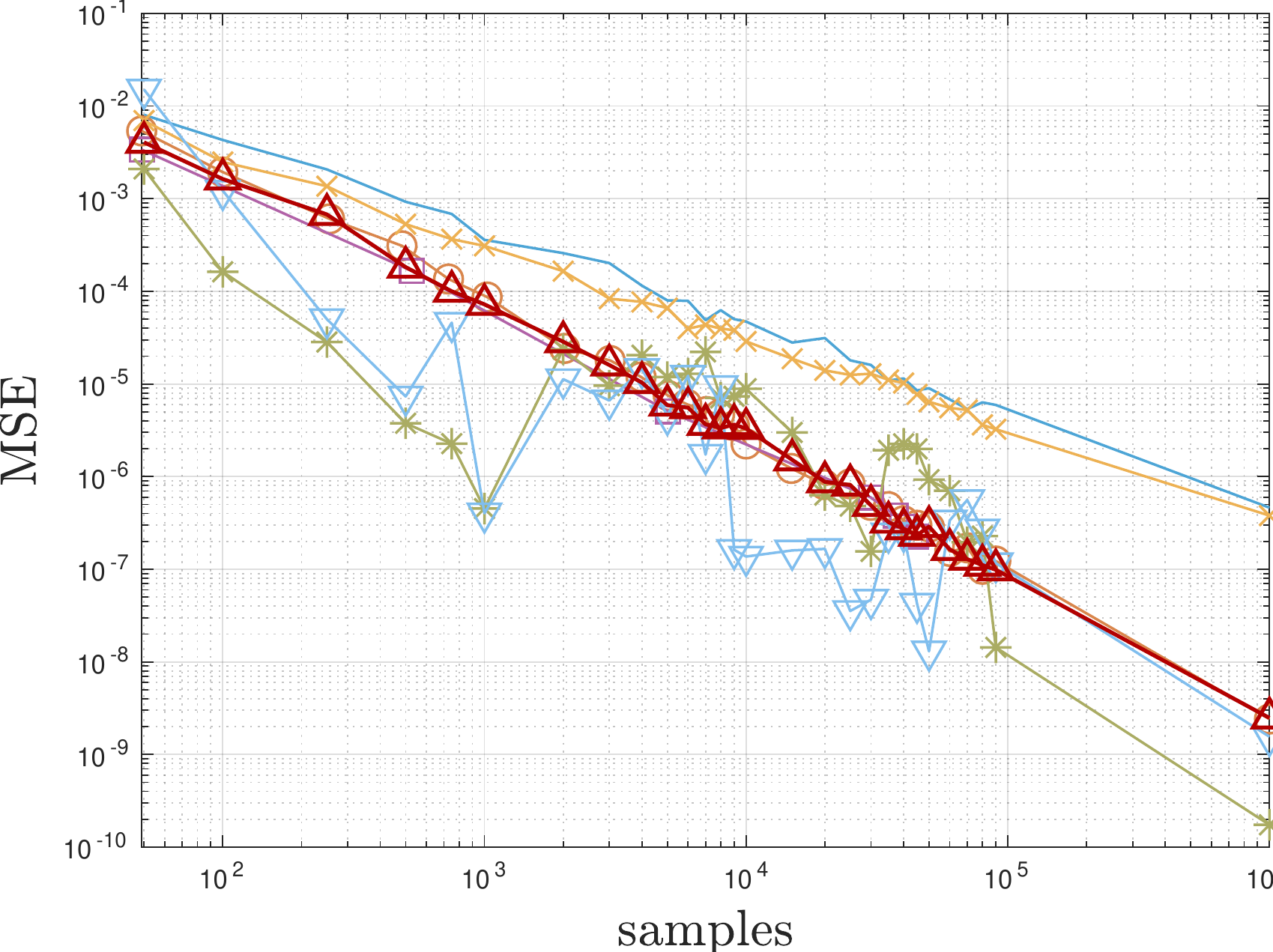}\\
     \rotatebox{90}{\quad\quad\quad 4 Dimensions} 
   & \hspace*{-0.2cm}\includegraphics[clip,trim=1.28cm 8.05cm 2.45cm 8.75cm, scale=0.35]{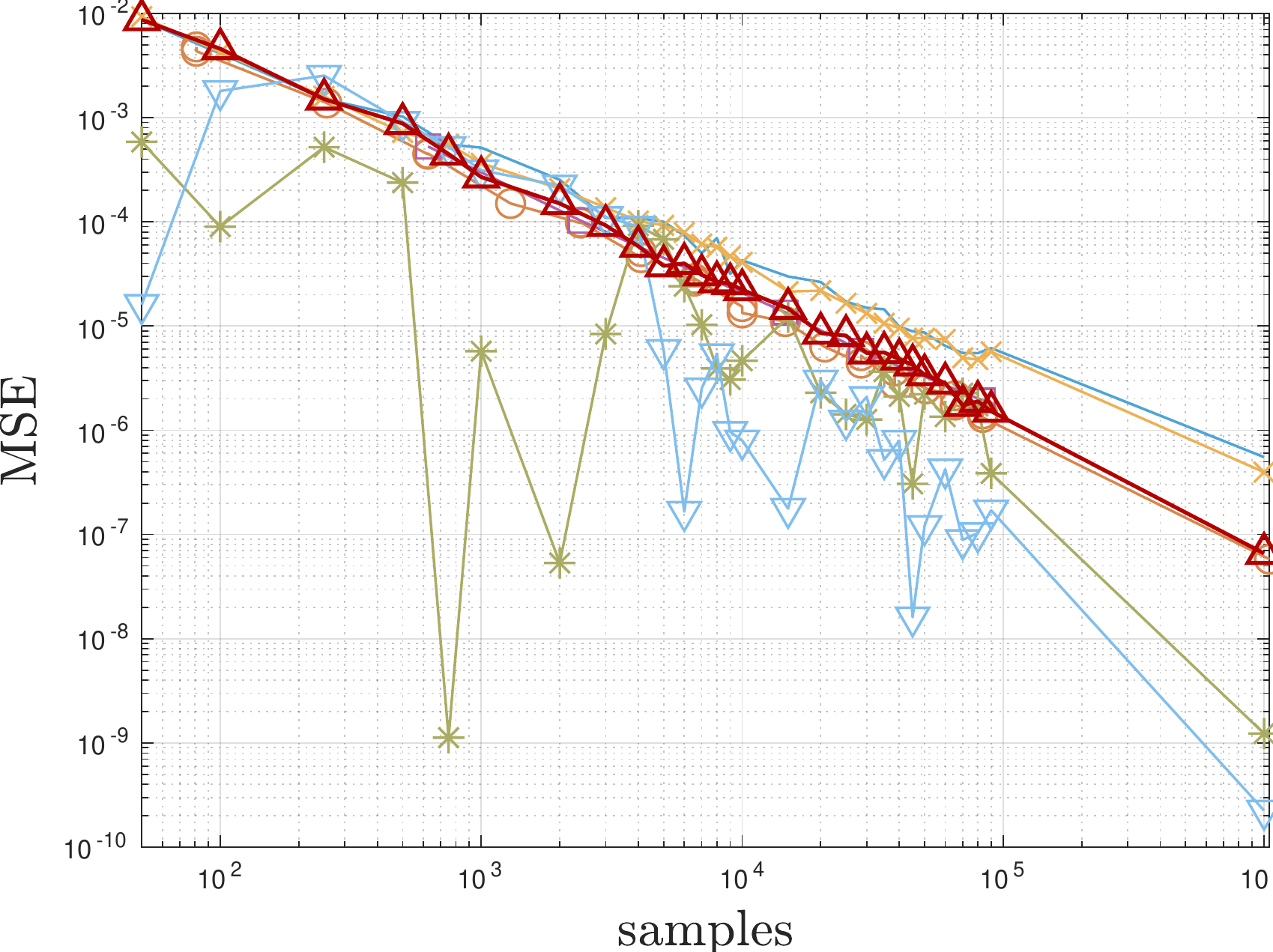}
   & \hspace*{-0.2cm}\includegraphics[clip,trim=1.28cm 8.05cm 2.45cm 8.75cm, scale=0.35]{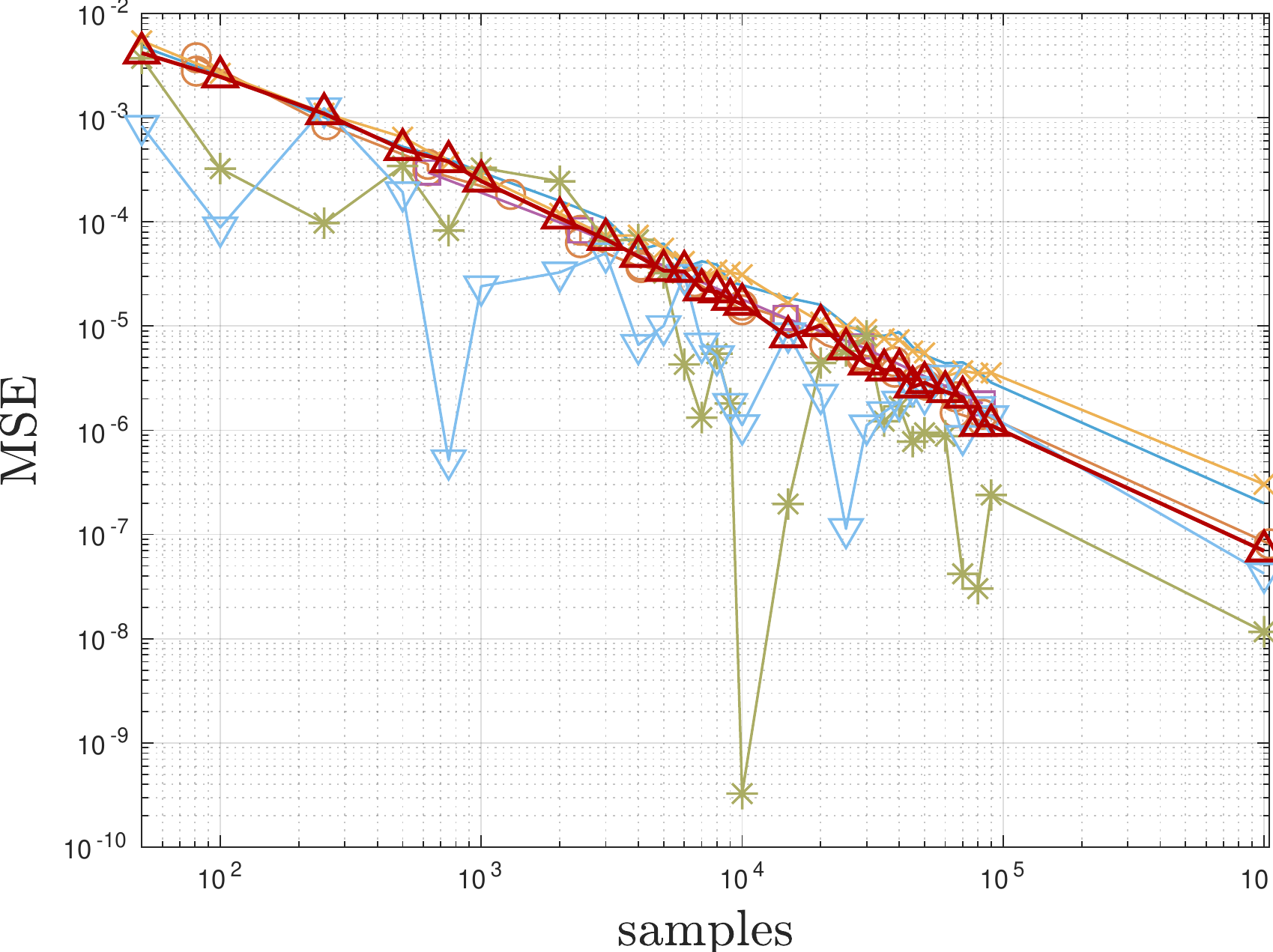}
\end{tabular}
\centering
\includegraphics[scale=0.3]{images/analytical/legend/legend.pdf}
\caption{\label{fig:analytical_discont}Error convergence of 7 samplers on 2D and 4D discontinuous analytic integrands of varied complexity. Our method (KDT) exhibits similar performance to jittered sampling, which is comparable to state-of-the-art QMC samplers in 2D, without limiting the allowed number of samples. In 4D the convergence of KDT again matches that of jittered sampling, both of which appear inferior to popular QMC methods.}
\end{figure*}

\subsection{Evaluation on rendered images} \label{subsec:renderings}

The experiments on the photorealistically rendered images are performed using PBRT version 3~\cite{Pharr2010PBRT}. We use  the \emph{Empirical Error Analysis} toolbox~\cite{Subr16Fourier} to calculate errors due to various samplers. We implemented jittered kd-tree stratification within PBRT as a \emph{pixel sampler}.~i.e.~a jittered kd-tree stratified sample set is generated for each pixel independently. Since our sampler approaches that of random sampling for $n\ll 2^d$, we also implemented a padded version for 2D subspaces, which we hereby refer to as KDT2Dpad.

We compare KDT2Dpad against other pixel samplers found within PBRT, such as stratified (Jitter2Dpad) and Random, as well as QMC methods Halton and Sobol. The QMC methods are  implemented as \emph{global samplers}, which generate samples for the whole image plane and associate pixel coordinates to the indices of samples from the appropriate sequences. This scheme ensures that each pixel is assigned the correct number of samples. We also compare our method against Bush's Orthogonal Arrays stratification~\cite{jarosz19orthogonal} with strength 2 (OAbushMJ2) implemented as a pixel sampler.

We evaluated our method using two metrics, RGB-MSE (RGBMSE) and Log-Luminance-MSE (LLMSE), both computed on the high-dynamic range images output by the renderer. The images shown in the paper are tonemapped for visualisation, but we include the original images along with an html browser as supplementary material. The former is computed as the squared norm of the differences in RGB space:
\begin{equation*}
|R_{ref}(p)-R_{test}(p)|^2+|G_{ref}(p)-G_{test}(p)|^2+|B_{ref}(p)-B_{test}(p)|^2,
\end{equation*}
for a pixel $p$, where $R_{ref}, G_{ref}, B_{ref}$ are the linear RGB channels of the reference image and $R_{test}, G_{test}, B_{test}$ are those of the test image respectively. The Log-Luminance-MSE metric measures the error in the perceived luminance,
\begin{equation*}
|\log (L_{ref}(p)) - \log(L_{test} (p))|^2, 
\end{equation*}
where $L_{ref}$ and $L_{test}$ are the luminances of the reference and test image respectively. We tested with other perceptually-based metrics such as SSIM but did not observe significant differences with the simple Log-Luminance-MSE metric. 

We rendered ORB and ORB-GLOSSY scenes, shown in Figures~\ref{fig:Orb} and~\ref{fig:orb-gloss}, which contain a light source and an orb placed inside a glossy sphere with an occluder above the orb model. These two versions of a similar scene feature 19 and 41 dimensional light paths respectively due to different material and rendering parameters. We used 529 spp for most samplers and 512 spp for QMC methods. Finally, the PAVILION scene, in Figure~\ref{fig:pav}, contains high frequency textures, various materials and complex geometry, resulting in 43 dimensional light paths. We use 3,969 spp for samplers that support it, and 4,096 spp for QMC methods. Reference scenes are rendered with 10,000 Random spp. 

For the ORB scene, our method KDT2Dpad performs comparably good in both metrics, matching, and, in the case of RGBMSE, surpassing the performance of state-of-the-art QMC methods. In ORB-GLOSS, KDT2Dpad exhibits the least error in both metrics for the shadow region considered. Sobol and Jitter2Dpad methods exhibit some structured artifacts. In the PAVILION scene, Sobol performs best in terms of RGBMSE, and KDT2Dpad is better than Jitter2Dpad and Halton. All samplers perform similarly with respect to LLMSE with Random sampling being consistently the worst, followed by Jitter2Dpad.

We tested the intricate combinations of high-dimensional light paths, gloss, textures and defocus by rendering the ORB-GLOSS-DOF scene (Fig.~\ref{fig:orb-gloss-fov} and Fig.~\ref{fig:orb-gloss-dof-insets})  which is identical to ORB-GLOSS but with a wide aperture (shallow depth of field). We observed that Halton performs well, as expected, and Sobol exhibits tell-tale structured artifacts. Our KDT sampler performs well in areas with complex light paths (high-dimensional paths, gloss, texture and depth of field). Surprisingly, jittered sampling performs best with respect to multi-bounce paths (like on the side of the cuboidal occluder). 

\paragraph{Error plots} The box plots accompanying rendered results show the mean (dashed horizontal line), median (solid line), quantiles around the median (shaded box) and the upper and lower fences (end points of whiskers). The mean value is printed above each sampler.


\begin{figure*}[h]
\centering
\begin{tabular}{@{}c@{ }c@{ }c@{ }c@{}}
\includegraphics[scale=0.14]{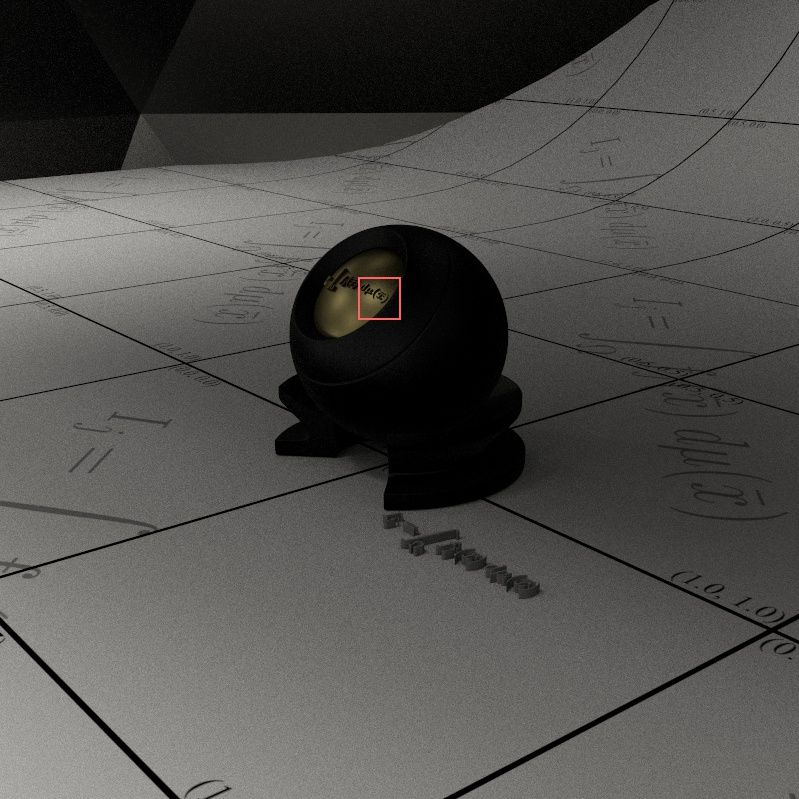} &
\includegraphics[scale=0.21]{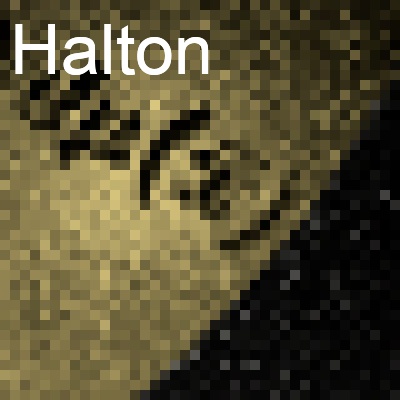}&
\includegraphics[scale=0.21]{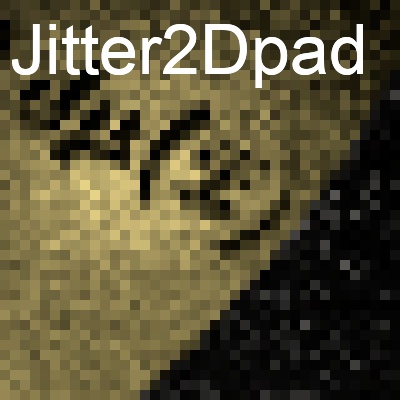}&
\includegraphics[scale=0.21]{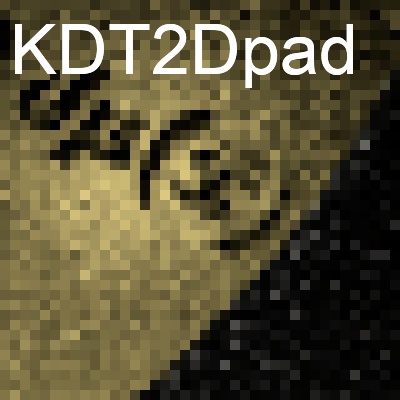}\\
\includegraphics[scale=0.21]{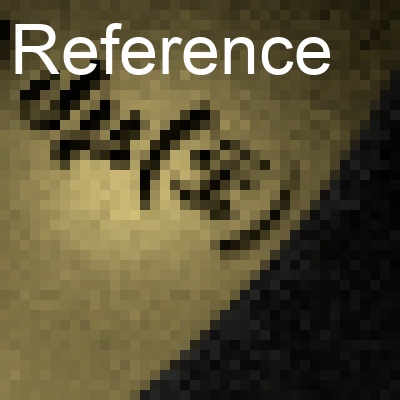} &
\includegraphics[scale=0.21]{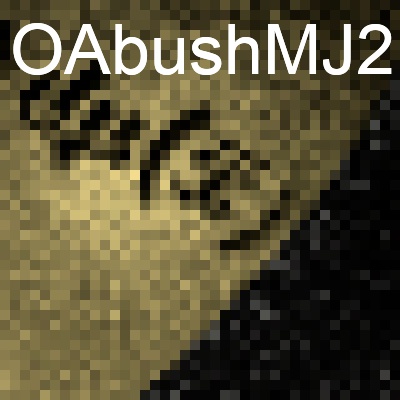}&
\includegraphics[scale=0.21]{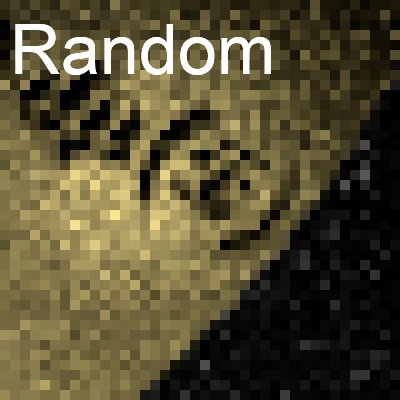}&
\includegraphics[scale=0.21]{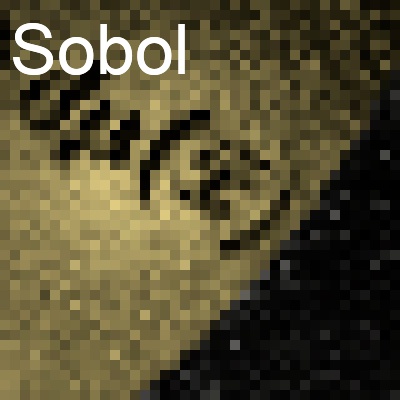}
\end{tabular}
\begin{tabular}{@{}c@{  }c@{}}
\includegraphics[clip,scale=0.38]{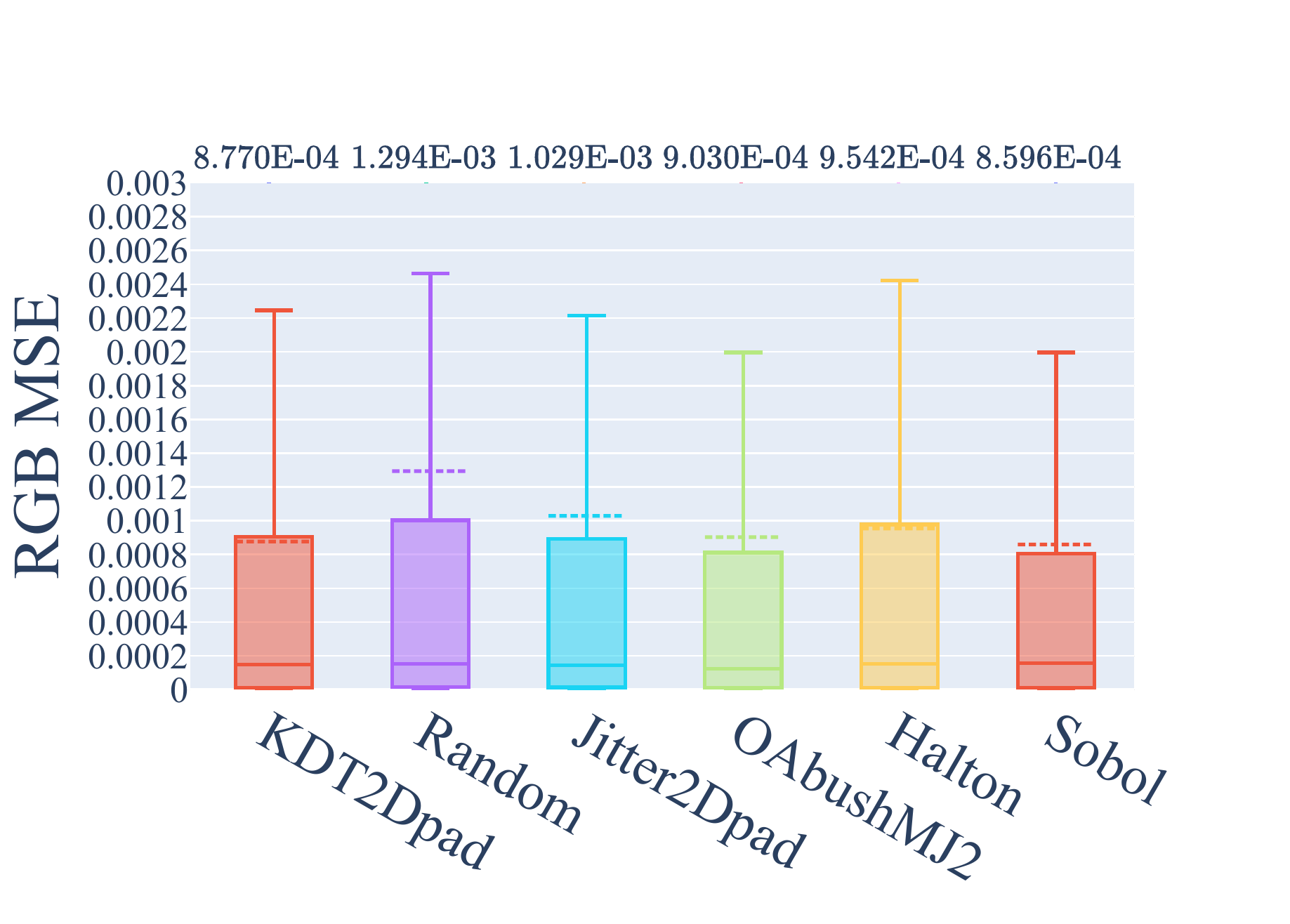}&
\includegraphics[clip,scale=0.38]{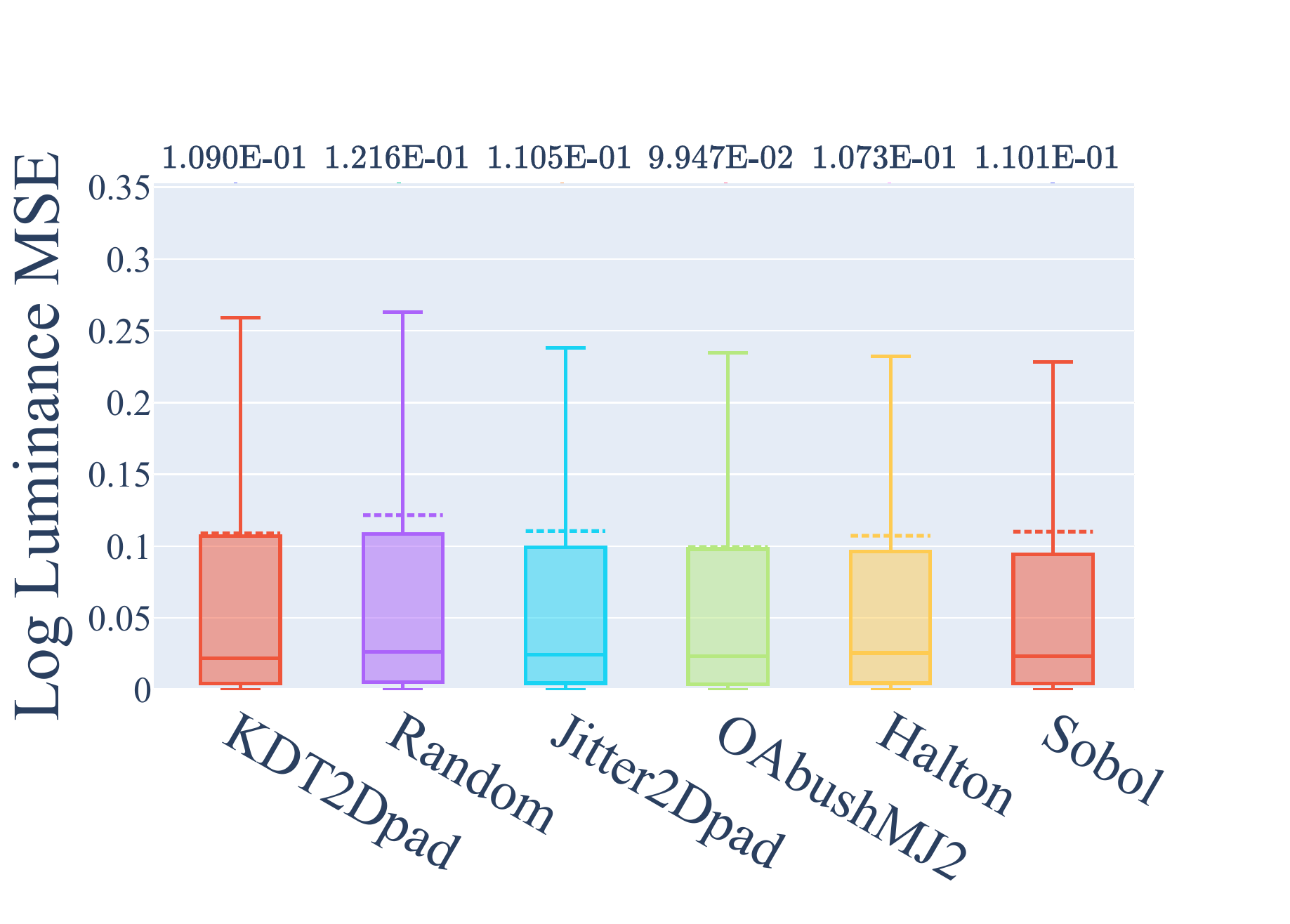}
\end{tabular}
\caption{Our sampler (KDT2DPad) performs comparably to state of the art techniques in this 19 dimensional ORB scene. The region of interest is highlighted in the reference image (top left).Plots comparing $L_2$-norm mean squared error of the RGB values (RGB-MSE) (bottom left) and the log Luminance mean squared error (Log-Luminance-MSE) (bottom right) of the pixels within the region of interest are shown, along with enlarged versions of the highlighted region for each sampler to aid visual inspection (bottom row). Note: we have chosen a representative crop within the image.}
\label{fig:Orb}
\end{figure*}

\begin{figure*}[h]
\centering
\begin{tabular}{@{}c@{ }c@{}}
\includegraphics[scale=0.14, trim=0.0cm 13cm 0.0cm 0.0cm]{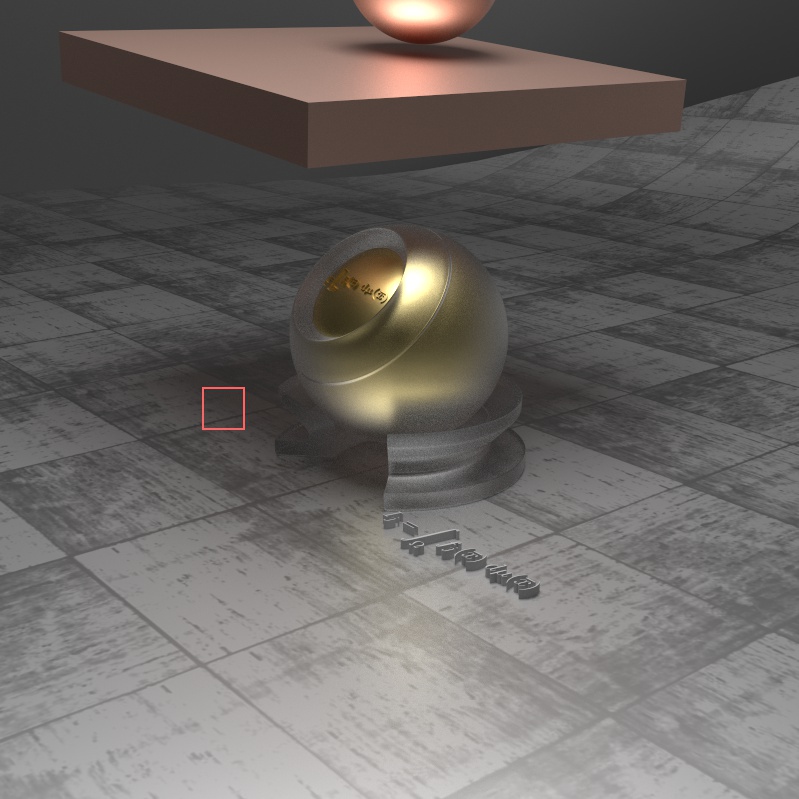}
&\begin{tabular}{@{ }c@{ }c@{ }c@{}}
\includegraphics[scale=0.21]{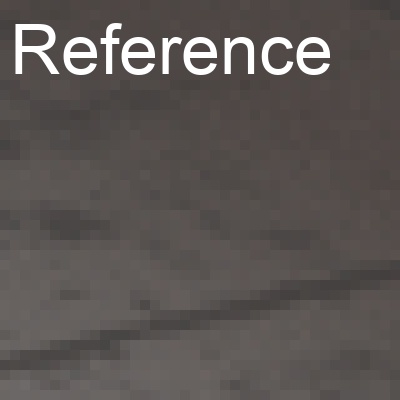} &
\includegraphics[scale=0.21]{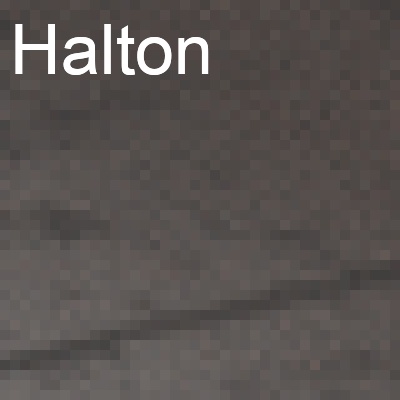}&
\includegraphics[scale=0.21]{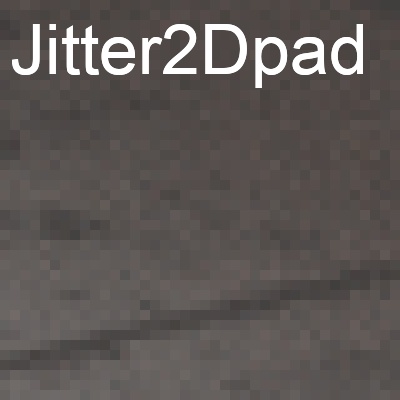}\\
\includegraphics[scale=0.21]{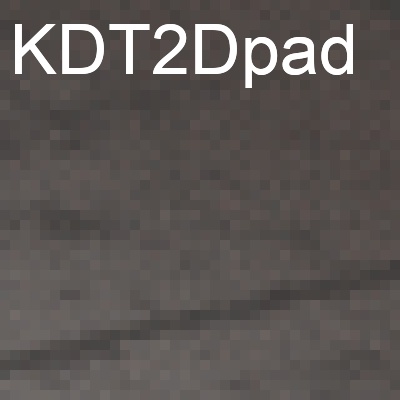}&
\includegraphics[scale=0.21]{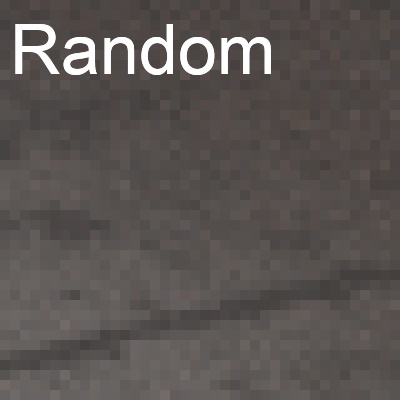}&
\includegraphics[scale=0.21]{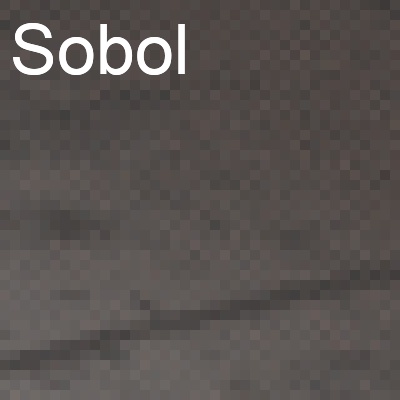}
\end{tabular}
\end{tabular}
\begin{tabular}{@{}c@{  }c@{}}
\includegraphics[clip,scale=0.38]{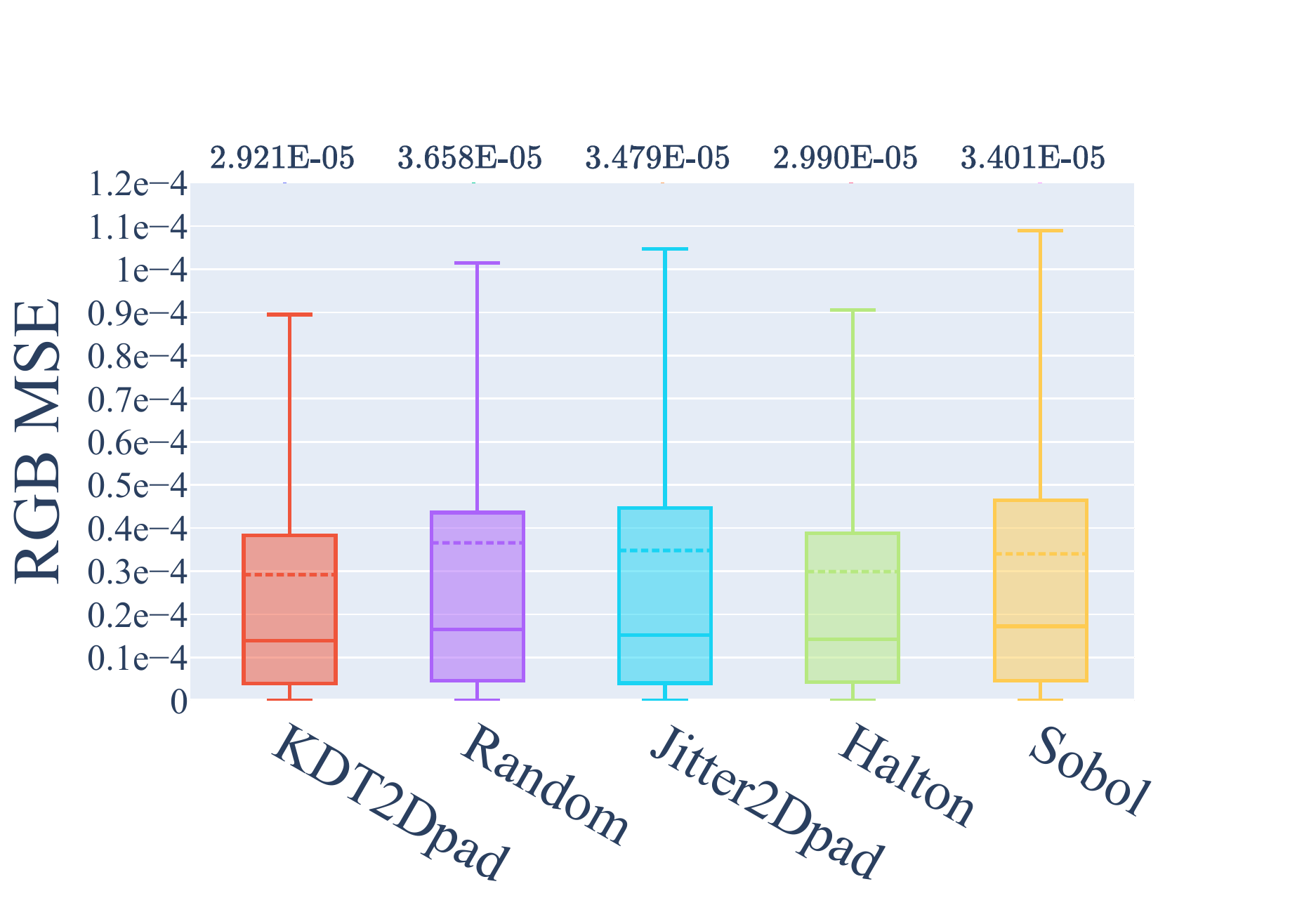}&
\includegraphics[clip,scale=0.38]{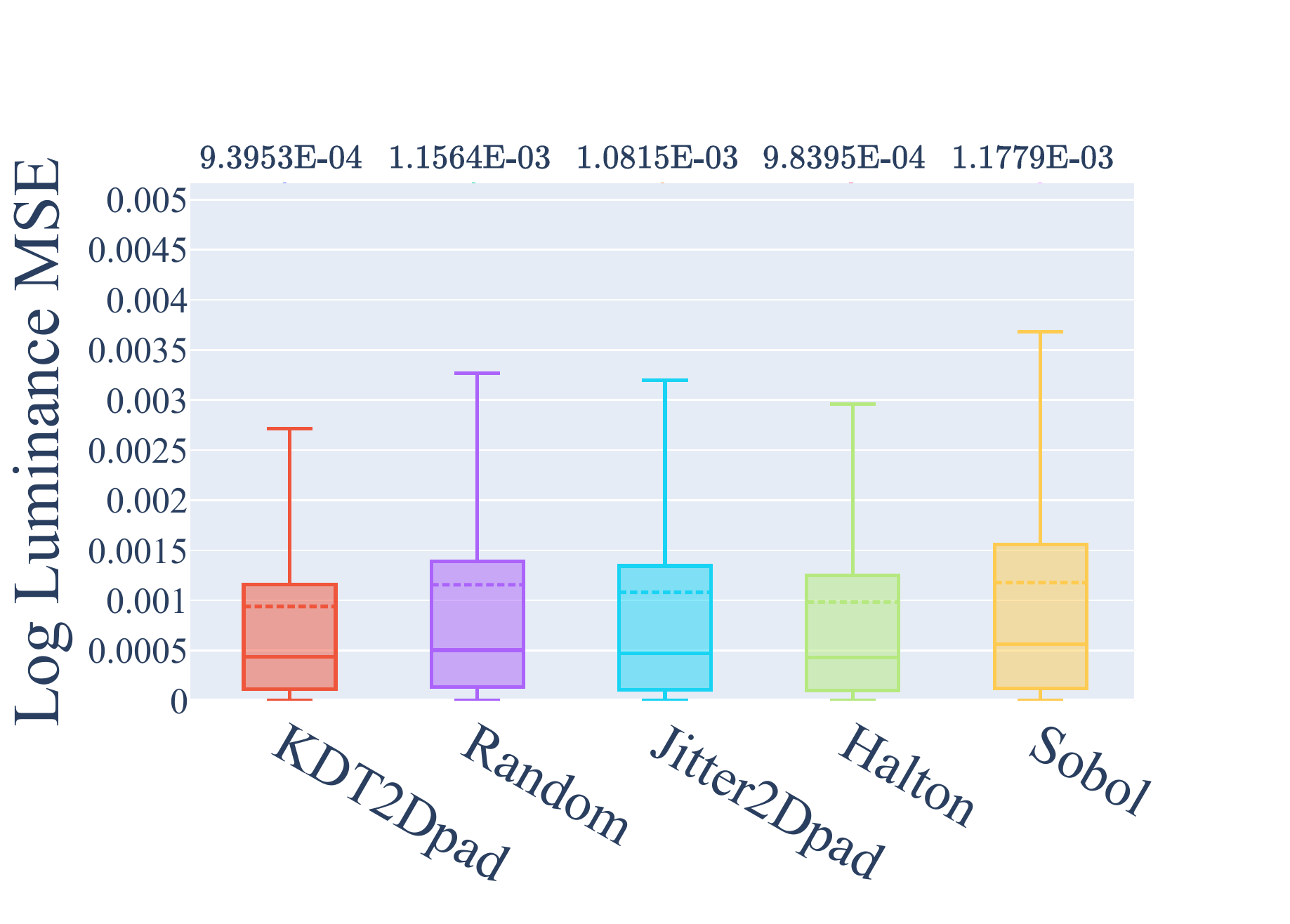}%
\end{tabular}
\caption{KDT2Dpad clearly outperforms all other considered samplers for the indicated shadow region of this 41 dimensional ORB-GLOSS scene. The region of interest is highlighted in the reference image (left).Plots comparing $L_2$-norm mean squared error of the RGB values (RGB-MSE) (bottom left) and the log Luminance mean squared error (Log-Luminance-MSE) (bottom right) of the pixels within the region of interest are shown, along with enlarged versions of the highlighted region for each sampler to aid visual inspection (top right).}
\label{fig:orb-gloss}
\end{figure*}

\begin{figure*}[h]
\centering
\begin{tabular}{@{}c@{}}
\includegraphics[scale=0.15]{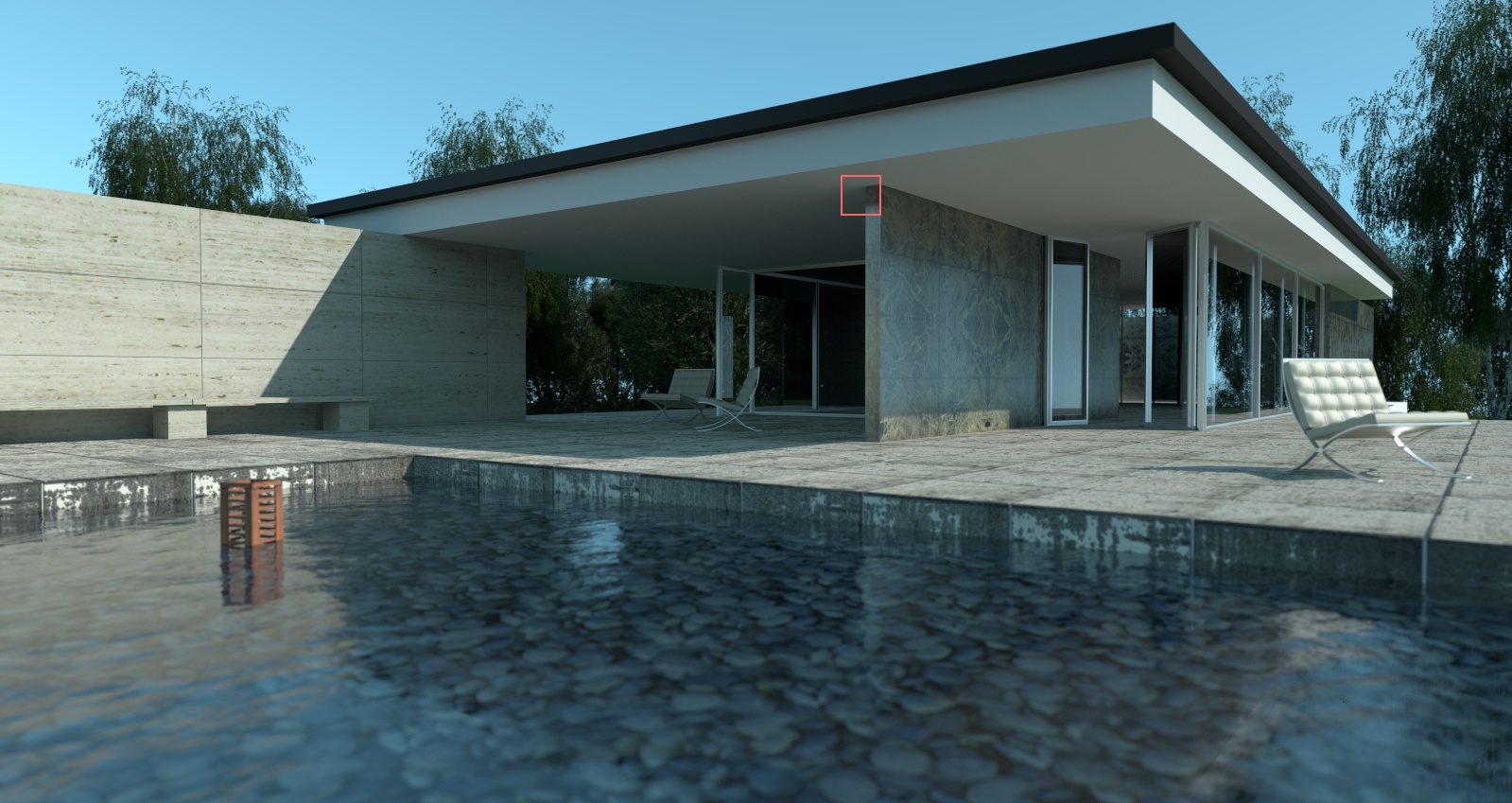}\\
\begin{tabular}{@{ }c@{ }c@{ }c@{}}
\includegraphics[scale=0.21]{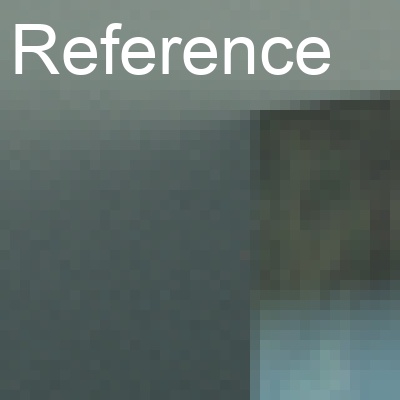} &
\includegraphics[scale=0.21]{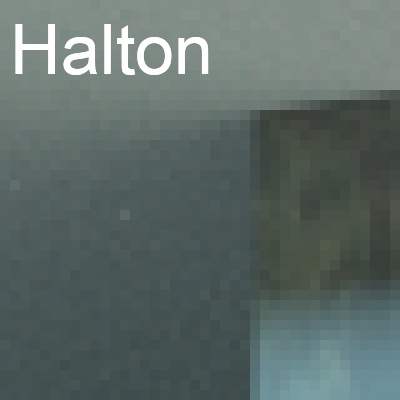}&
\includegraphics[scale=0.21]{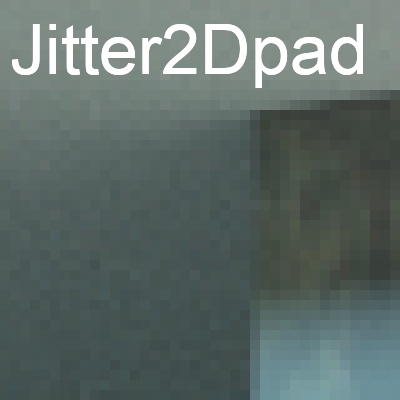}\\
\includegraphics[scale=0.21]{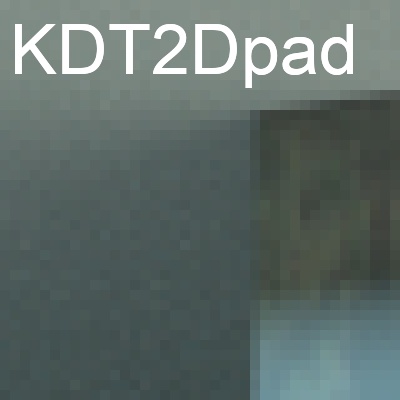}&
\includegraphics[scale=0.21]{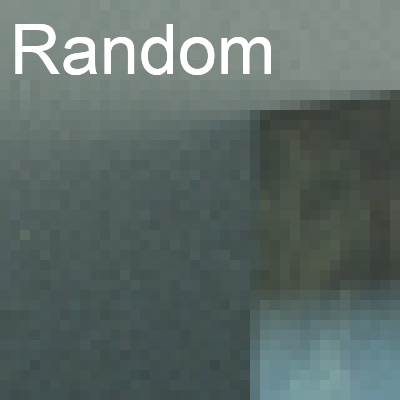}&
\includegraphics[scale=0.21]{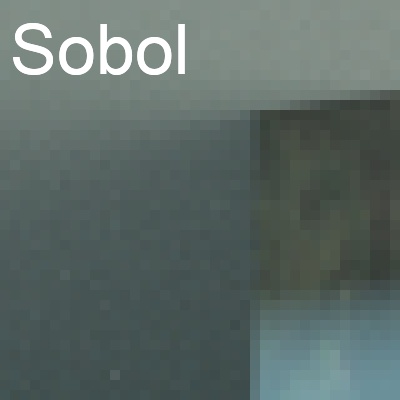}
\end{tabular}
\end{tabular}
\begin{tabular}{@{}c@{  }c@{}}
\includegraphics[clip,scale=0.38]{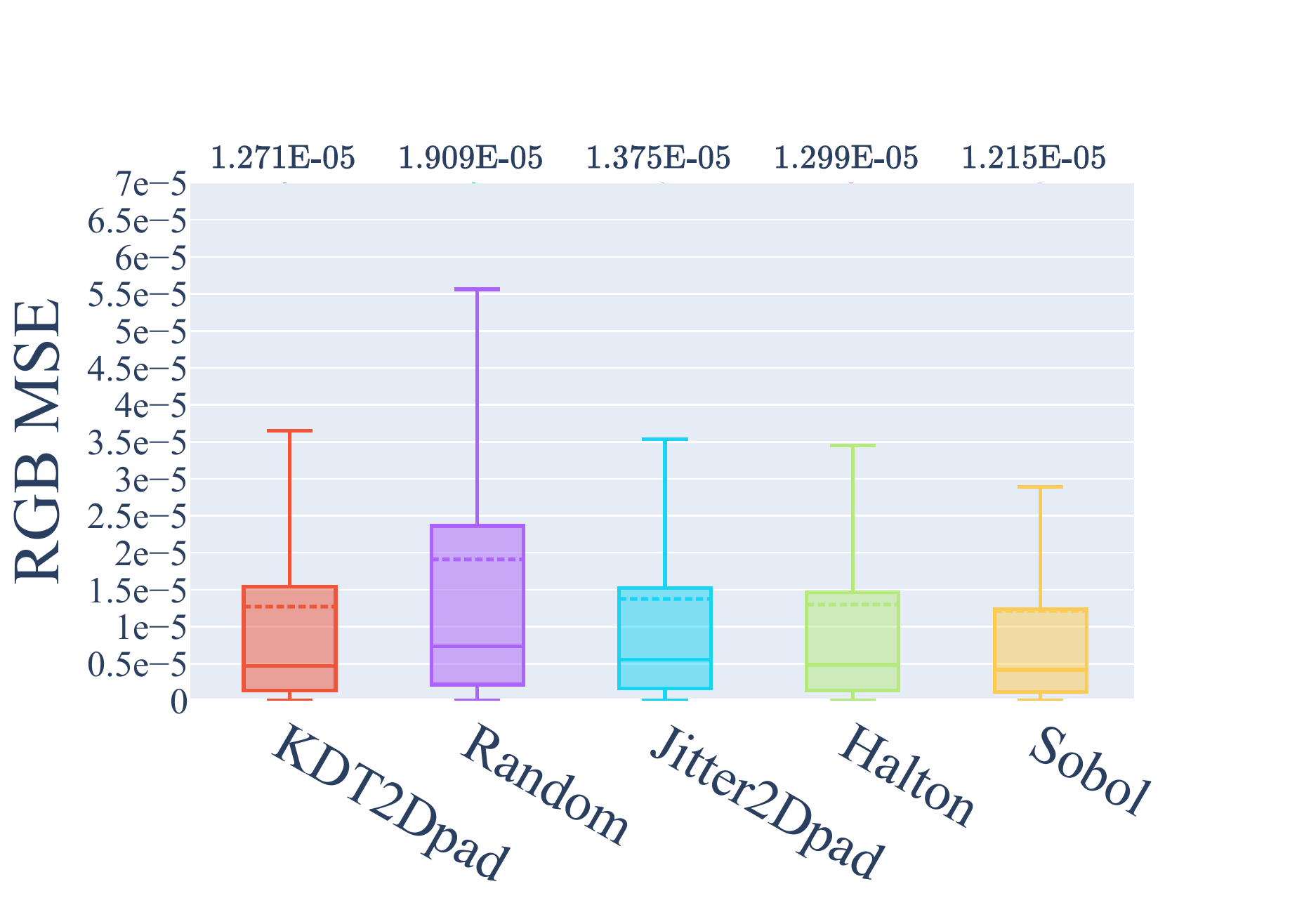}&
\includegraphics[clip,scale=0.38]{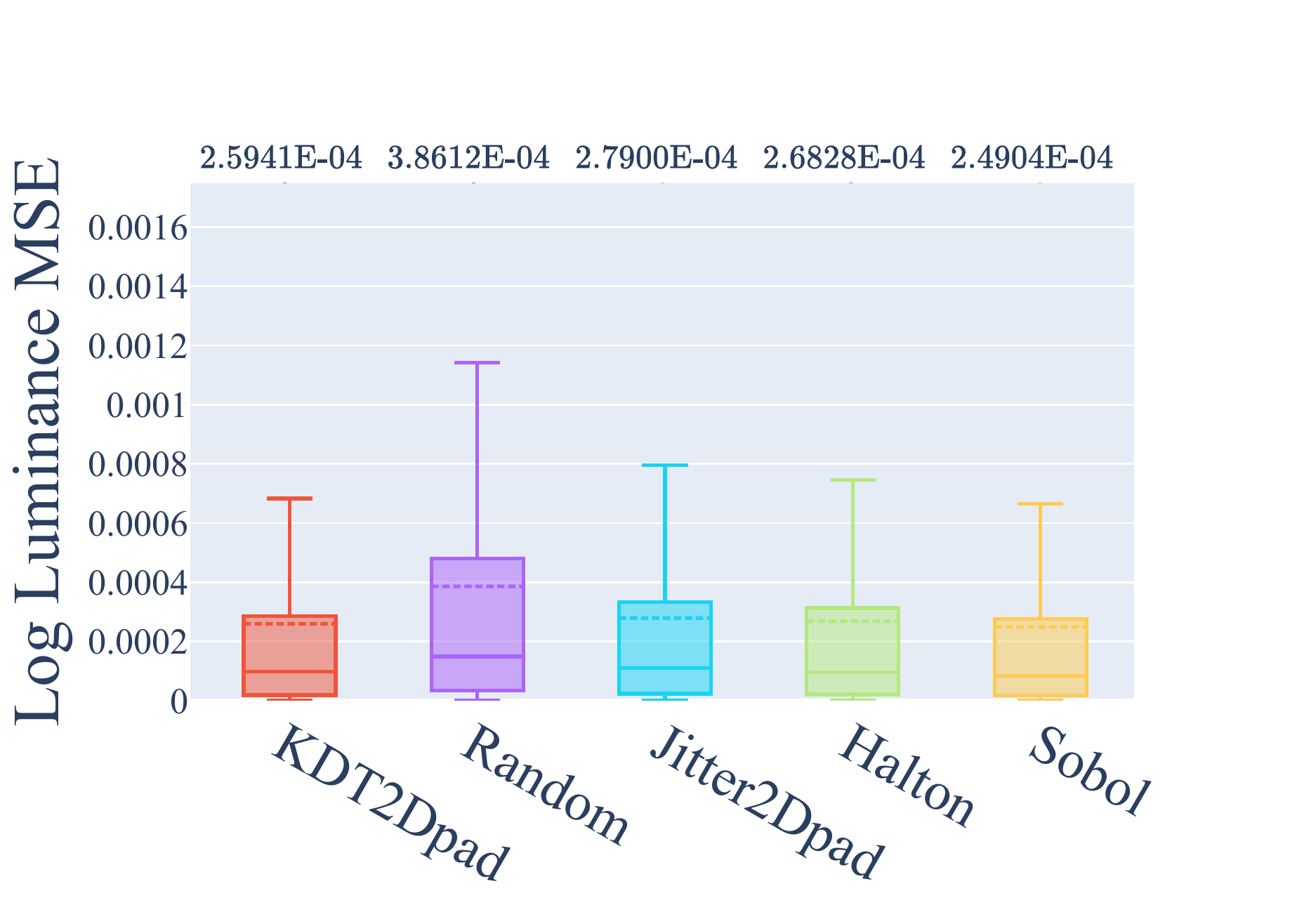}
\end{tabular}
\caption{QMC methods clearly outperform all others in terms of RGB-MSE for the shaded textured region of the PAVILION scene. Regarding Log-Luminance-MSE all metrics exhibit similar performance, with the worst being Random sampling followed by Jitter2Dpad. The region of interest is highlighted in the reference image (top).Plots comparing $L_2$-norm mean squared error of the RGB values (RGB-MSE) (bottom left) and the log Luminance mean squared error (Log-Luminance-MSE) (bottom right) of the pixels within the region of interest are shown, along with enlarged versions of the highlighted region for each sampler to aid visual inspection (middle).}
\label{fig:pav}
\end{figure*}

\begin{figure*}[h]
   \centering
\begin{tabular}{c}
\includegraphics[scale=0.15]{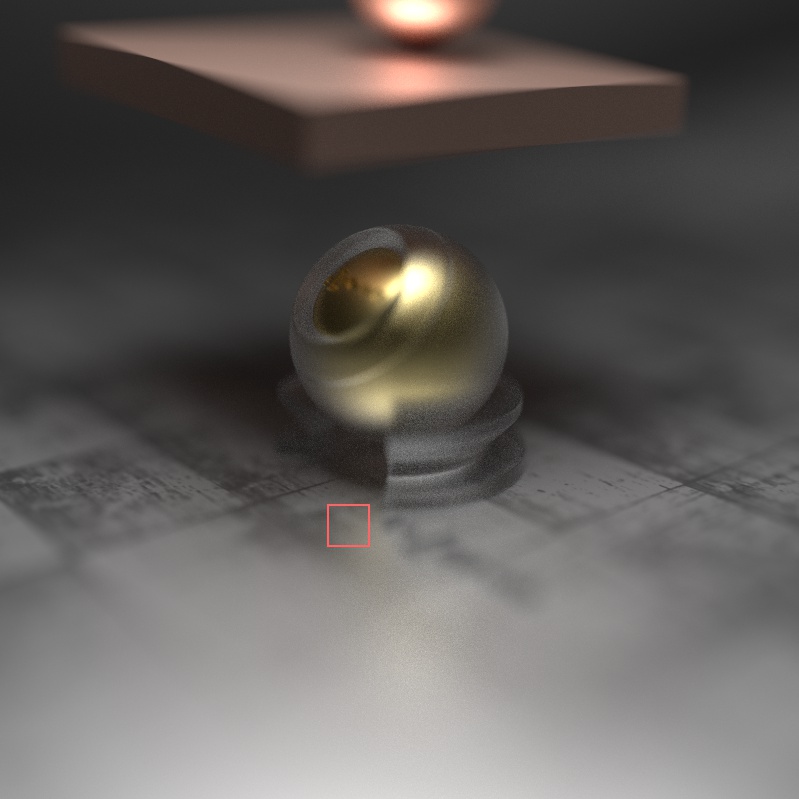}\\
\begin{tabular}{@{}c@{  }c@{}}
\includegraphics[clip,scale=0.35]{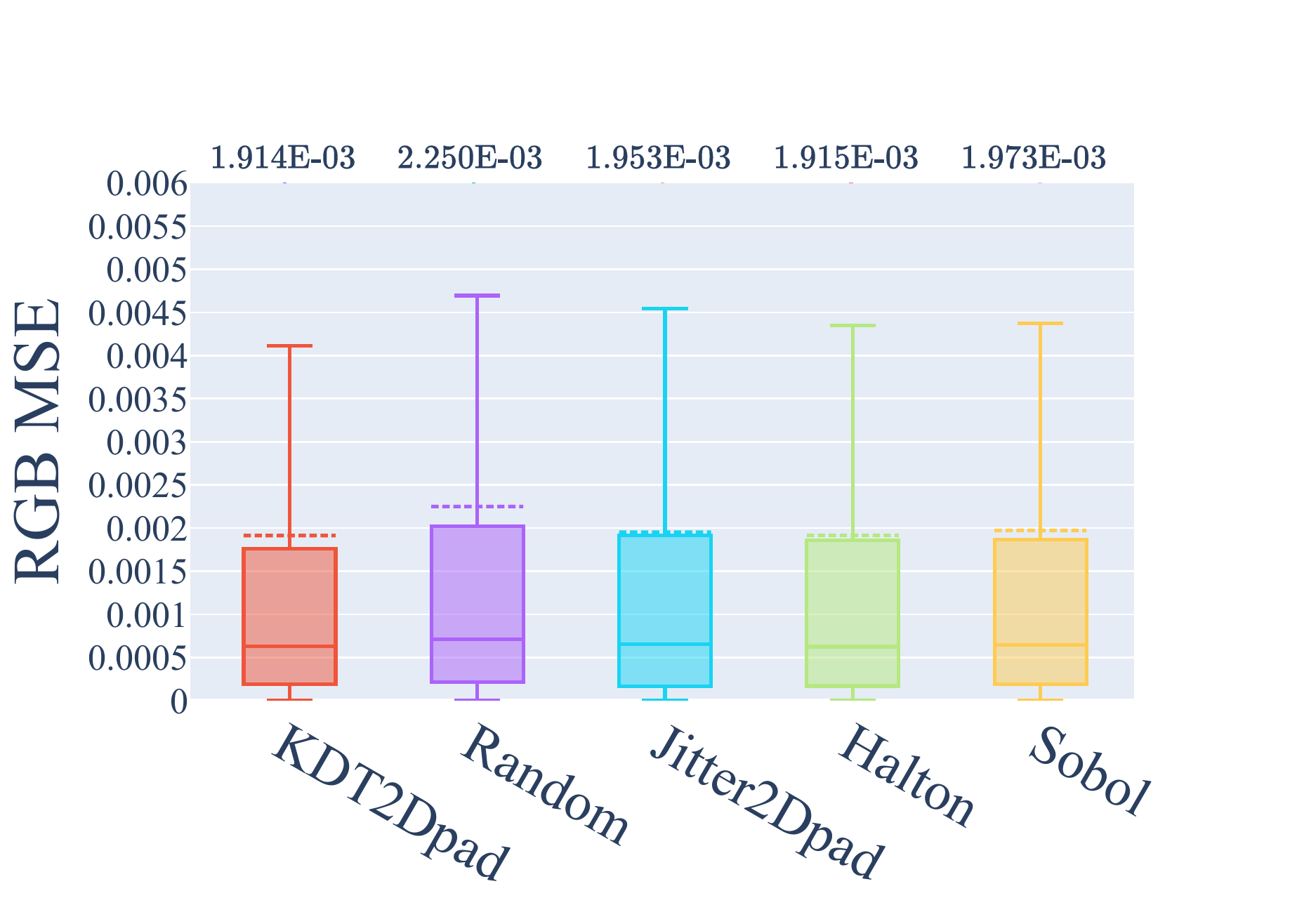}&
\includegraphics[clip,scale=0.35]{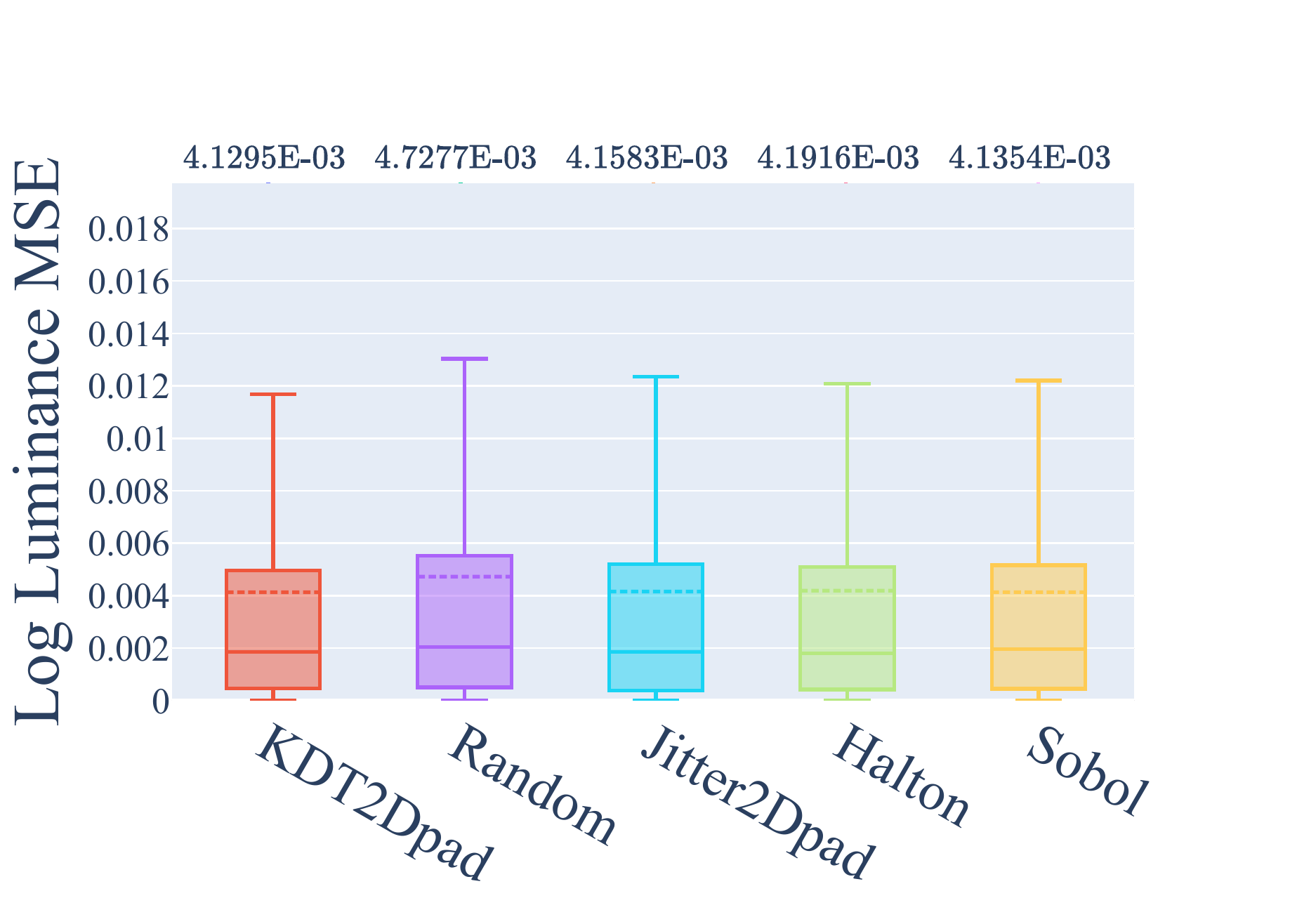}
\end{tabular}
\end{tabular}
\caption{The ORB-GLOSS-DOF scene is identical to ORB-GLOSS but rendered with a shallow depth of field (wide aperture). This scene is interesting because it shows the interplay between high-frequency, high-dimensional light transport due to multiple bounces within the glossy bounding sphere and blur due to defocus. Errors in one region (256spp) are shown here and additional zoomed insets  are shown in Figure~\ref{fig:orb-gloss-dof-insets}. }
\label{fig:orb-gloss-fov}
\end{figure*}

\newcommand\zoomscale{.14}
\begin{figure*}[h]
   \centering

\begin{tabular}{@{}c@{\;}c@{\;}c@{\;}c@{\;}c@{\;}c@{\;}c@{}}
\begin{turn}{90} 
128 spp
\end{turn} &
\includegraphics[width=\zoomscale\linewidth]{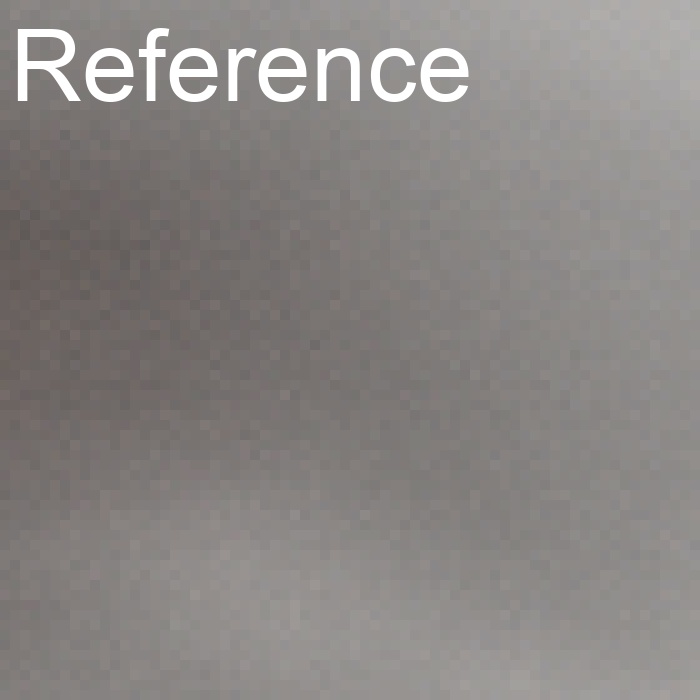}&
\includegraphics[width=\zoomscale\linewidth]{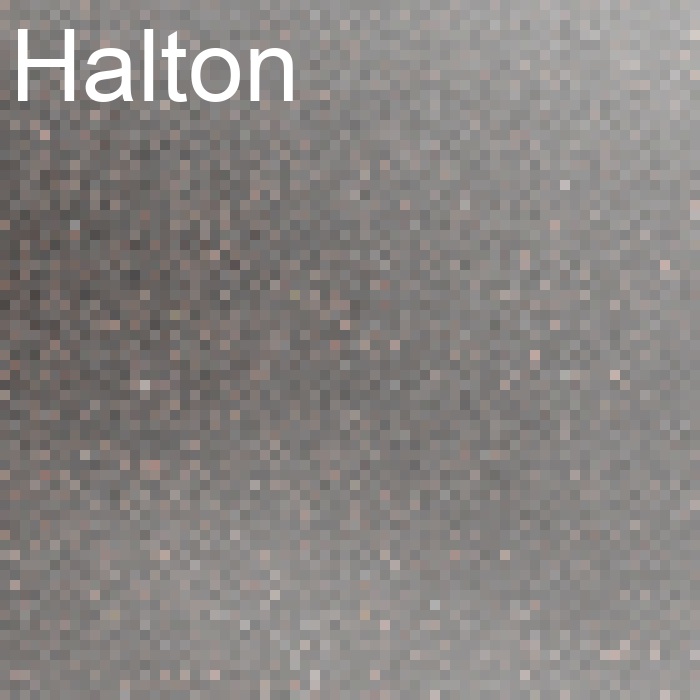}&
\includegraphics[width=\zoomscale\linewidth]{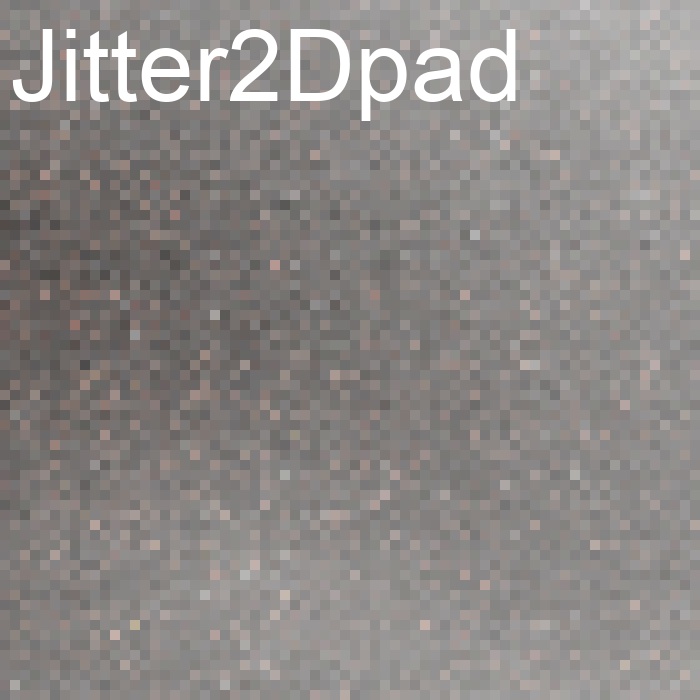}&
\includegraphics[width=\zoomscale\linewidth]{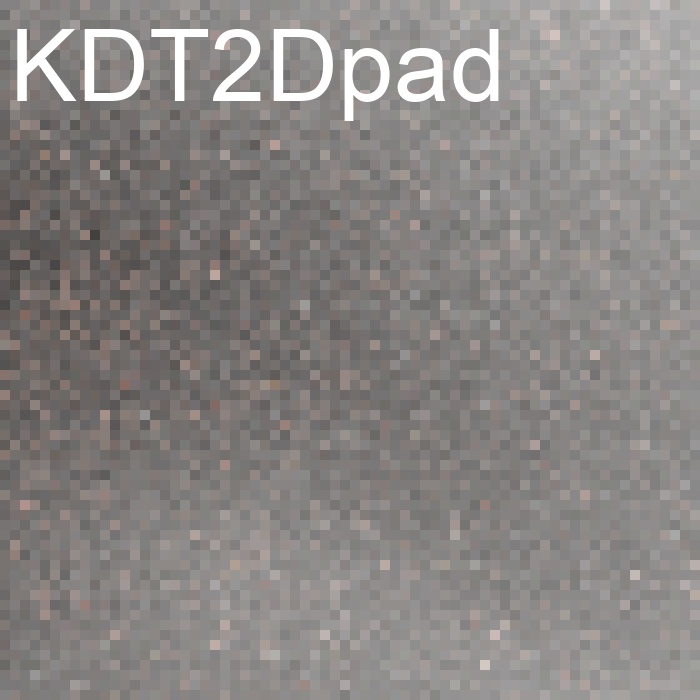}&
\includegraphics[width=\zoomscale\linewidth]{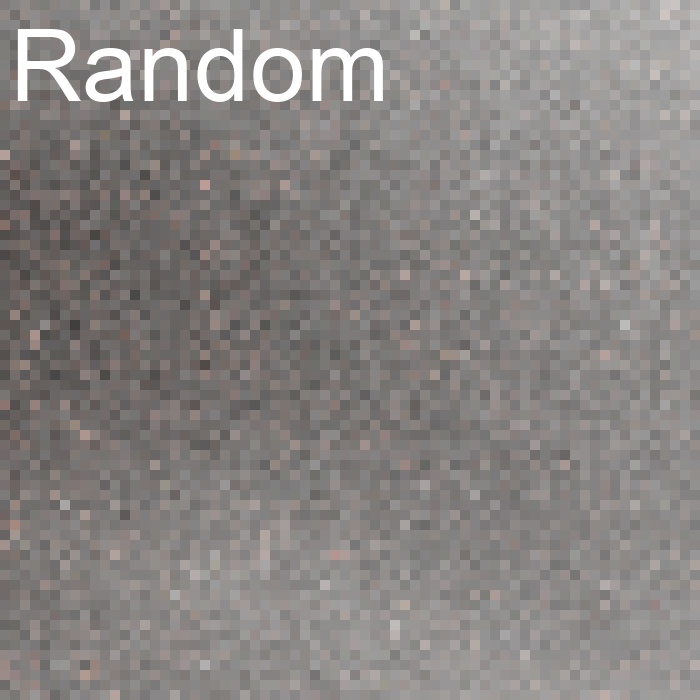}&
\includegraphics[width=\zoomscale\linewidth]{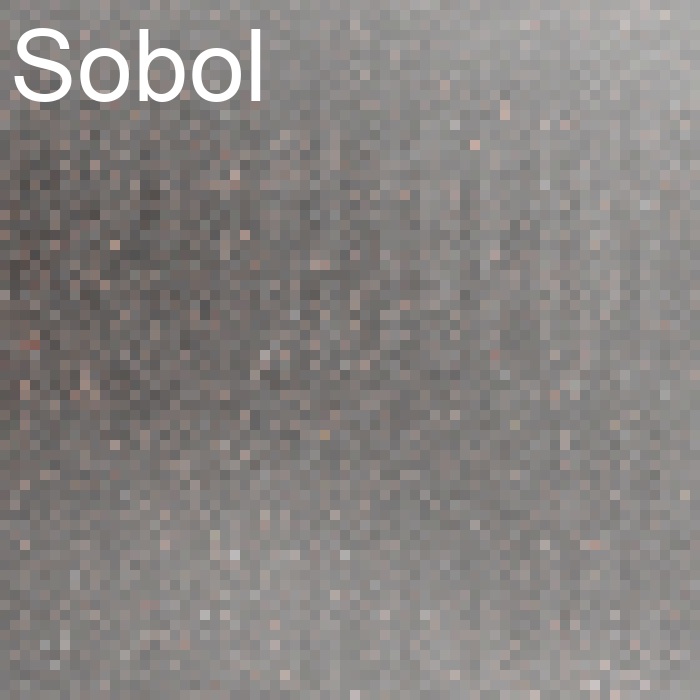} \\
\begin{turn}{90} 
256 spp
\end{turn} &
&
\includegraphics[width=\zoomscale\linewidth]{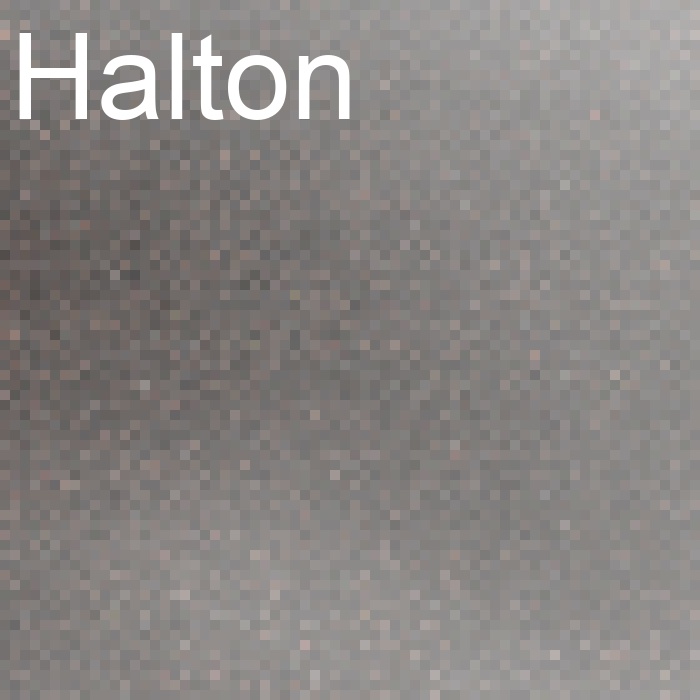}&
\includegraphics[width=\zoomscale\linewidth]{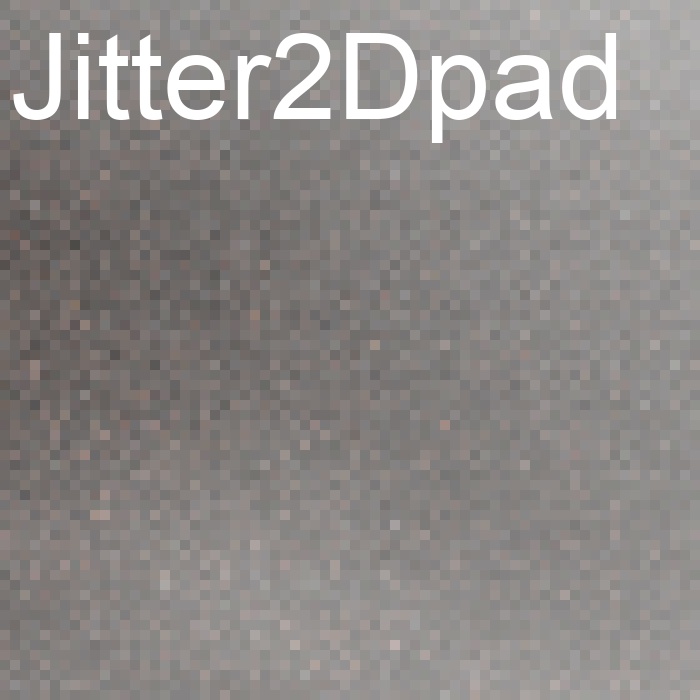}&
\includegraphics[width=\zoomscale\linewidth]{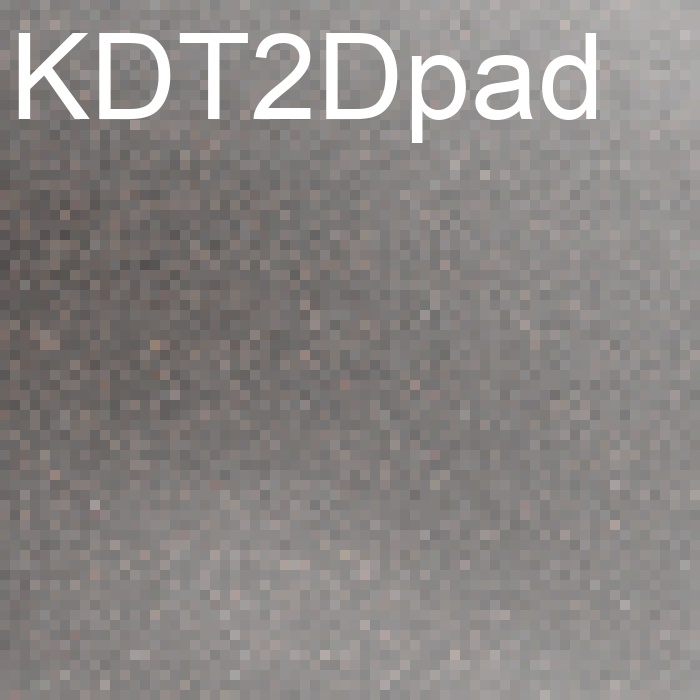}&
\includegraphics[width=\zoomscale\linewidth]{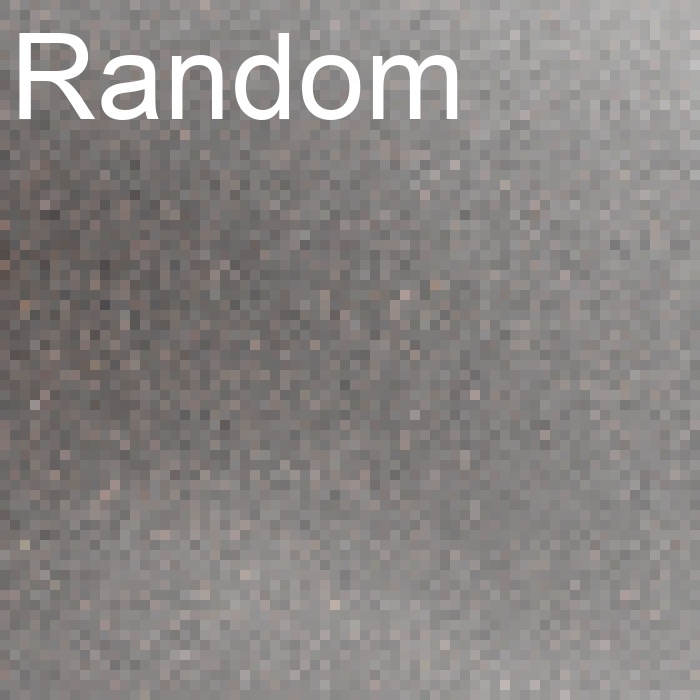}&
\includegraphics[width=\zoomscale\linewidth]{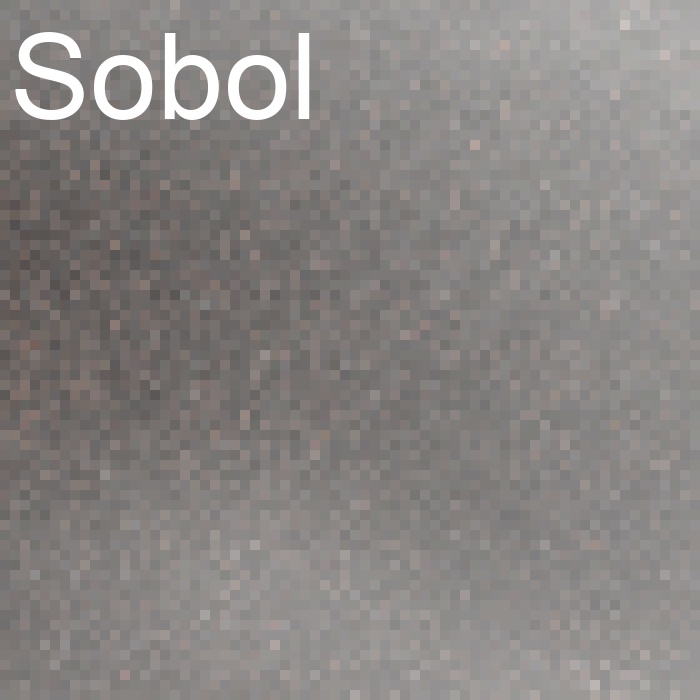} \\
\begin{turn}{90} 
128 spp
\end{turn} &
\includegraphics[width=\zoomscale\linewidth]{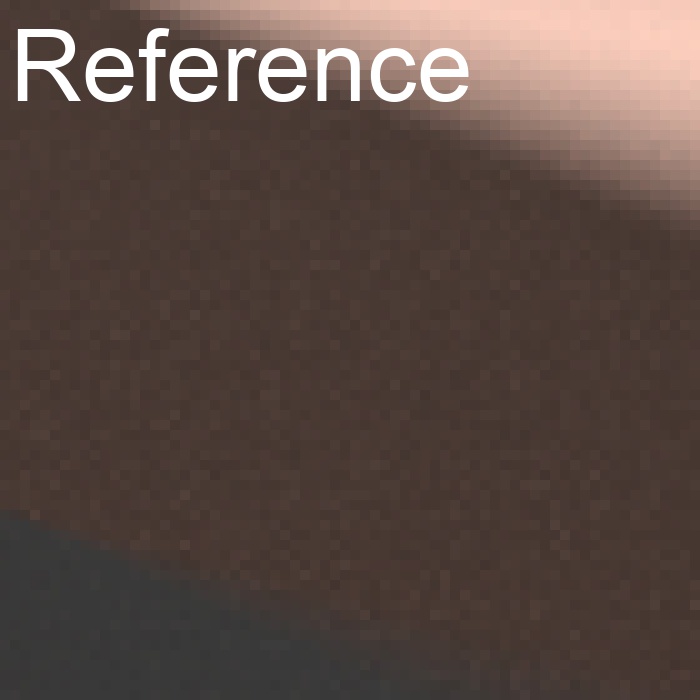}&
\includegraphics[width=\zoomscale\linewidth]{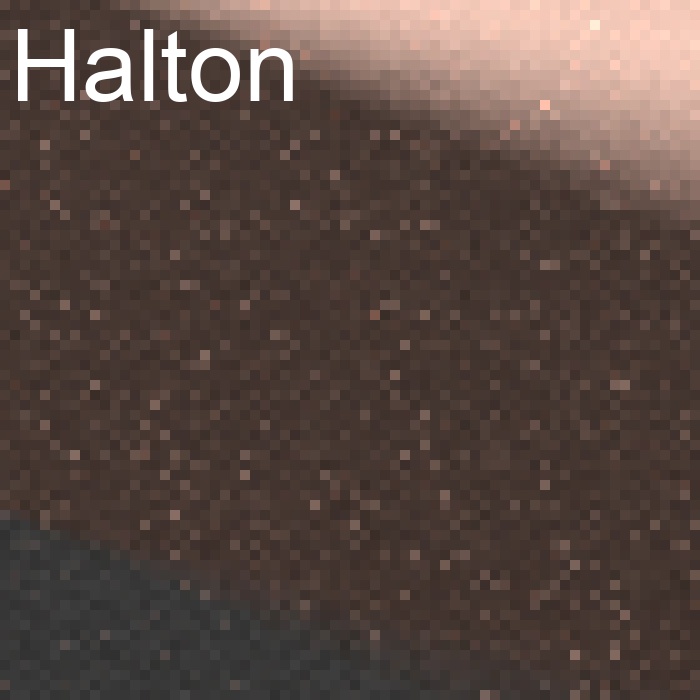}&
\includegraphics[width=\zoomscale\linewidth]{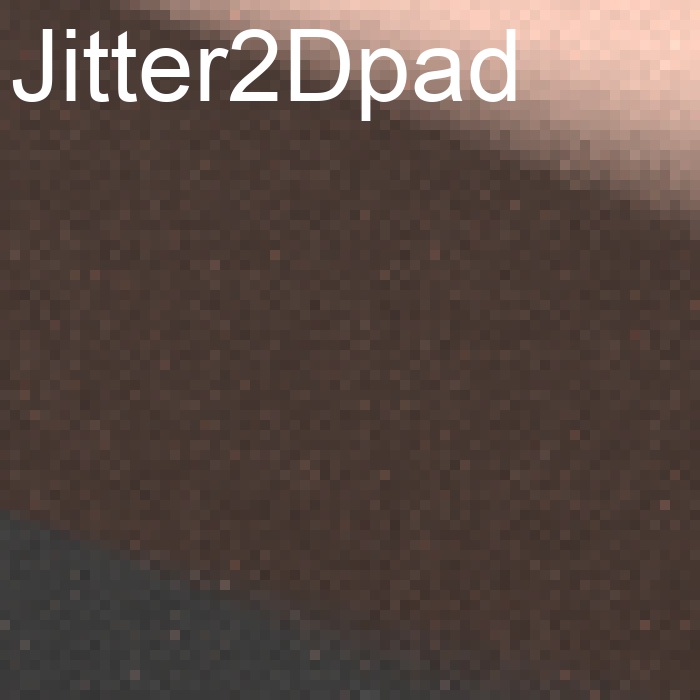}&
\includegraphics[width=\zoomscale\linewidth]{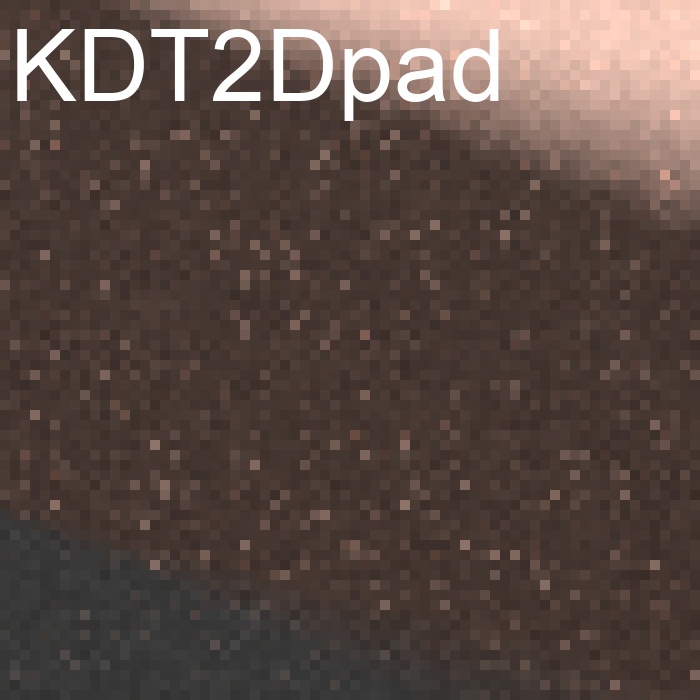}&
\includegraphics[width=\zoomscale\linewidth]{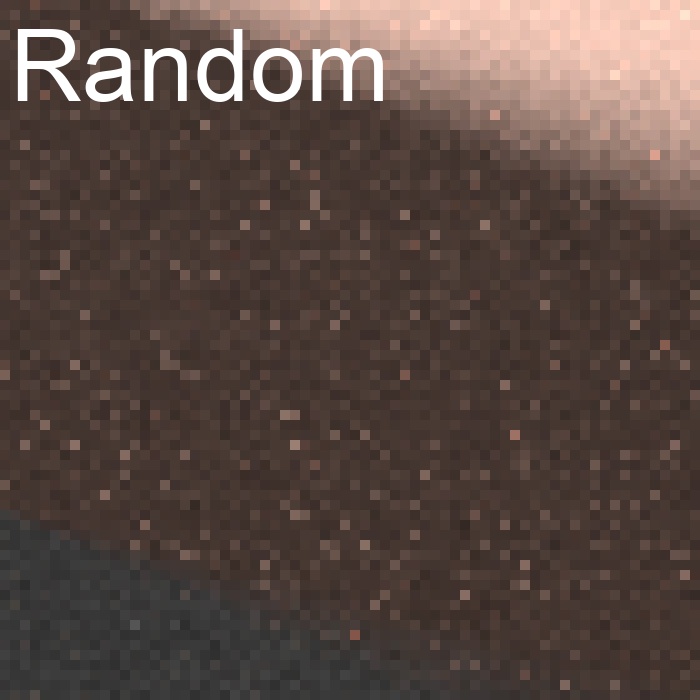}&
\includegraphics[width=\zoomscale\linewidth]{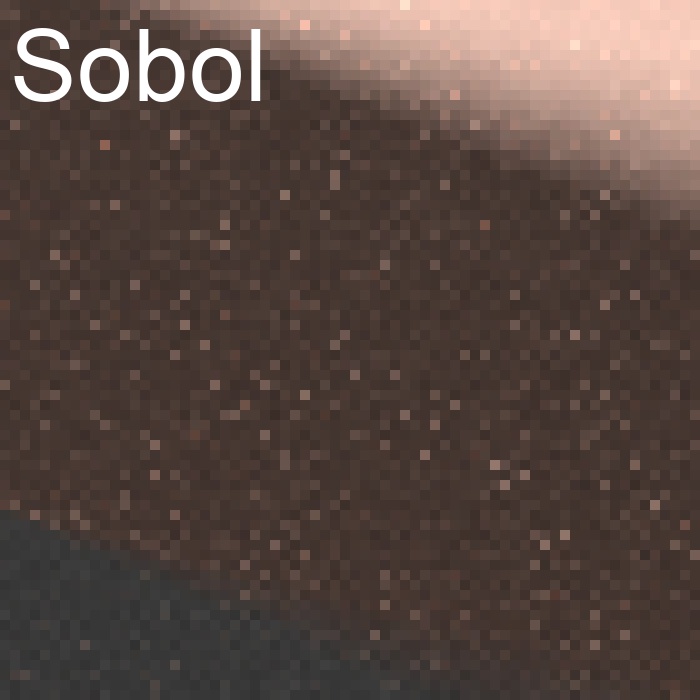} \\
\begin{turn}{90} 
256 spp
\end{turn} &
&
\includegraphics[width=\zoomscale\linewidth]{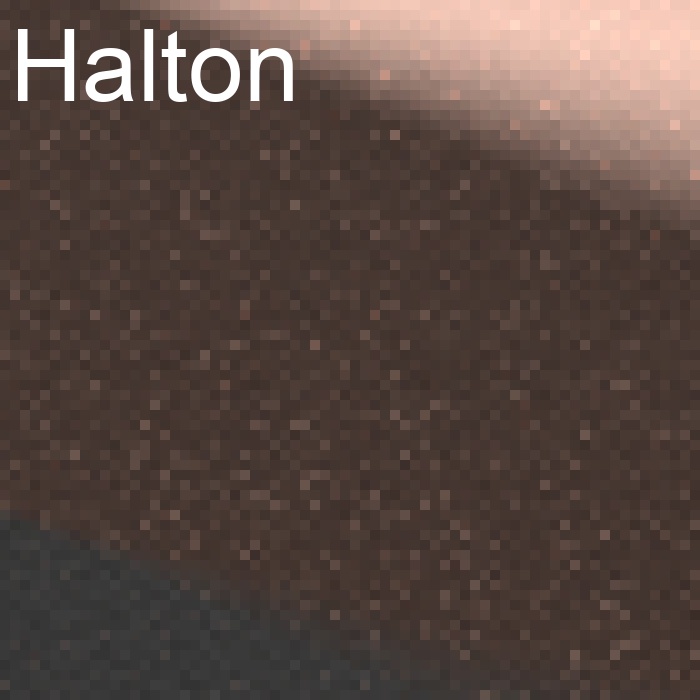}&
\includegraphics[width=\zoomscale\linewidth]{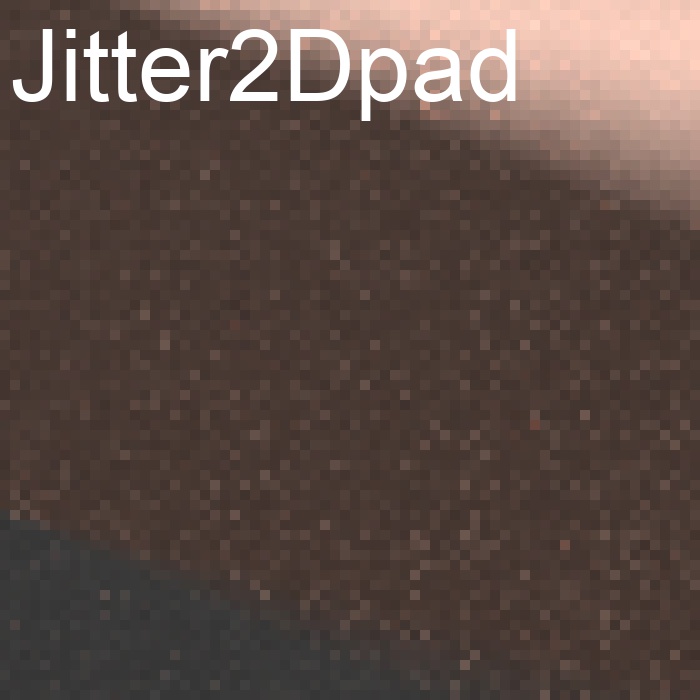}&
\includegraphics[width=\zoomscale\linewidth]{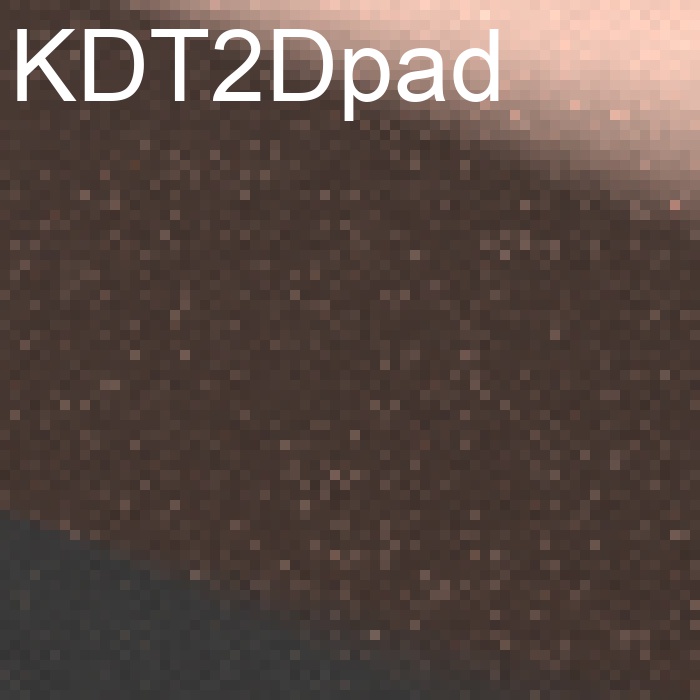}&
\includegraphics[width=\zoomscale\linewidth]{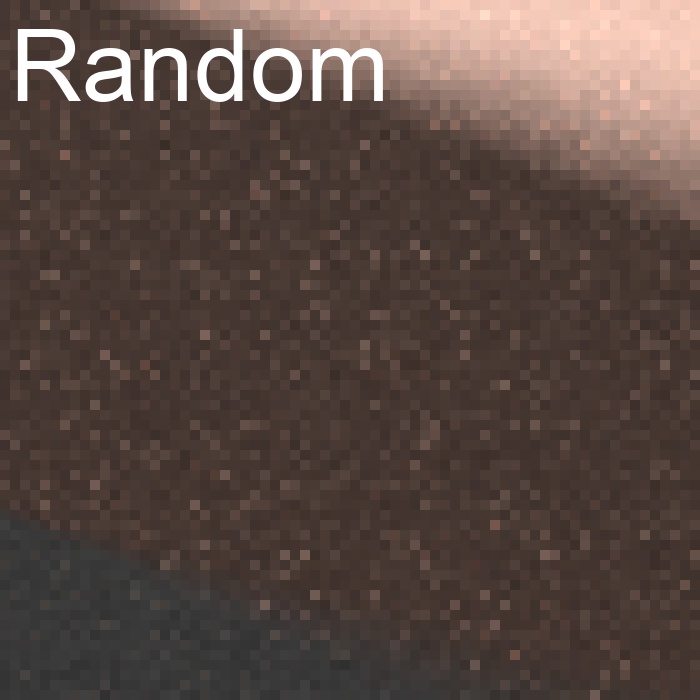}&
\includegraphics[width=\zoomscale\linewidth]{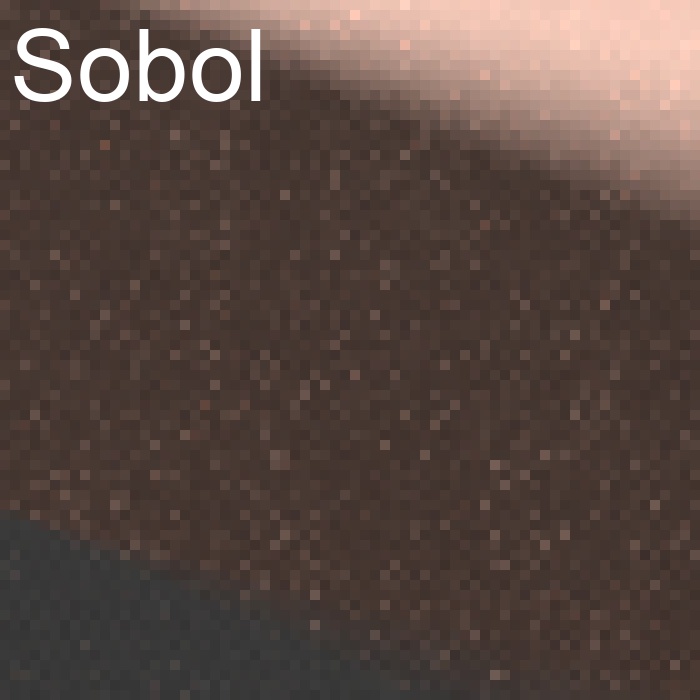} \\
\begin{turn}{90} 
128 spp
\end{turn} &
\includegraphics[width=\zoomscale\linewidth]{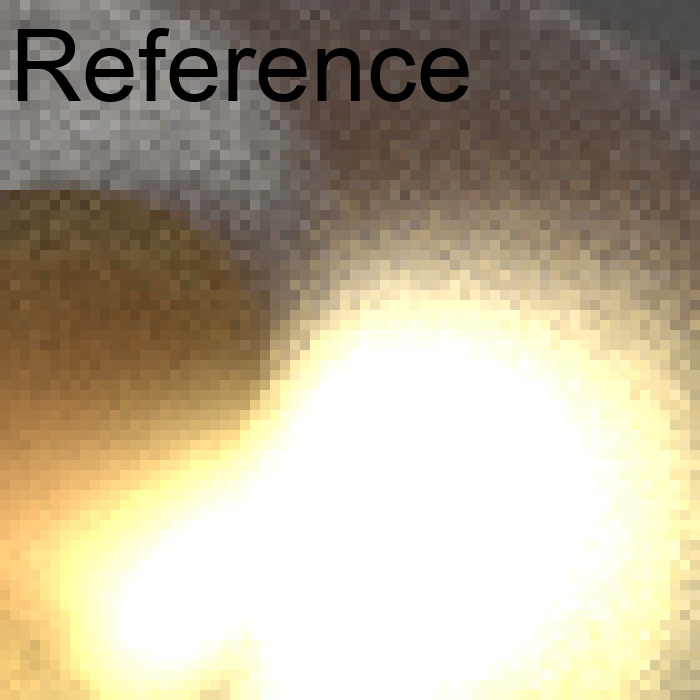}&
\includegraphics[width=\zoomscale\linewidth]{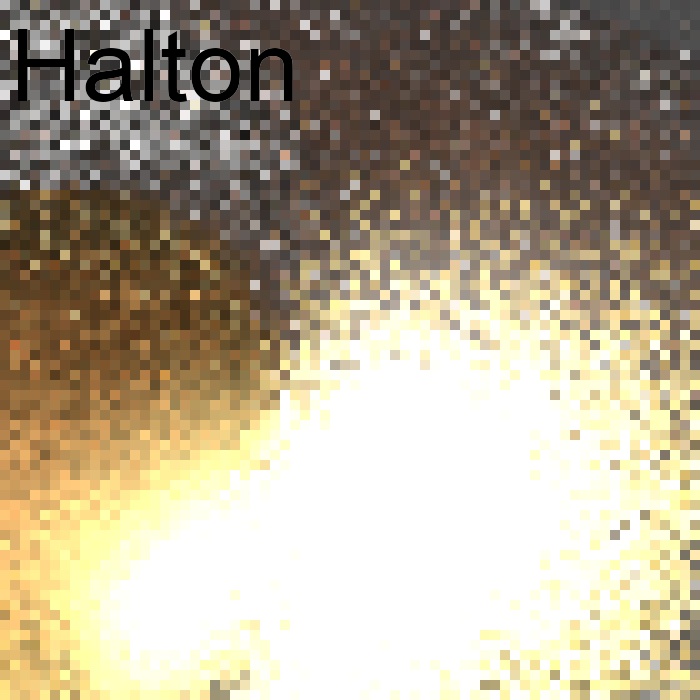}&
\includegraphics[width=\zoomscale\linewidth]{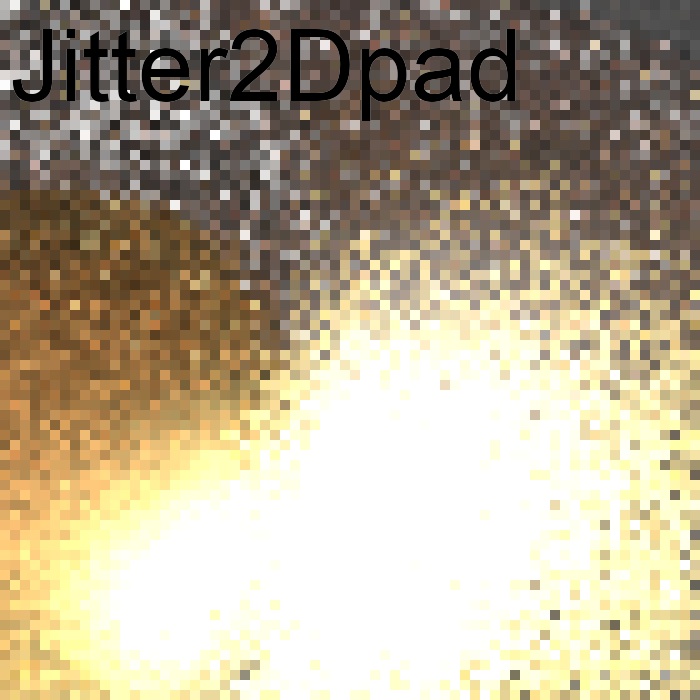}&
\includegraphics[width=\zoomscale\linewidth]{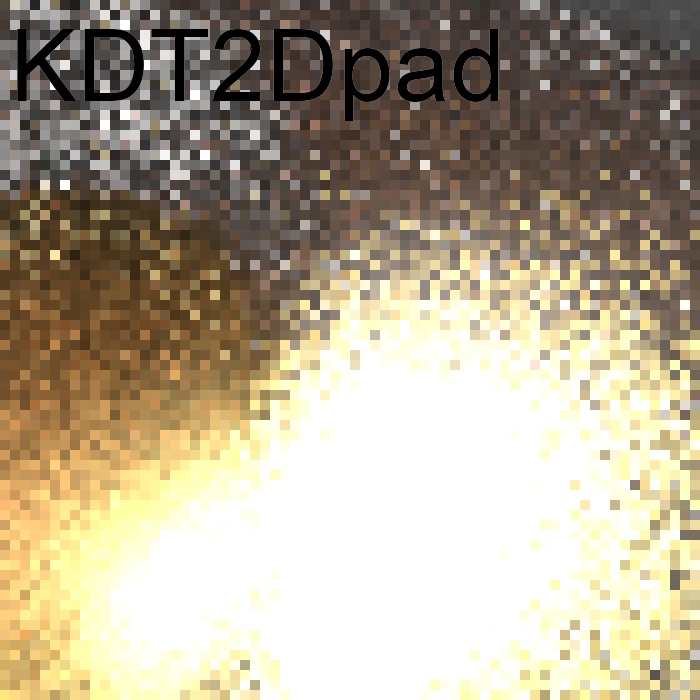}&
\includegraphics[width=\zoomscale\linewidth]{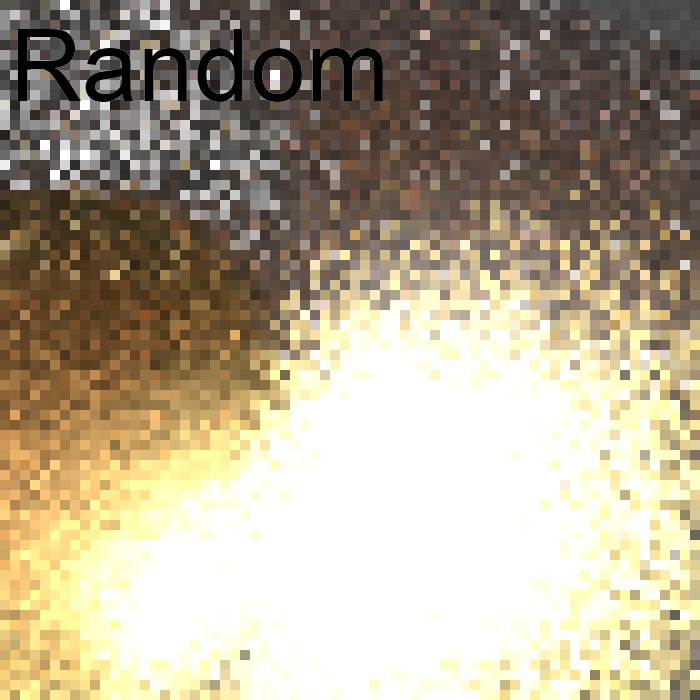}&
\includegraphics[width=\zoomscale\linewidth]{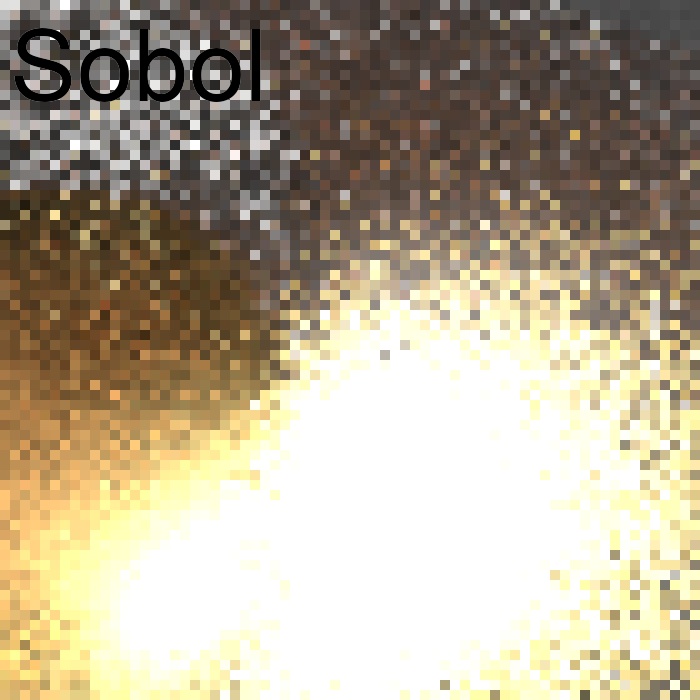} \\
\begin{turn}{90} 
256 spp
\end{turn} &
&
\includegraphics[width=\zoomscale\linewidth]{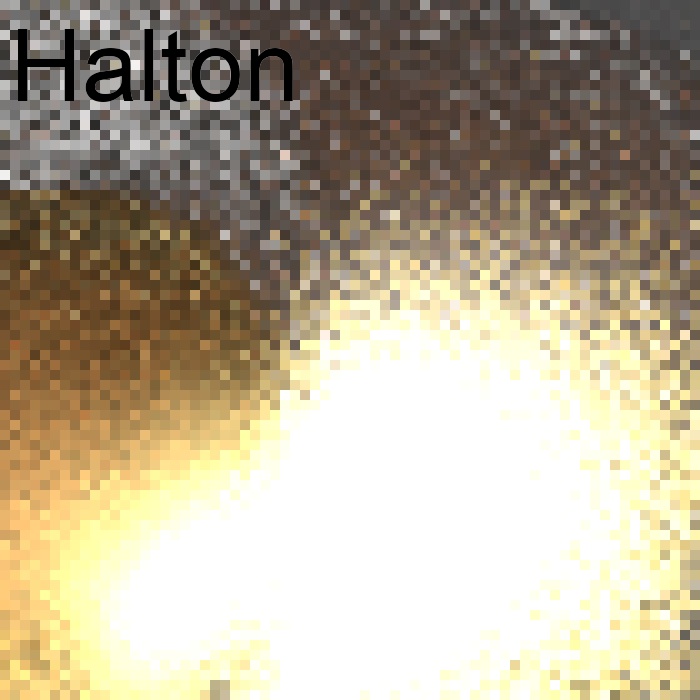}&
\includegraphics[width=\zoomscale\linewidth]{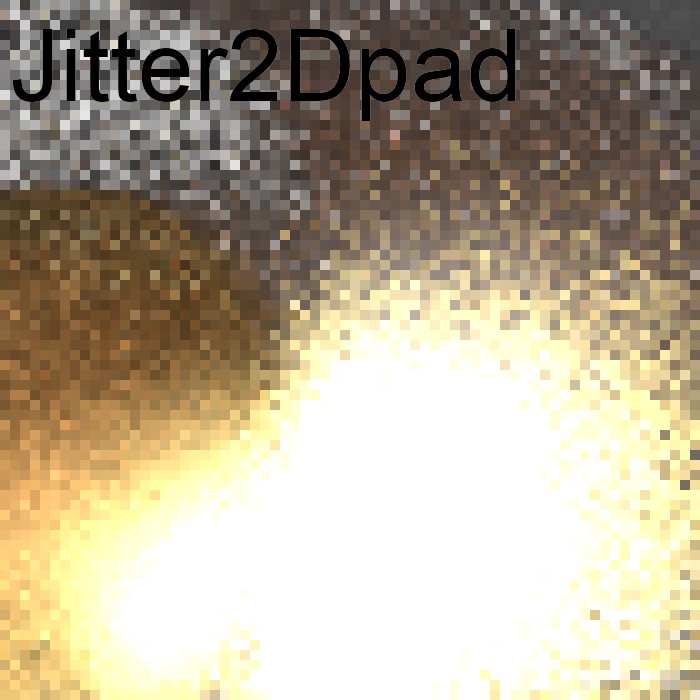}&
\includegraphics[width=\zoomscale\linewidth]{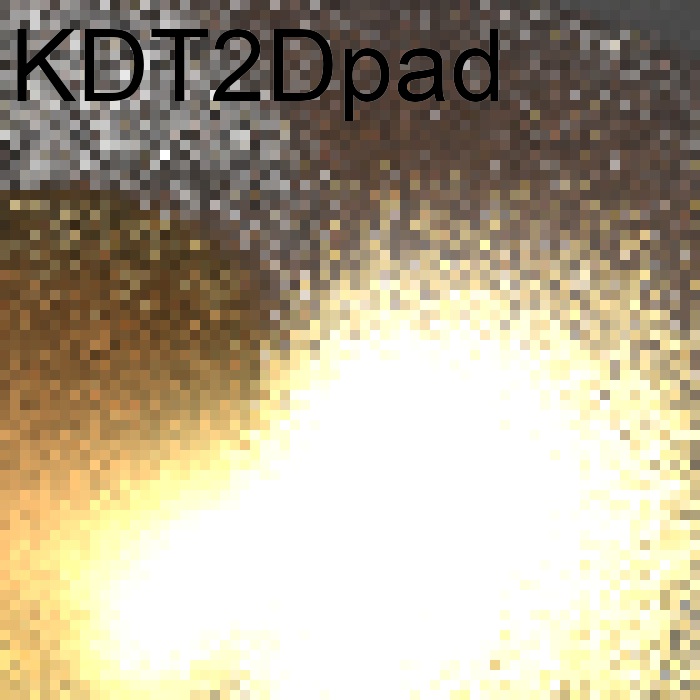}&
\includegraphics[width=\zoomscale\linewidth]{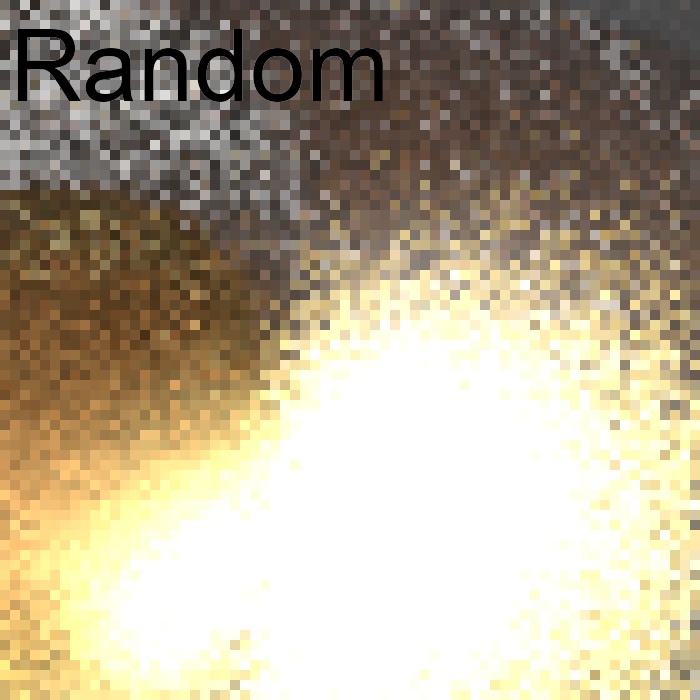}&
\includegraphics[width=\zoomscale\linewidth]{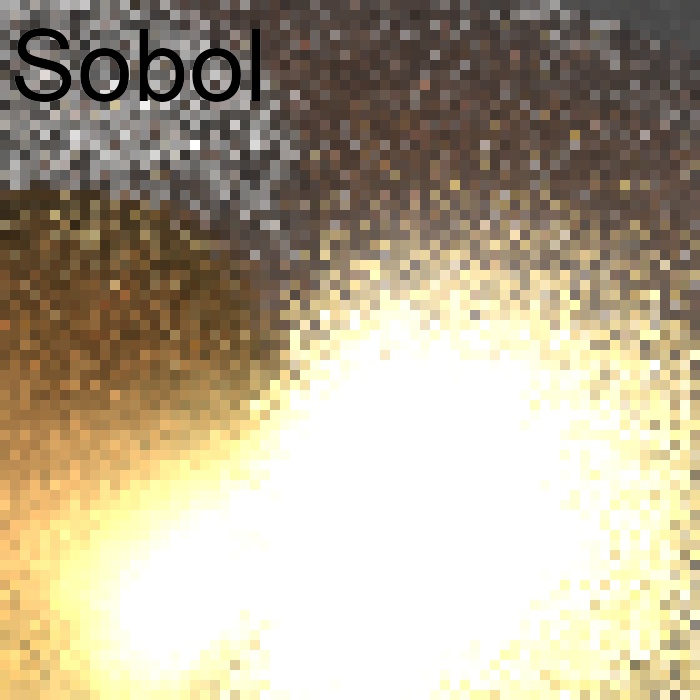} \\

\begin{turn}{90} 
128 spp
\end{turn} &
\includegraphics[width=\zoomscale\linewidth]{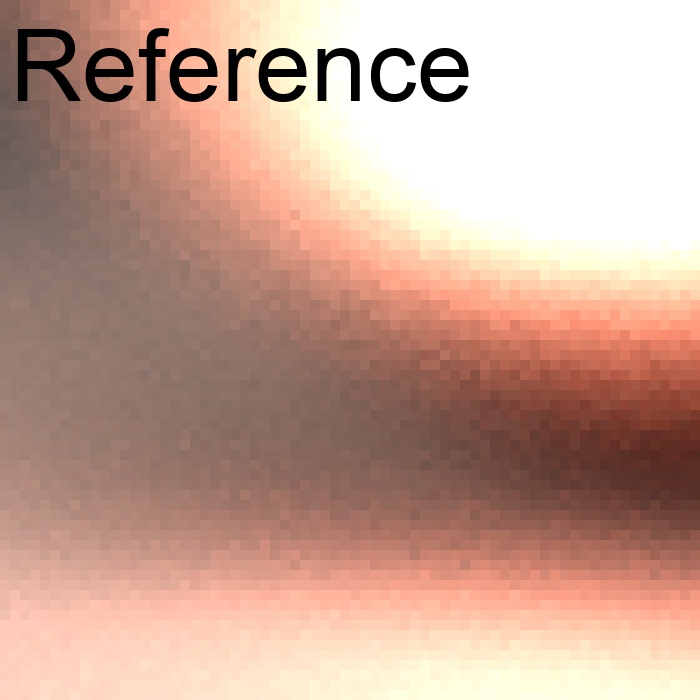}&
\includegraphics[width=\zoomscale\linewidth]{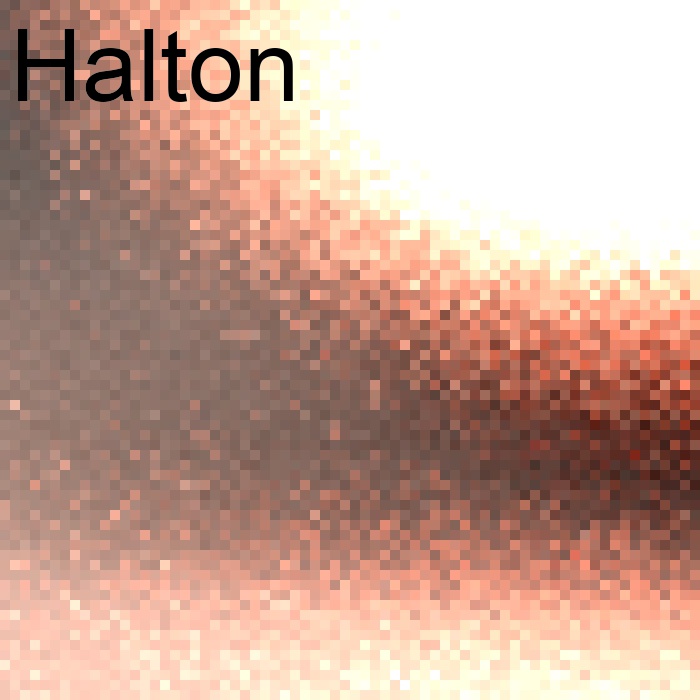}&
\includegraphics[width=\zoomscale\linewidth]{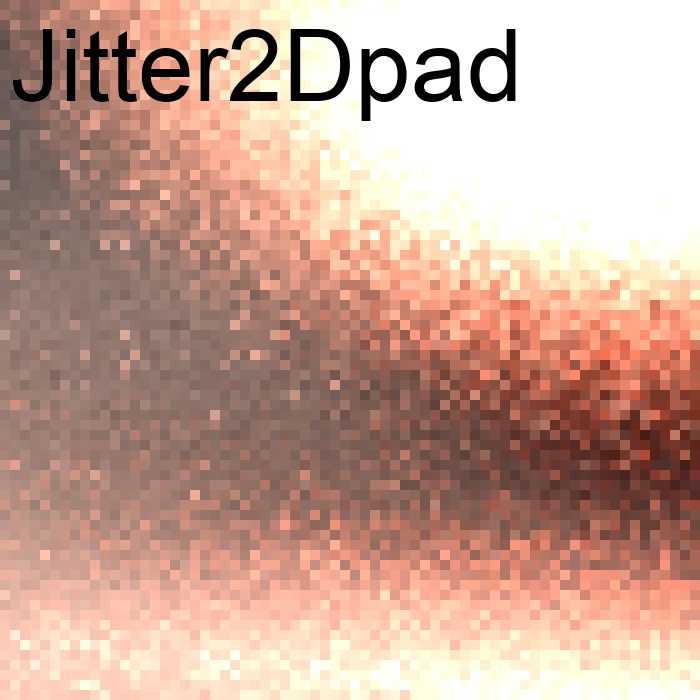}&
\includegraphics[width=\zoomscale\linewidth]{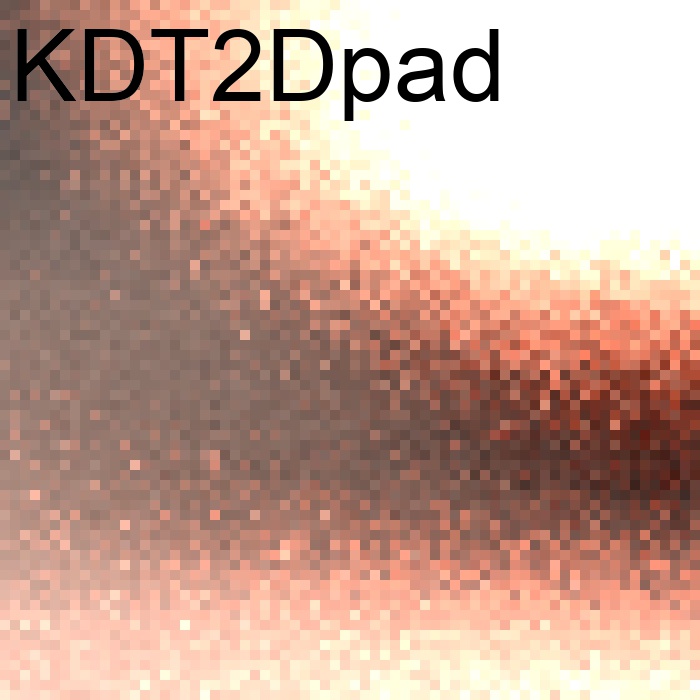}&
\includegraphics[width=\zoomscale\linewidth]{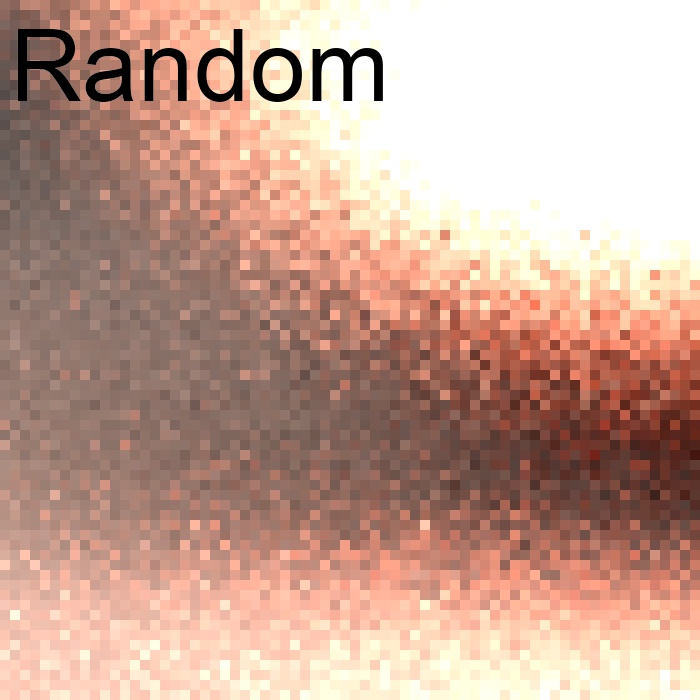}&
\includegraphics[width=\zoomscale\linewidth]{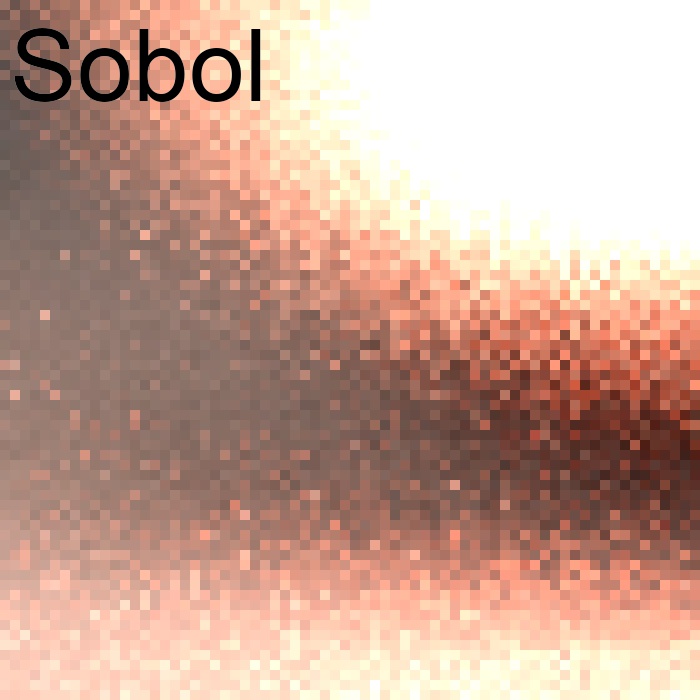} \\
\begin{turn}{90} 
256 spp
\end{turn} &
&
\includegraphics[width=\zoomscale\linewidth]{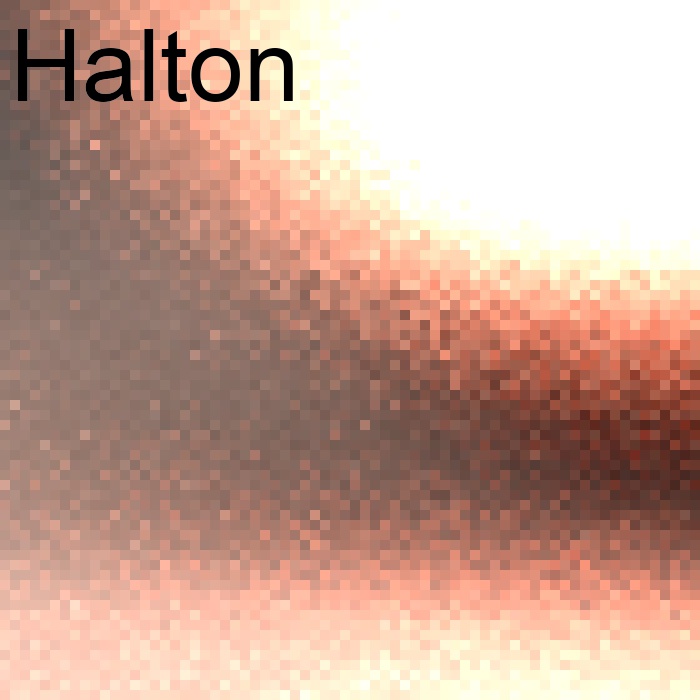}&
\includegraphics[width=\zoomscale\linewidth]{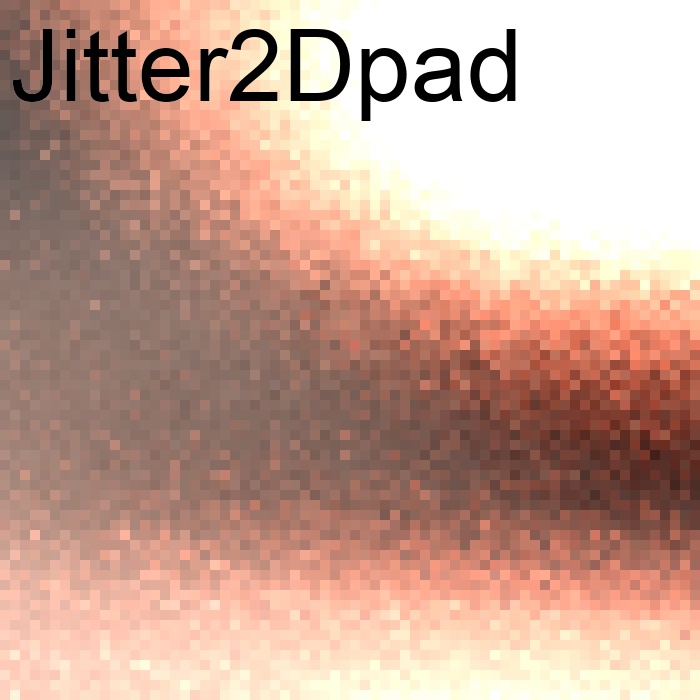}&
\includegraphics[width=\zoomscale\linewidth]{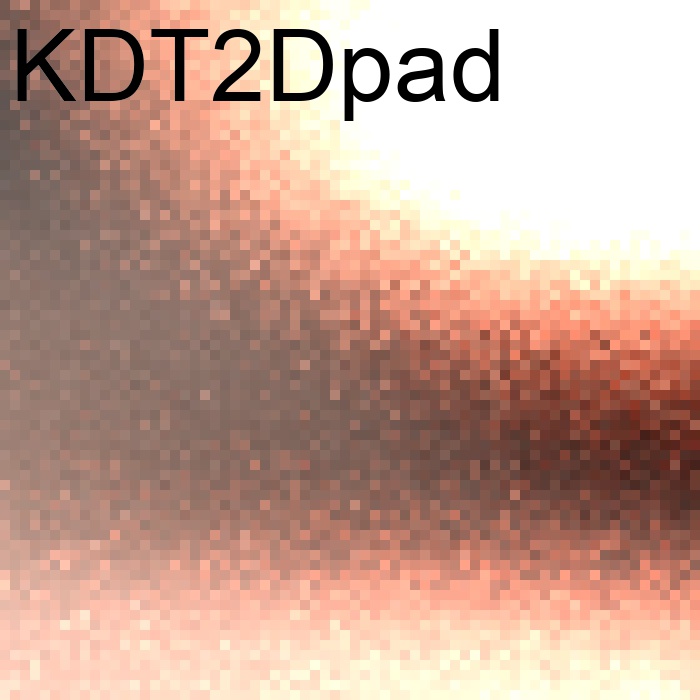}&
\includegraphics[width=\zoomscale\linewidth]{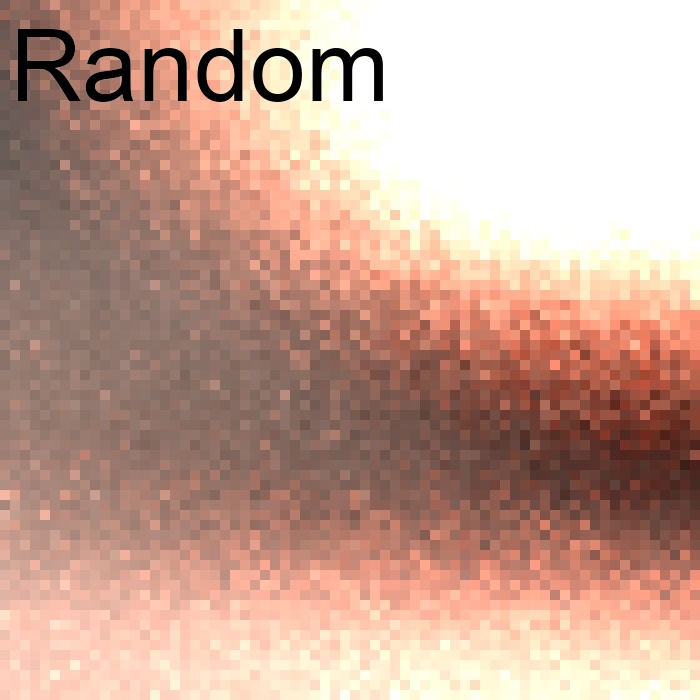}&
\includegraphics[width=\zoomscale\linewidth]{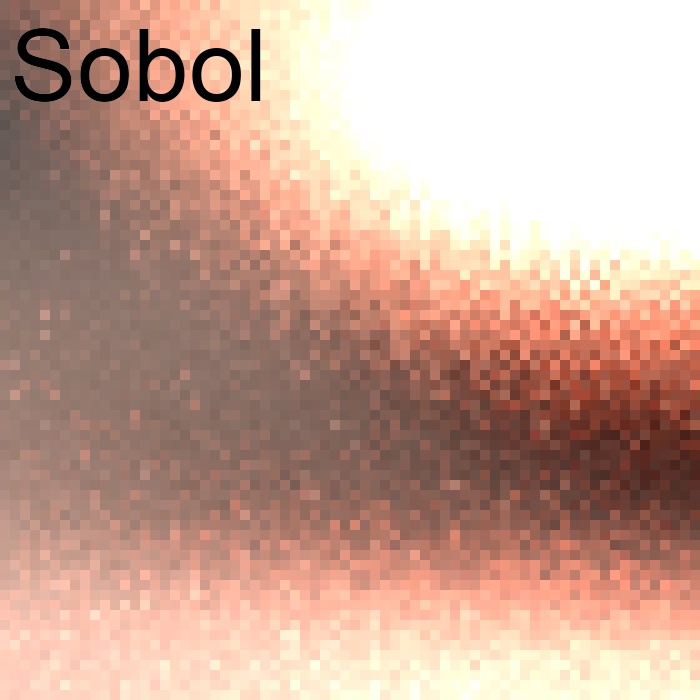} \\
\end{tabular}
\caption{\label{fig:orb-gloss-dof-insets} The figure shows several insets from the ORB-GLOSS-DOF scene shown in Figure~\ref{fig:orb-gloss-fov}. }
\end{figure*}


\subsection{Discussion}
\paragraph{No parameters to tune}
One advantage of our method over samplers such as Halton and Sobol is that it is parameter-free. The output of the sampler only depends on $n$ and $d$. We experimented with several variants such as replacing the binary tree structure with a ternary structure, combining the use of Halton sampling in image space with KDT in path space, etc. None of these variants provided a notable gain in performance.

\paragraph{Generalization of jittered sampling to arbitrary $n$}
Our method essentially provides a generalization of the lattice-based \emph{jittered} sampling strategy to arbitrary $n$, thus breaking the lattice symmetry and partly amending the curse of dimensionality from which the latter suffers. Nevertheless, as the domain dimensionality increases the sample count should proportionally increase as to avoid performance degradation of the method to that of random sampling for the latter $d-\log_2 n$ subspace projections, depending on the dimension partitioning order.

\paragraph{Limitation - requires large $n$}
Although our stratification does not impose constraints on the number of samples, the benefit due to the stratification vanishes as $n \ll d$.~e.g.~if $n=1$, our method is equivalent to random sampling regardless of $d$. For large $d$, unless a large number of samples are drawn, the strata induced by the kd-tree tend to be large thereby weakening the effect of stratification. The full power of our method is exploited when estimating discontinuous integrands in high dimensions with an enormous computational budget. However, the ``padded'' version of our sampler overcomes this limitation by sacrificing the ability to sample in high dimensional spaces. 

\paragraph{Sample sets, not sample sequences}
It should be noted that jittered kd-tree stratification results in \emph{sample sets}, not sample sequences, ostensibly deeming it inappropriate for adaptive sampling. Nevertheless, the recursive nature of our method enables extensions able to generate sample sequences, and allows for adaptive sampling capabilities with minor modifications to the original method, which are left for future work. Furthermore, adaptive dimension partitioning strategies based on our kd-tree method seem attractive for cases where the integrand is $k$-dimensional additive, $k<d$.

\paragraph{Theoretical upper bound}
Theorem~\ref{thm:discr} provides a worst-case upper bound for the star-discrepancy of our method. We conjecture that the expected star-discrepancy satisfies the same asymptotic bounds as jittered sampling~\cite{pausinger2016discrepancy}, as observed by the epirical discrepancy computation of Fig.~\ref{fig:emp_discr}.

\section{Conclusions \& future work}
We have presented a novel stratification method for sampling the $d$-dimensional hypercube, with a theoretical upper bound on its $L_{\infty}$ star-discrepancy. Our sampling algorithm is simple, parallelisable and we have presented comprehensive qualitative and qualitative comparisons to demonstrate that it performs comparably with state-of-the-art sampling methods on analytical tests as well as complex scenes. We believe that this work will inspire future work on how the kd-tree strata might be interleaved or how this scheme can be combined with existing algorithms (as in the case of multijitter).


\small
\bibliographystyle{jcgt}
\bibliography{AK_DD_KS_JITTERED_K-D_TREE_STRATIFICATION}

\begin{thebibliography}{\protect\citename{Keller }U.S. Patent US7453461B2, Nov.
  2008}

\bibitem[\protect\citename{Adams et~al\mbox{.} }2009]{adams2009gaussian}
{\sc Adams, A., Gelfand, N., Dolson, J., and Levoy, M.}
\newblock 2009.
\newblock Gaussian kd-trees for fast high-dimensional filtering.
\newblock In {\em ACM Transactions on Graphics (ToG)}, vol.~28, ACM, 21.

\bibitem[\protect\citename{Ahmed et~al\mbox{.} }2016]{Ahmed16Low}
{\sc Ahmed, A.~G., Perrier, H., Coeurjolly, D., Ostromoukhov, V., Guo, J., Yan,
  D.-M., HUANG, H., and Deussen, O.}
\newblock 2016.
\newblock Low-discrepancy blue noise sampling.
\newblock {\em {ACM} Transactions on Graphics (Proceedings of SIGGRAPH Asia)
  35}, 6.

\bibitem[\protect\citename{Aistleitner and Dick
  }2014]{aistleitner2014functions}
{\sc Aistleitner, C., and Dick, J.}
\newblock 2014.
\newblock Functions of bounded variation, signed measures, and a general
  {K}oksma-{H}lawka inequality.
\newblock {\em arXiv preprint arXiv:1406.0230\/}.

\bibitem[\protect\citename{Arvo }2001]{arvo2001stratified}
{\sc Arvo, J.}
\newblock 2001.
\newblock Stratified sampling of 2-manifolds.
\newblock In {\em ACM SIGGRAPH Courses}.

\bibitem[\protect\citename{Chiu et~al\mbox{.} }1994]{Chiu94Multi}
{\sc Chiu, K., Shirley, P., and Wang, C.}
\newblock 1994.
\newblock Graphics gems iv.
\newblock Academic Press Professional, Inc., San Diego, CA, USA,
  ch.~Multi-jittered Sampling, 370--374.

\bibitem[\protect\citename{Christensen et~al\mbox{.}
  }2016]{christensen2016path}
{\sc Christensen, P.~H., Jarosz, W., et~al.}
\newblock 2016.
\newblock The path to path-traced movies.
\newblock {\em Foundations and Trends{\textregistered} in Computer Graphics and
  Vision 10}, 2, 103--175.

\bibitem[\protect\citename{Christensen et~al\mbox{.}
  }2018]{christensen2018progressive}
{\sc Christensen, P., Kensler, A., and Kilpatrick, C.}
\newblock 2018.
\newblock Progressive multi-jittered sample sequences.
\newblock In {\em Computer Graphics Forum}, vol.~37, Wiley Online Library,
  21--33.

\bibitem[\protect\citename{Clarberg et~al\mbox{.} }2005]{clarberg2005wavelet}
{\sc Clarberg, P., Jarosz, W., Akenine-M{\"o}ller, T., and Jensen, H.~W.}
\newblock 2005.
\newblock Wavelet importance sampling: efficiently evaluating products of
  complex functions.
\newblock In {\em ACM Transactions on Graphics (TOG)}, vol.~24, ACM,
  1166--1175.

\bibitem[\protect\citename{Cochran }1977]{cochran1977sampling}
{\sc Cochran, W.~G.}
\newblock 1977.
\newblock Sampling techniques.

\bibitem[\protect\citename{Cook et~al\mbox{.}
  }1984]{Cook:1984:DRT:964965.808590}
{\sc Cook, R.~L., Porter, T., and Carpenter, L.}
\newblock 1984.
\newblock Distributed ray tracing.
\newblock {\em SIGGRAPH Comput. Graph. 18}, 3 (Jan.), 137--145.

\bibitem[\protect\citename{Dick and Pillichshammer }2010]{Dick10Digital}
{\sc Dick, J., and Pillichshammer, F.}
\newblock 2010.
\newblock {\em Digital Nets and Sequences: Discrepancy Theory and Quasi-{M}onte
  {C}arlo Integration}.
\newblock Cambridge University Press, New York, NY, USA.

\bibitem[\protect\citename{Doerr et~al\mbox{.} }2018]{doerr2018probabilistic}
{\sc Doerr, B., Doerr, C., and Gnewuch, M.}
\newblock 2018.
\newblock Probabilistic lower bounds for the discrepancy of latin hypercube
  samples.
\newblock In {\em Contemporary Computational Mathematics-A Celebration of the
  80th Birthday of Ian Sloan}. Springer, 339--350.

\bibitem[\protect\citename{Durand }2011]{Durand11Afrequency}
{\sc Durand, F.}
\newblock 2011.
\newblock A frequency analysis of monte-carlo and other numerical integration
  schemes.
\newblock Tech. Rep. MIT-CSAILTR-2011-052, CSAIL, MIT,, MA, February.

\bibitem[\protect\citename{Friedman et~al\mbox{.}
  }1977]{Friedman1977optimizedkdtree}
{\sc Friedman, J.~H., Bentley, J.~L., and Finkel, R.~A.}
\newblock 1977.
\newblock An algorithm for finding best matches in logarithmic expected time.
\newblock {\em ACM Trans. Math. Softw. 3}, 3 (Sept.), 209--226.

\bibitem[\protect\citename{Goswami et~al\mbox{.} }2013]{Goswami2013}
{\sc Goswami, P., Erol, F., Mukhi, R., Pajarola, R., and Gobbetti, E.}
\newblock 2013.
\newblock An efficient multi-resolution framework for high quality interactive
  rendering of massive point clouds using multi-way kd-trees.
\newblock {\em The Visual Computer 29}, 1 (Jan), 69–83.

\bibitem[\protect\citename{Haber }1967]{haber1967modified}
{\sc Haber, S.}
\newblock 1967.
\newblock A modified monte-carlo quadrature. ii.
\newblock {\em Mathematics of Computation 21}, 99, 388--397.

\bibitem[\protect\citename{Halton }1964]{halton1964algorithm}
{\sc Halton, J.~H.}
\newblock 1964.
\newblock Algorithm 247: Radical-inverse quasi-random point sequence.
\newblock {\em Communications of the ACM 7}, 12, 701--702.

\bibitem[\protect\citename{Ihler et~al\mbox{.}
  }2004]{ihler2004multiscalesampling}
{\sc Ihler, A.~T., Sudderth, E.~B., Freeman, W.~T., and Willsky, A.~S.}
\newblock 2004.
\newblock Efficient multiscale sampling from products of gaussian mixtures.
\newblock In {\em Advances in Neural Information Processing Systems}, 1--8.

\bibitem[\protect\citename{Jarosz et~al\mbox{.} }2019]{jarosz19orthogonal}
{\sc Jarosz, W., Enayet, A., Kensler, A., Kilpatrick, C., and Christensen, P.}
\newblock 2019.
\newblock Orthogonal array sampling for {{Monte}} {{Carlo}} rendering.
\newblock {\em Computer Graphics Forum (Proceedings of EGSR) 38}, 4 (July),
  135--147.

\bibitem[\protect\citename{Keller et~al\mbox{.} }2012]{Keller12Advanced}
{\sc Keller, A., Premoze, S., and Raab, M.}
\newblock 2012.
\newblock Advanced (quasi) {M}onte {C}arlo methods for image synthesis.
\newblock In {\em ACM SIGGRAPH 2012 Courses}, ACM, New York, NY, USA, SIGGRAPH
  '12, 21:1--21:46.

\bibitem[\protect\citename{Keller }1995]{keller1995quasi}
{\sc Keller, A.}
\newblock 1995.
\newblock A quasi-{M}onte {C}arlo algorithm for the global illumination problem
  in the radiosity setting.
\newblock In {\em {M}onte {C}arlo and Quasi-{M}onte {C}arlo Methods in
  Scientific Computing}. Springer, 239--251.

\bibitem[\protect\citename{Keller }U.S. Patent US7453461B2, Nov.
  2008]{kellerpatent}
{\sc Keller, A.}, U.S. Patent US7453461B2, Nov. 2008.
\newblock Image generation using low-discrepancy sequences.

\bibitem[\protect\citename{Koksma }1942]{koksma1942een}
{\sc Koksma, J.}
\newblock 1942.
\newblock Een algemeene stelling uit de theorie der gelijkmatige verdeeling
  modulo 1.
\newblock {\em Mathematica B (Zutphen) 11}, 7-11, 43.

\bibitem[\protect\citename{Kopf et~al\mbox{.} }2006]{kopf2006recursive}
{\sc Kopf, J., Cohen-Or, D., Deussen, O., and Lischinski, D.}
\newblock 2006.
\newblock {\em Recursive Wang tiles for real-time blue noise}, vol.~25.
\newblock ACM.

\bibitem[\protect\citename{Kuipers and Niederreiter }1975]{kuipers1975uniform}
{\sc Kuipers, L., and Niederreiter, H.}
\newblock 1975.
\newblock Uniform distribution of sequences.
\newblock {\em Bull. Amer. Math. Soc 81\/}, 672--675.

\bibitem[\protect\citename{Mantiuk et~al\mbox{.} }2007]{mantiuk2007high}
{\sc Mantiuk, R., Krawczyk, G., Mantiuk, R., and Seidel, H.-P.}
\newblock 2007.
\newblock High-dynamic range imaging pipeline: perception-motivated
  representation of visual content.
\newblock In {\em Human Vision and Electronic Imaging XII}, vol.~6492,
  International Society for Optics and Photonics, 649212.

\bibitem[\protect\citename{Matousek }2009]{matousek2009geometric}
{\sc Matousek, J.}
\newblock 2009.
\newblock {\em Geometric discrepancy: An illustrated guide}, vol.~18.
\newblock Springer Science \& Business Media.

\bibitem[\protect\citename{McCool and Harwood
  }1997]{McCool1997probabilitykdtrees}
{\sc McCool, M.~D., and Harwood, P.~K.}
\newblock 1997.
\newblock Probability trees.
\newblock In {\em IN GRAPHICS INTERFACE ’97}, 37--46.

\bibitem[\protect\citename{McKay et~al\mbox{.} }1979]{mckay1979comparison}
{\sc McKay, M.~D., Beckman, R.~J., and Conover, W.~J.}
\newblock 1979.
\newblock Comparison of three methods for selecting values of input variables
  in the analysis of output from a computer code.
\newblock {\em Technometrics 21}, 2, 239--245.

\bibitem[\protect\citename{Niederreiter }1987]{niederreiter1987point}
{\sc Niederreiter, H.}
\newblock 1987.
\newblock Point sets and sequences with small discrepancy.
\newblock {\em Monatshefte f{\"u}r Mathematik 104}, 4, 273--337.

\bibitem[\protect\citename{Niederreiter }1992a]{Niederreiter92Quasi}
{\sc Niederreiter, H.}
\newblock 1992.
\newblock {\em Quasi-{M}onte {C}arlo Methods}.
\newblock John Wiley \& Sons, Ltd.

\bibitem[\protect\citename{Niederreiter }1992b]{niederreiter1992random}
{\sc Niederreiter, H.}
\newblock 1992.
\newblock {\em Random number generation and quasi-{M}onte {C}arlo methods},
  vol.~63.
\newblock Siam.

\bibitem[\protect\citename{Ostromoukhov et~al\mbox{.}
  }2004]{ostromoukhov2004fast}
{\sc Ostromoukhov, V., Donohue, C., and Jodoin, P.-M.}
\newblock 2004.
\newblock Fast hierarchical importance sampling with blue noise properties.
\newblock In {\em ACM Transactions on Graphics (TOG)}, vol.~23, ACM, 488--495.

\bibitem[\protect\citename{Owen }1995]{randomlyPermuterOwen2011}
{\sc Owen, A.~B.}
\newblock 1995.
\newblock Randomly permuted (t,m,s)-nets and (t, s)-sequences.

\bibitem[\protect\citename{Owen }2013]{owenmcbook}
{\sc Owen, A.~B.}
\newblock 2013.
\newblock {\em {M}onte {C}arlo theory, methods and examples}.

\bibitem[\protect\citename{Painter and Sloan
  }1989]{Painter1989antialisedraytracing}
{\sc Painter, J., and Sloan, K.}
\newblock 1989.
\newblock Antialiased ray tracing by adaptive progressive refinement.
\newblock In {\em Proceedings of the 16th Annual Conference on Computer
  Graphics and Interactive Techniques}, ACM, New York, NY, USA, SIGGRAPH '89,
  281--288.

\bibitem[\protect\citename{Pausinger and Steinerberger
  }2016]{pausinger2016discrepancy}
{\sc Pausinger, F., and Steinerberger, S.}
\newblock 2016.
\newblock On the discrepancy of jittered sampling.
\newblock {\em Journal of Complexity 33\/}, 199--216.

\bibitem[\protect\citename{Pharr and Humphreys }2010]{Pharr2010PBRT}
{\sc Pharr, M., and Humphreys, G.}
\newblock 2010.
\newblock {\em Physically Based Rendering, Second Edition: From Theory To
  Implementation}, 2nd~ed.
\newblock Morgan Kaufmann Publishers Inc., San Francisco, CA, USA.

\bibitem[\protect\citename{Pilleboue et~al\mbox{.}
  }2015]{pilleboue2015variance}
{\sc Pilleboue, A., Singh, G., Coeurjolly, D., Kazhdan, M., and Ostromoukhov,
  V.}
\newblock 2015.
\newblock Variance analysis for {M}onte {C}arlo integration.
\newblock {\em ACM Transactions on Graphics (TOG) 34}, 4, 124.

\bibitem[\protect\citename{Shirley }1991]{shirley91Discrepancy}
{\sc Shirley, P.}
\newblock 1991.
\newblock Discrepancy as a quality measure for sample distributions.
\newblock In {\em In Eurographics '91}, Elsevier Science Publishers, 183--194.

\bibitem[\protect\citename{Sobol' }1967]{sobol1967}
{\sc Sobol', I.}
\newblock 1967.
\newblock On the distribution of points in a cube and the approximate
  evaluation of integrals.
\newblock {\em USSR Computational Mathematics and Mathematical Physics 7}, 4,
  86 -- 112.
\newblock URL:
  \url{http://www.sciencedirect.com/science/article/pii/0041555367901449},
  doi:https://doi.org/10.1016/0041-5553(67)90144-9.

\bibitem[\protect\citename{Subr et~al\mbox{.} }2016]{Subr16Fourier}
{\sc Subr, K., Singh, G., and Jarosz, W.}
\newblock 2016.
\newblock Fourier analysis of numerical integration in {M}onte {C}arlo
  rendering: Theory and practice.
\newblock In {\em ACM SIGGRAPH Courses}, ACM, New York, NY, USA.

\bibitem[\protect\citename{Tang }1993]{tang1993orthogonal}
{\sc Tang, B.}
\newblock 1993.
\newblock Orthogonal array-based latin hypercubes.
\newblock {\em Journal of the American statistical association 88}, 424,
  1392--1397.

\bibitem[\protect\citename{van~der Corput }1936]{van1936verteilungsfunktionen}
{\sc van~der Corput, J.}
\newblock 1936.
\newblock {\em Verteilungsfunktionen: Mitteilg 5}.
\newblock N. V. Noord-Hollandsche Uitgevers Maatschappij.
\newblock URL: \url{https://books.google.be/books?id=DpODswEACAAJ}.

\bibitem[\protect\citename{Veach }1998]{Veach1998mclighttransport}
{\sc Veach, E.}
\newblock 1998.
\newblock {\em Robust {M}onte {C}arlo Methods for Light Transport Simulation}.
\newblock PhD thesis, Stanford, CA, USA.
\newblock AAI9837162.

\bibitem[\protect\citename{Wald and Havran }2006]{wald2006building}
{\sc Wald, I., and Havran, V.}
\newblock 2006.
\newblock On building fast kd-trees for ray tracing, and on doing that in {O}
  ({N} log {N}).
\newblock In {\em 2006 IEEE Symposium on Interactive Ray Tracing}, IEEE,
  61--69.

\bibitem[\protect\citename{Warnock }1973]{warnock1972computational}
{\sc Warnock, T.~T.}
\newblock 1973.
\newblock {\em Computational Investigations of Low-Discrepancy Point-Sets.}
\newblock PhD thesis.
\newblock AAI7310747.

\bibitem[\protect\citename{Yue et~al\mbox{.}
  }2010]{yue2010unbiasedsamplingrendering}
{\sc Yue, Y., Iwasaki, K., Chen, B.-Y., Dobashi, Y., and Nishita, T.}
\newblock 2010.
\newblock Unbiased, adaptive stochastic sampling for rendering inhomogeneous
  participating media.
\newblock In {\em ACM Transactions on Graphics (TOG)}, vol.~29, ACM, 177.

\bibitem[\protect\citename{Zaremba }1968]{zaremba1968some}
{\sc Zaremba, S.}
\newblock 1968.
\newblock Some applications of multidimensional integration by parts.
\newblock In {\em Annales Polonici Mathematici}, vol.~1, 85--96.

\end{thebibliography}


%

\section*{Author Contact Information}

\hspace{-2mm}\begin{tabular}{p{0.5\textwidth}p{0.5\textwidth}}
Alexandros D. Keros \newline
University of Edinburgh \newline
Informatics Forum, 10 Crichton St. \newline
Ediburgh, EH8 9AB, UK \newline
\href{mailto:a.d.keros@sms.ed.ac.uk}{a.d.keros@sms.ed.ac.uk}\newline
&
Divakaran Divakaran \newline
University of Edinburgh \newline
Informatics Forum, 10 Crichton St. \newline
Ediburgh, EH8 9AB, UK \newline
\href{mailto:ddivakar@staffmail.ed.ac.uk}{ddivakar@staffmail.ed.ac.uk}\newline
\\
Kartic Subr \newline
University of Edinburgh \newline
Informatics Forum, 10 Crichton St. \newline
Ediburgh, EH8 9AB, UK \newline
\href{mailto:K.Subr@ed.ac.uk}{K.Subr@ed.ac.uk}\newline
\end{tabular}


\end{document}